\setlist[enumerate]{label*=\arabic*.}
\theoremstyle{definition}
\newtheorem{conjecture}[theorem]{Conjecture}
\newcommand{\Patrascu}{P\v{a}tra\c{s}cu}
\newcommand{\modulo}{\operatorname{mod}{}}
\author[1]{Isaac Goldstein\thanks{This research is supported by the Adams Foundation of the Israel Academy of Sciences and Humanities}}
\author[1]{Moshe Lewenstein\thanks{This work was partially supported by ISF grant \#1278/16}}
\author[1]{Ely Porat \thanks{This work was partially supported by ISF grant \#1278/16 and ERC grant MPM - 683064.}}
\affil[1]{Bar-Ilan University, Ramat Gan, Israel \\  \texttt{\{goldshi,moshe,porately\}@cs.biu.ac.il}}
\authorrunning{I. Goldstein, M. Lewenstein and E. Porat}
\subjclass{F.2 ANALYSIS OF ALGORITHMS AND PROBLEM COMPLEXITY}
\keywords{set disjointness, set intersection, 3SUM, space-time tradeoff, conditional lower bounds}
\begin{document}

\title{On the Hardness of Set Disjointness and Set Intersection with Bounded Universe}
\date{}

\maketitle

\thispagestyle{empty}

\begin{abstract}
In the SetDisjointness problem, a collection of $m$ sets $S_1,S_2,...,S_m$ from some universe $U$ is preprocessed in order to answer queries on the emptiness of the intersection of some two query sets from the collection. In the SetIntersection variant, all the elements in the intersection of the query sets are required to be reported. These are two fundamental problems that were considered in several papers from both the upper bound and lower bound perspective.

Several conditional lower bounds for these problems were proven for the tradeoff between preprocessing and query time or the tradeoff between space and query time. Moreover, there are several unconditional hardness results for these problems in some specific computational models. The fundamental nature of the SetDisjointness and SetIntersection problems makes them useful for proving the conditional hardness of other problems from various areas. However, the universe of the elements in the sets may be very large, which may cause the reduction to some other problems to be inefficient and therefore it is not useful for proving their conditional hardness.

In this paper, we prove the conditional hardness of SetDisjointness and SetIntersection with bounded universe. This conditional hardness is shown for both the interplay between preprocessing and query time and the interplay between space and query time. Moreover, we present several applications of these new conditional lower bounds. These applications demonstrates the strength of our new conditional lower bounds as they exploit the limited universe size. We believe that this new framework of conditional lower bounds with bounded universe can be useful for further significant applications.
\end{abstract}

\newpage

\section{Introduction}
The emerging field of fine-grained complexity receives much attention in the last years. One of the most notable pillars of this field is the celebrated 3SUM conjecture. In the 3SUM problem, given a set of $n$ numbers we are required to decide if there are 3 numbers in this set that sum up to zero. It is conjectured that no truly subquadratic solution to this problem exists. This conjecture was extensively used to prove the conditional hardness of other problems in a variety of research areas, see e.g.~\cite{ACLL14,AKLPPS16,AL13,AW14,AWW14,AWY15,BHP99,GKLP16,GO95,KPP16,Patrascu10,WW13}. The 3SUM problem is closely related to the fundamental SetDisjointness problem. In the SetDisjointness problem we are given $m$ sets $S_1,S_2,...,S_m$ from some universe $U$ for preprocessing. After the preprocessing phase, given a query pair of indices $(i,j)$ we are required to decide if the intersection $S_i \cap S_j$ is empty or not. In the SetIntersection variant, all the elements within the intersection $S_i \cap S_j$ are required to be reported.

Cohen and Porat~\cite{CP10} investigated the upper bound of both problems. Specifically, they showed that SetDisjointness can be solved almost trivially in linear space and $O(\sqrt{N})$ query time, where $N$ is the total number of elements in all sets. This solution can be generalized to a full tradeoff between the space $S$ and the query time $T$ such that $S \cdot T^2 = O(N^2)$. For the SetIntersection problem, Cohen and Porat demonstrated a linear space solution with $O(N\sqrt{N})$ preprocessing time and $O(\sqrt{N}\sqrt{out}+out)$ query time, where $out$ is the output size. This was further generalized by Cohen~\cite{Cohen10} to a solution that uses $O(N^{2-2t})$ space with $O(N^{2-t})$ preprocessing time and $O(N^tout^{1-t}+out)$ query time for $0 \leq t \leq 1/2$.

From the lower bound prespective, \Patrascu{}~\cite{Patrascu10} proved the conditional \emph{time} hardness of the multiphase problem, which is a dynamic version of the SetDisjointness problem, based on the 3SUM conjecture. He also proved a connection between 3SUM and reporting triangles in a graph which is closely related to the SetIntersection problem. His conditional hardness results were improved by Kopelowitz et al.~\cite{KPP16} that considered the \emph{preprocessing and query time} tradeoff of both SetDisjointness and SetInteresection. Specifically, they proved, based on the 3SUM conjecture, that SetDisjointness has the following lower bound on the tradeoff between preprocessing time $T_p$ and query time $T_q$ for any $0 <\gamma<1$: $T_p+ N^{\frac{1+\gamma}{2-\gamma}}T_q = \Omega(N^{\frac{2}{2-\gamma}-o(1)})$.  Moreover, based on the 3SUM conjecture they also proved that SetIntersection has the following lower bound on the tradeoff between preprocessing, query and reporting (per output element) time for any $0 \leq\gamma<1, \delta>0$: $T_p+ N^{\frac{2(1+\gamma)}{3+\delta-\gamma}}T_q + N^{\frac{2(2+\delta)}{3+\delta-\gamma}}T_r= \Omega(N^{\frac{4}{3+\delta-\gamma}-o(1)})$. Kopelowitz et al.~\cite{KPP15} also proved the conditional \emph{time} hardness of the \emph{dynamic} versions of SetDisjointness and SetInteresection.

The lower bound on the \emph{space-query time} tradeoff for solving SetDisjointness was considered by Cohen and Porat~\cite{CP102} and \Patrascu{} and Roditty~\cite{PR14}. They have the following conjecture regarding the hardness of SetDisjointness (this is the formulation of Cohen and Porat. \Patrascu{} and Roditty use slightly different formulation):

\begin{conjecture}\label{conj:SDC}
\textbf{SetDisjointness Conjecture}. Any data structure for the SetDisjointness problem with constant query time must use $\tilde{\Omega}(N^2)$ space.
\end{conjecture}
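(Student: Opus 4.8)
\textbf{Towards a proof of Conjecture~\ref{conj:SDC}.} This statement is, as its name indicates, a conjecture: an unconditional proof is beyond current techniques, so what follows is a plan for assembling evidence and for ruling out the obvious attacks rather than a route to a theorem. The first step is to pin down the upper-bound side. The Cohen--Porat tradeoff $S\cdot T^2=O(N^2)$ already attains constant query time with $O(N^2)$ space, so the conjecture is exactly the assertion that this endpoint of the curve cannot be beaten; I would therefore aim at the stronger claim $S\cdot T^2=\tilde\Omega(N^2)$ along the whole tradeoff, from which the $T=O(1)$ case follows. The natural vehicle is a cell-probe argument whose hard instances are a family of $m$ sets over a universe of size $u$ with $N\approx mu$, arranged so that the $m(m-1)/2$ disjointness answers encode $\Theta(m^2)$ essentially independent bits while each query touches only a handful of cells, which forces many cells to be stored and read.

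The second step is to isolate the obstruction by first proving the bound in restricted but natural data-structure models, because a fully general cell-probe \emph{space} lower bound of order $N^2$ for a static problem with $O(1)$ query time would itself be a major breakthrough. Concretely I would target (i) the pointer-machine / indexing model, where a ``non-empty'' answer must be certified by a witness element reachable through stored pointers, and (ii) data structures whose probed-cell set depends only on the query pair $(i,j)$; in both cases a direct encoding/adversary argument against the instance family above should yield a clean $\tilde\Omega(N^2)$ bound and confirm that the conjectured curve is the right one.

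The third step is to relate the conjecture to assumptions that are already believed, so that it is not standalone. The obstacle here is a mismatch of resources: the 3SUM-based reductions of \Patrascu{}~\cite{Patrascu10} and Kopelowitz et al.~\cite{KPP16} lower bound \emph{preprocessing time}, and since space never exceeds preprocessing time these give no space lower bound at all. I would instead look for a reduction from a problem whose hardness is intrinsically about space at $O(1)$ query time --- an asymmetric communication-complexity version of set disjointness, or an OMv-style online framework --- engineered to keep the universe small, which is precisely the control developed in the rest of this paper.

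The genuine bottleneck, I expect, is the unrestricted cell-probe lower bound: ruling out subquadratic-space, $O(1)$-query data structures with no model restriction would resolve a long-standing barrier in static data-structure lower bounds on its own. That is why the statement is adopted as a conjecture and used as a hypothesis in the sequel rather than proved here.
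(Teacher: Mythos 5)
This statement is a \emph{conjecture}, not a theorem: the paper does not prove it, it inherits it from Cohen--Porat and \Patrascu{}--Roditty and uses it (and its strong variants, Conjectures~\ref{conj:SSDC} and~\ref{conj:SSIC}) as a hypothesis for the conditional lower bounds that follow. You correctly recognize this and, rather than fabricating a ``proof,'' give an honest assessment of why the claim is adopted as an assumption. Your assessment also matches what the paper itself says is known: the restricted-model evidence you propose in step two is exactly the line the paper cites --- Dietz et al.\ establish the strong SetDisjointness tradeoff in the semi-group model and Afshani--Nielsen establish the strong SetIntersection tradeoff in the pointer-machine model --- and your observation that 3SUM-based reductions bound preprocessing time rather than space, and hence cannot yield Conjecture~\ref{conj:SDC} directly, is precisely the gap that motivates the paper to keep the space conjectures separate from the 3SUM conjecture. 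In short, there is nothing to compare against: the paper offers no proof, and your write-up correctly treats the statement as an open hypothesis with a plausible (but currently out of reach) path via cell-probe or restricted-model lower bounds.
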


Recently, Goldstein et al.~\cite{GKLP17} considered \emph{space} conditional hardness in a broader sense and demonstrated the conditional hardness of SetDisjointness and SetInteresection with regard to their space-query time tradeoff. They had a generalized form of Conjecture~\ref{conj:SDC} that claims that the whole (simple) space-time tradeoff upper-bound for SetDisjointness is tight:

\begin{conjecture}\label{conj:SSDC}
\textbf{Strong SetDisjointness Conjecture}. Any data structure for the SetDisjointness problem that answers queries in $T$ time must use $S=\tilde{\Omega}(\frac{N^2}{T^2})$ space.
\end{conjecture}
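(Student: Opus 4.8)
Conjecture~\ref{conj:SSDC} is a conjecture, not a theorem, and the honest position is that I do not expect an unconditional proof of it: a superlinear space lower bound for a \emph{static} data-structure problem answered in constant time would be a major advance in cell-probe complexity. The strongest unconditional static space lower bounds known for explicit problems are essentially linear, and the partial bounds of the form $S = m^{1+\Omega(1/t)}$ hold only for a number of probes $t$ up to roughly $\log m$, so obtaining $S = \tilde{\Omega}(N^2)$ at $t = O(1)$ is precisely the barrier the area has not broken. What follows is therefore a plan to \emph{motivate} and partially substantiate the conjecture rather than to settle it.

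First I would record the one direction that really is a theorem --- the matching upper bound --- by recalling the Cohen--Porat data structure~\cite{CP10}: fix the threshold $\tau = T$, call a set \emph{heavy} if it contains more than $\tau$ elements, and note that there are at most $N/\tau$ heavy sets since $\sum_i |S_i| = N$. Store one bit per pair of heavy sets recording whether their intersection is empty, at a cost of $O((N/\tau)^2) = O(N^2/T^2)$ space, and build a hash table for each set. Then a query $(i,j)$ is answered in $O(T)$ time: if both $S_i$ and $S_j$ are heavy, read the precomputed bit; otherwise one of them, say $S_i$, has at most $T$ elements, so scan it and test each element for membership in $S_j$. This realizes $S \cdot T^2 = \tilde{O}(N^2)$, so Conjecture~\ref{conj:SSDC} says precisely that this textbook scheme is optimal across the whole tradeoff, and the task that remains is the lower bound.

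For the lower bound I see two routes. One would be to \emph{derive} it from the 3SUM conjecture, as Kopelowitz et al.~\cite{KPP16} did for the preprocessing/query tradeoff; but the existing 3SUM reductions produce instances whose hardness concerns the cost of \emph{building} the structure, not the space it must occupy after an unbounded preprocessing phase, so this route appears to need a genuinely new gadget and I would not bet on it. The more promising route is a direct cell-probe argument: go through the asymmetric communication framework of Miltersen, Nisan, Safra and Wigderson, so that a structure using $S$ cells of $w = \Theta(\log N)$ bits and answering queries with $t$ probes yields a protocol in which the querier (holding $(i,j)$) sends $O(t \log S)$ bits and the database (holding $S_1, \ldots, S_m$) sends $O(tw)$ bits; then feed in an instance encoding lopsided Set Disjointness, or a direct sum of such instances, with $m$ and the set sizes tuned so that $S = \tilde{\Omega}(N^2)$ is forced once $t = O(1)$. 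The hard part, and where I expect the attempt to stall, is the communication lower bound itself: one needs hardness against protocols in which the database side may send $\Theta(w) = \Theta(\log N)$ bits \emph{per round}, and both round elimination and richness arguments degrade sharply in exactly that regime --- which is why the $S = \tilde{\Omega}(N^2)$ bound for constant query time has stayed open. Failing that, the realistic deliverable is to prove Conjecture~\ref{conj:SSDC} in restricted models --- pointer machines, or structures confined to storing sub-collections of the input --- and to note, as \Patrascu{} and Roditty do for Conjecture~\ref{conj:SDC}, that it is at least consistent with the 3SUM-based lower bounds already in the literature, which rules out the easy refutations.
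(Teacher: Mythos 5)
You are right that this is a conjecture, not a theorem, and the paper offers no proof of it — it is taken as a hardness assumption. Your account of the matching Cohen--Porat upper bound and of why an unconditional $S=\tilde{\Omega}(N^2/T^2)$ cell-probe lower bound is out of reach is accurate, and nothing more is required here.
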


Moreover, they also presented a conjecture regarding the space-time tradeoff for SetIntersection:

\begin{conjecture}\label{conj:SSIC}
\textbf{Strong SetIntersection Conjecture}. Any data structure for the SetIntersection problem that answers queries in $O(T + out)$ time, where $out$ is the size of the output of the query, must use $S=\tilde{\Omega}(\frac{N^2}{T})$ space.
\end{conjecture}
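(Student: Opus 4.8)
This statement is a \emph{conjecture}, so no unconditional proof is known; the purpose here is to explain why $S=\tilde{\Omega}(N^2/T)$ is the natural barrier and what a proof would have to supply. The plan has two halves. The soft half records the evidence: the known SetIntersection data structures --- Cohen--Porat's linear-space, $O(\sqrt N\sqrt{out}+out)$-query structure and Cohen's generalisation to $O(N^{2-2t})$ space with $O(N^t out^{1-t}+out)$ query time~\cite{CP10,Cohen10} --- all respect $S\cdot T=\tilde{\Omega}(N^2)$ once $T$ is read as the worst query time over all output sizes, and nothing better is known. I would first re-examine these structures in the $(S,T,out)$ parametrisation, in particular pinning down that it is the \emph{middle} range of output sizes (not the empty or the huge queries) that forces $T$ up, and checking that cutting Cohen's query off at the first reported element recovers the SetDisjointness tradeoff $S\cdot T^2=O(N^2)$, so that Conjecture~\ref{conj:SSIC} is at least consistent with Conjecture~\ref{conj:SSDC}.

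The hard half is the lower bound itself, for which the natural route is a reduction. One would start from the Strong SetDisjointness Conjecture~\ref{conj:SSDC} (or, if that proves too weak, from 3SUM via the set-intersection gadgets of \Patrascu{} and of Kopelowitz et al.~\cite{Patrascu10,KPP16}) and inflate a hard instance into a SetIntersection instance in which many designated query pairs each have a \emph{nonempty but uniformly small} intersection. Such an instance would force any data structure answering in $O(T+out)$ time to perform $\Omega(T)$ genuine work per query to locate the few witnesses --- the additive $out$ cannot absorb it --- and a space budget below $N^2/T$ would then be driven to contradict the starting hardness assumption. The alternative, a direct cell-probe argument via cell sampling plus an encoding bound, faces the same combinatorial core described next.

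The main obstacle is exactly this control of output sizes. A reduction that leaves some queries with a large intersection lets the data structure hide all its work inside the $+out$ term, so the construction must keep the instance ``dense'' enough to be hard while keeping every query's output ``thin'' enough that reporting is cheap --- and it is precisely the middle-range outputs flagged above that sabotage the naive attempts. Compounding this, the obvious reduction from SetDisjointness is lossy: a $(S,T)$ SetIntersection structure yields only a $(S,T)$ SetDisjointness structure, against which Conjecture~\ref{conj:SSDC} merely forbids $S=o(N^2/T^2)$, not the required $S=o(N^2/T)$, so one genuinely needs extra leverage from the reporting requirement. For these reasons the statement remains a conjecture, and in this paper it is adopted --- together with Conjectures~\ref{conj:SDC} and~\ref{conj:SSDC} --- as a hypothesis from which the bounded-universe lower bounds are derived.
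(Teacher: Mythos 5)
You correctly recognize that this statement is a \emph{conjecture}, not a theorem, and the paper indeed offers no proof of it: it is introduced (following Goldstein et al.~\cite{GKLP17}) as a hypothesis from which the bounded-universe lower bounds of the paper are then derived, exactly as you say in your final sentence. Your supporting observations also line up with the paper: Appendix~\ref{sec:algorithms_for_setintersection} gives three algorithms realizing the matching upper bound $S\cdot T=O(N^2)$ (so the conjecture asserts tightness of that tradeoff), the paper cites Afshani and Nielsen~\cite{AN16} as proving the conjecture unconditionally in the pointer-machine model (your ``cell-probe''/encoding route is the natural analogue in the cell-probe world), and your consistency check that truncating to the first reported element recovers the $S\cdot T^2=O(N^2)$ SetDisjointness tradeoff is arithmetically correct. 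Your diagnosis of why a reduction from Conjecture~\ref{conj:SSDC} is lossy --- a $(S,T)$ SetIntersection structure only yields a $(S,T)$ SetDisjointness structure, against which Conjecture~\ref{conj:SSDC} forbids $S=o(N^2/T^2)$ rather than $S=o(N^2/T)$ --- is also sound and is precisely why the paper keeps Conjectures~\ref{conj:SSDC} and~\ref{conj:SSIC} as separate hypotheses rather than deriving one from the other. There is no ``paper's own proof'' to diverge from, so no gap to flag; your treatment matches the paper's stance.
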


Goldstein et al.~\cite{GKLP17} showed connections between these conjectures and other problems like 3SUM-Indexing (a data structure variant of 3SUM), k-Reachability and boolean matrix multiplication.
Unconditional lower bounds for the \emph{space-time} tradeoff of SetDisjointness and SetIntersection were proven by Dietz et al.~\cite{DMRU95} and Afshani and Nielsen~\cite{AN16} for specific models of computation. The results of Dietz et al.~\cite{DMRU95} implies that Conjecture~\ref{conj:SSDC} is true in the semi-group model. Afshani and Nielsen~\cite{AN16} proved Conjecture~\ref{conj:SSIC} in the pointer-machine model.

The fundamental nature of SetDisjointness and SetIntersection makes them useful for proving conditional lower bounds especially when considering their connection to the 3SUM problem. Indeed, several conditional lower bounds where proven using these problems (see~\cite{CP102,DSW12,GKLP16,KPP16,PR14,PRT12}). One major problem with this approach is that the universe of the elements in the sets of the SetDisjointness and SetIntersection problems can be large. This may cause the reduction from these problems to other problems, which we wish to prove their conditional hardness, to be inefficient. Therefore, it is of utmost interest to obtain a conditional lower bound on the hardness of SetDisjointness and SetIntersection with bounded universe, which in turn will be fruitful for achieving conditional lower bounds for other applications.

\textbf{Our Results}. In this paper we prove several conditional lower bounds for SetDisjointness and SetIntersection with bounded universe. We obtain the following results regarding the interplay between space and query time for solving these problems: (1) Based on the Strong SetDisjointness Conjecture, we prove that SetDisjointness with $m$ sets from universe $[u]$ must either use $\Omega(m^{2-o(1)})$ space or have $\Omega(u^{1/2-o(1)})$ query time. (2) Based on the Strong SetDisjointness Conjecture, we prove that SetIntersection with $m$ sets from universe $[u]$ must either use $\Omega(m^{2-o(1)})$ space or have $\tilde{\Omega}(u^{\alpha-o(1)} + out)$ query time, for any $1/2 \leq \alpha \leq 1$ and any output size $out$ such that $out=\Omega(u^{2\alpha-1-\delta})$ and $\delta>0$ (3) Based on the Strong SetIntersection Conjecture, we prove that SetIntersection with $m$ sets from universe $[u]$ must either use $\Omega((m^2u^{\alpha})^{1-o(1)})$ space or have $\tilde{\Omega}(u^{\alpha-o(1)} + out)$ query time for any $1/2 \leq \alpha \leq 1$ and any output size $out$ such that $out=\Omega(u^{2\alpha-1-\delta})$ and $\delta>0$.

Regarding the interplay of preprocessing and query time we demonstrate a reduction from 3SUM to SetDisjointness and SetIntersection. Using this reduction we prove the following results based on the 3SUM conjecture:
(i) Any solution to SetDisjointness with $m$ sets from universe $[u]$ must either have $\Omega(m^{2-o(1)})$ preprocessing time or have $\Omega(u^{1/2-o(1)})$ query time. (ii) Any solution to SetIntersection with $m$ sets from universe $[u]$ must either have $\Omega(m^{2-o(1)})$ preprocessing time or have $\Omega(u^{1-o(1)})$ query time.

These new conditional lower bounds are useful in proving conditional lower bounds for other problems that exploit the small universe size as explained before. We give some examples of such applications.

\smallskip

(1) \textbf{Range Mode}. The Range Mode problem is a classic problem that was studied in several papers (see e.g.~\cite{CDLMW14,KMS05}). In this problem, an array $A$ with $n$ elements is given for preprocessing. Then, we are required to answer range mode queries. That is, given a range $[i,j]$ we have to find the mode element (the most frequent element) in the range $[i,j]$ in $A$. The best known upper bound for the space-query time tradeoff of this problem is $S \cdot T^2 = \tilde{O}(n^2)$, where $S$ is the space usage and $T$ is the query time (\cite{CDLMW14,KMS05}). We prove using our new lower bound for SetDisjointess with bounded universe the following lower bound on the tradeoff between space and query time: $S \cdot T^4 = \Omega(n^{2-o(1)})$. We note that if the query time in the lower bound on SetDisjointness (in Theorem~\ref{thm:SD_small_universe_LB}, see (1) above) was $\tilde{\Omega}(u^{1-o(1)})$ then the lower and upper bounds were tight.

\smallskip

(2) \textbf{Distance oracle} is a data structure for computing the shortest path between any two vertices in a graph. We say that a distance oracle has a stretch $t$ if for any two vertices in the graph the distance it returns is no more than $t$ times the true distance between these vertices. Approximate distance oracles were investigated in many papers (see for example~\cite{AG13,Agarwal14,Chechik14,PR14,PRT12,TZ05}). Agrawal~\cite{Agarwal14} showed a $(\frac{5}{3})$-stretch distance oracle for a graph $G=(V,E)$ that uses $\tilde{O}(|E|+\frac{|V|^2}{\alpha})$ space and has $O(\alpha\frac{|E|}{|V|})$ query time for any $1 \leq \alpha \leq (\frac{|V|^2}{|E|})^{\frac{1}{3}}$. We prove that this tradeoff is the best that can be achieved for any stretch-less-than-2 distance oracles based on our new lower bound on SetDisjointness with bounded universe (see a more detailed discussion in Section~\ref{sec:applications}).

\smallskip

(3) \textbf{3SUM-Indexing} is a data structure variant of 3SUM. In this problem, two arrays $A$ and $B$ with $n$ numbers in each of them are preprocessed. Then, given a query number $z$ we are required to decide if there are $x \in A$ and $y \in B$ such that $x+y=z$. Goldstein et al.~\cite{GKLP17} conjecture that there is no $\tilde{O}(1)$ query time solution to 3SUM-Indexing using truly subquadratic space. In a stronger form of this conjecture they claim that there is no truly sublinear query time solution to 3SUM-Indexing using truly subquadratic space. Goldstein et al.~\cite{GKLP17} proved some connections between 3SUM-Indexing, SetDisjointness and SetIntersection. In this paper we strengthen their results using our new lower bounds for SetDisjointness and SetIntersection with bounded universe. Specifically, we prove based on our new lower bound on SetDisjointness with bounded universe that any solution to 3SUM-Indexing where the universe of the numbers within arrays $A$ and $B$ is $[n^{2+\epsilon}]$
for any $\epsilon>0$ has this lower bound on the tradeoff between space ($S$) and query time ($T$): $S \cdot T^2 = \Omega(n^{2-o(1)})$. Moreover, we prove the same lower bound on the tradeoff between preprocessing ($T_p$) and query time ($T_q$): $T_p \cdot T_q^2 = \tilde{\Omega}(n^{2-o(1)})$. The latter is proven based on the 3SUM conjecture following our reduction to SetDisjointness with bounded universe.

In the 3SUM conjecture the universe of the numbers in the given instance is assumed to be $[n^3]$ (see~\cite{Patrascu10}) or even $[n^4]$ (see~\cite{Williams18}). It is known that 3SUM can be easily solved in $O(u\log{u})$ time if the universe is $[u]$ by using FFT. Therefore, 3SUM with numbers from universe $[n^{2-\epsilon}]$ for any $\epsilon>0$ can be solved in truly subquadratic time. Consequently, assuming that no truly subquadratic solution to 3SUM with universe $[n^2]$ seems to be much stronger conjecture (it was used once in~\cite{ACLL14}). Solving 3SUM-Indexing can be done easily with $\tilde{O}(n^2)$ preprocessing time, $O(n^2)$ space and $\tilde{O}(1)$ query time. Our results demonstrate that this is tight even if the universe of the numbers in $A$ and $B$ and the query numbers is $[n^{2+\epsilon}]$ for any $\epsilon>0$. This is a very strong lower bound, as 3SUM-Indexing with numbers from universe $[u]$ can be solved with $\tilde{O}(u)$ preprocessing time, $O(u)$ space and $\tilde{O}(1)$ query time. This is done in a similar way to solving 3SUM with numbers from universe $[u]$. Consequently, for any $\epsilon>0$, 3SUM-Indexing with numbers from universe $[n^{2-\epsilon}]$ can be solved by a data structure that has constant query, while the preprocessing time and space are subquadratic. Our new conditional lower bound demonstrates that having such a data structure for a slightly larger universe seems to be impossible.

\section{Hardness of Space-Time Tradeoff for SD and SI with Bounded Universe}
We prove the hardness of SetDisjointness with bounded universe in the following theorem:
\begin{theorem}\label{thm:SD_small_universe_LB}
Any solution to SetDisjointness with sets $S_1,S_2,...,S_m \subseteq [u]$ for any value of $u \in [N^{\delta},N]$, such that $N = \sum_{i=1}^{m}{|S_i|}$ and $\delta>0$, must either use $\Omega(m^{2-o(1)})$ space or have $\tilde{\Omega}(u^{1/2-o(1)})$ query time, unless the Strong SetDisjointness Conjecture is false.
\end{theorem}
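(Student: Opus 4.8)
The plan is to argue the contrapositive: from a hypothetical efficient data structure for bounded-universe SetDisjointness I build one that violates the space--time tradeoff asserted by the Strong SetDisjointness Conjecture. So suppose, toward a contradiction, that for some $\epsilon_1,\epsilon_2>0$ there is a data structure $\mathcal{D}$ that solves SetDisjointness on $m$ sets over $[u]$ using $O(m^{2-\epsilon_1})$ space and $\tilde{O}(u^{1/2-\epsilon_2})$ query time (which is what the failure of the theorem's conclusion amounts to, once the $o(1)$ and $\tilde{\Omega}(\cdot)$ in the statement are unpacked into concrete exponents).

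Next I would fix the ``hard'' family of inputs: the \emph{balanced} instances for universe $[u]$. These consist of $m$ sets $S_1,\dots,S_m\subseteq[u]$, each of size $\Theta(\sqrt{u})$, so that $N=\sum_i|S_i|=\Theta(m\sqrt{u})$; picking $m$ in the right range as a function of $u$ and $\delta$ makes $u$ fall into the admissible window $[N^{\delta},N]$ and keeps the instance realizable ($m\le N\le mu$). On such an instance $N/m=\Theta(\sqrt{u})$ is exactly the query time of the trivial ``store every set and scan the smaller one'' algorithm, so a query time \emph{below} $\sqrt{u}$ is precisely the regime in which the conjectured tradeoff degenerates to the precompute-the-whole-table bound. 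Now run $\mathcal{D}$ on such an instance: we obtain a SetDisjointness data structure with query time $T=\tilde{O}(u^{1/2-\epsilon_2})=o(\sqrt{u})=o(N/m)$ and space $O(m^{2-\epsilon_1})=o(m^2)$. But the Strong SetDisjointness Conjecture asserts that any data structure answering queries in time $T$ must use $S=\tilde{\Omega}(N^2/T^2)$ space; here $N=\Theta(m\sqrt{u})$ and $T=o(\sqrt{u})$ give $N^2/T^2=\Omega(m^2 u/T^2)=\omega(m^2)$, a polynomial (not merely polylogarithmic) factor above $m^2$, so $\mathcal{D}$ uses $o(m^2)$ space where $\omega(m^2)$ is required --- a contradiction. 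Hence no such $\mathcal{D}$ exists, which is exactly the claimed lower bound.

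The parameter bookkeeping is routine; the point I expect to be delicate --- and would check most carefully --- is that the constructed instances genuinely lie within the scope of the Strong SetDisjointness Conjecture, i.e.\ that the $\tilde{\Omega}(N^2/T^2)$ bound is non-vacuous for precisely these $(m,u,N)$ and this query-time regime. (For $m$ much larger than $N/\sqrt{u}$ the trivial algorithm already attains $o(m^2)$ space with query time $o(\sqrt{u})$, so the statement can only be tight for $m=O(N/\sqrt{u})$; the remaining values of $m$, if one insists on the bound for all of them, must be recovered separately or by a monotonicity argument.) A smaller technical nuisance is moving between the ``$o(1)$''/``$\tilde{\Omega}$'' form of the statement and the explicit exponents $\epsilon_1,\epsilon_2$, which is why the conjecture-violation above has to be phrased with a concrete polynomial gap rather than with a bare $o(\cdot)$ versus $\omega(\cdot)$.
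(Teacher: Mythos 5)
Your proposal has a genuine gap, and it is exactly the one you flag in your last paragraph but underestimate: you cannot apply the Strong SetDisjointness Conjecture directly to a family of bounded-universe instances. The conjecture is a worst-case statement about the SetDisjointness \emph{problem}: any data structure that handles \emph{all} instances with query time $T$ must use $\tilde{\Omega}(N^2/T^2)$ space. It does not assert that any particular subclass of instances --- and in particular the balanced, universe-$[u]$ ones you single out --- is hard. A priori, the hard instances witnessing the conjecture could all have universe of size, say, $N^{10}$, while bounded-universe SetDisjointness admits a fast small-space data structure; that would be perfectly consistent with the conjecture and would also make your hypothetical $\mathcal{D}$ exist, so no contradiction is reached. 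What your argument silently assumes is precisely the conclusion of the theorem: that bounded-universe instances carry the full hardness of the general problem. Stated plainly, the step ``run $\mathcal{D}$ on a balanced instance and invoke the conjecture'' is circular.

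What the paper actually does, and what your proof is missing, is a \emph{reduction}: start from an arbitrary (possibly huge-universe) SetDisjointness instance, and use the hypothetical bounded-universe data structure as a subroutine to build a data structure for the general problem whose space and query time beat the conjectured $S\cdot T^2 = \tilde{\Omega}(N^2)$ tradeoff. This requires real work. The paper partitions sets into large, medium, and small; precomputes a matrix of answers for large-versus-(large or medium) pairs; handles small sets by direct scanning against hash tables; and, for medium-versus-medium pairs, hashes the (large) universe down to $[8u]$ using $\log n$ independent hash functions and builds one bounded-universe data structure per hash function. The probabilistic argument (union bound plus probabilistic method) shows that a suitable family of $\log n$ hash functions exists so that every disjoint pair of medium sets remains disjoint under at least one of them; this is what makes the bounded-universe data structure usable on a general instance. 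Your proposal skips all of this, so it cannot be repaired by tightening the parameter bookkeeping --- the missing piece is the entire reduction from unbounded to bounded universe.
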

\begin{proof}
Let us assume to the contradiction that the Strong SetDisjointness Conjecture is true, but there is an algorithm $A$ that solves SetDisjointness on $m$ sets from a universe $[u]$ and creates a data structure $D$, such that the space complexity of the data structure $D$ is $O(m^{2-\epsilon_1})$ for some $\epsilon_1 > 0$ and the query time of algorithm $A$ is $O(u^{1/2-\epsilon_2})$ for some $0<\epsilon_2 \leq 1/2 $. We define $\epsilon=\min(\epsilon_1,\epsilon_2)$.

Now, given an instance of SetDisjointness with sets $S'_1,S'_2,...,S'_{m'}$, we denote by $N'$ the total number of elements in all sets, that is $N' = \sum_{i=1}^{m}{|S'_i|}$. We rename the elements of all the sets such that each element $e_i$ is mapped to some integer $x_i \in [N']$.

We distinguish between 3 types of sets:
\begin{enumerate}[label=(\alph*)]
\item \textbf{Large sets} are all the sets with more than $\sqrt{u}$ elements. Denote by $d$ the number of large sets. Let $S_{p_1}, S_{p_2},...,S_{p_d}$ be some ordering of the large sets. Let $p$ be a function such that $p(i)=p_j$ if $S_i$ is the set $S_{p_j}$ in the ordering of the large sets.
\item \textbf{Small sets} are all the sets with $O(u^{1/2-\epsilon})$ elements.
\item \textbf{Medium sets} are all the sets that are neither large nor small. Denote by $e$ the number of medium sets. Let $S_{q_1}, S_{q_2},...,S_{q_e}$ be some ordering of the medium sets. Let $q$ be a function such that $q(i)=q_j$ if $S_i$ is the set $S_{q_j}$ in the ordering of the medium sets.
\end{enumerate}

Now, we can solve SetDisjointness in the following way.

\medskip
\noindent
\textbf{Preprocessing}:
\begin{enumerate}[label=(\arabic*)]
 \item For any set $S_i$ use static hashing to save all elements of the set in a table $T_i$, such that we can check if some element exists in the set in $O(1)$ time and the size of $T_i$ is $O(|S_i|)$.
 \item Maintain a $d \times (d+e)$ matrix $M$. The $\ell$th row in this matrix represents the set $S_{p_{\ell}}$. For $1 \leq \ell \leq d$, the $\ell$th column represents $S_{p_{\ell}}$ and for $d+1 \leq \ell \leq d+e$, the $\ell$th column represents $S_{q_{\ell - d}}$.
 \item For all pairs of sets $S_i$ and $S_j$ such that $S_i$ is a large set and $S_j$ is a large or medium set, save an explicit answer to the emptiness of the intersection of $S_i$ and $S_j$ in $M[p(i),p(j)]$ and $M[p(j),p(i)]$ if $S_j$ is a large set and in $M[p(i),d+q(j)]$ if $S_j$ is a medium set.
 \item Pick $\log{n}$ hash functions $h_i:N \rightarrow [8u]$, for $1 \leq i \leq \log{n}$. Apply each $h_i$ to all elements in all medium sets. Denote by $h_i(S_j)$ the set $S_j$ after $h_i$ has been applied to its elements.
 \item For every $i,j \in [e]$, if $S_{q_{i}} \cap S_{q_{j}} = \emptyset$ do the following: Check if for all $k \in [\log{n}]$ there are $x \in S_{q_{i}}$ and $x' \in S_{q_{j}}$ such that $x \neq x'$ but $h_k(x)=h_k(x')$. If so, go back to step (4).
 \item For every $k \in [\log{n}]$:
  \begin{enumerate}[label=(6.\arabic*)]
  \item Apply $h_k$ to all the elements of all the medium sets.
  \item Use algorithm $A$ to create a data structure $D_k$ that solves the set disjointness problem on the medium sets $S_{q_1}, S_{q_2},...,S_{q_e}$ after $h_k$ has been applied to their elements.
    \end{enumerate}
\end{enumerate}

\noindent
\textbf{Query}:

Given a pair of indices $i$ and $j$, we need to determine if $S_i \cap S_j$ is empty or not. Without loss of generality we assume that $|S_i| < |S_j|$ and do the following:

\begin{enumerate}[label=(\arabic*)]
\item If $S_i$ is a small set:
     \begin{enumerate}[label=(1.\arabic*)]
     \item For each element $x \in S_i$:
            Check if $x \in S_j$ using table $T_j$.
            If so, return 0.
     \item Return 1.
     \end{enumerate}
\item If $S_j$ is a large set:
      \begin{enumerate}[label=(2.\arabic*)]
      \item If $S_i$ is a large set: Return $M[p(i),p(j)]$.
      \item If $S_i$ is a medium set: Return and $M[p(j),d+q(i)]$.
      \end{enumerate}
\item Else (if both $S_i$ and $S_j$ are medium sets):

\setlength{\leftskip}{6pt} For every $k \in [\log{n}]$, check by using algorithm $A$ and the data structure $D_k$ if $S_i$ and $S_j$ are disjoint.

If there is at least one value of $k$ for which these sets are disjoint, return 1.

Otherwise, return 0.
\end{enumerate}

\textbf{Correctness.} If at least one of the query sets is small then we can check if any of its elements is in the other query set using the hash tables that have been created in step (1) of the preprocessing phase. This is done in step (1) of the query algorithm. If at least one of the sets is large we can find the answer immediately by looking at the right position of matrix $M$ that has been created in steps (2)-(3) of the preprocessing phase. The last option is that both query sets are medium. If this is the case we use the data structures that have been created in step (6) of the preprocessing phase. In steps (4) and (5) of the preprocessing phase we look for $\log{n}$ hash functions such that if any pair of sets are disjoint then they must be disjoint when applying the hash functions to their elements by at least one of the $\log{n}$ hash functions. Therefore, if any of the data structures that have been created in step (6) of the preprocessing phase reports that a pair of sets are disjoint they must be disjoint. Moreover, if a pair of sets are disjoint then there must be at least one data structure that reports that they are disjoint. This is checked in the step (3) of the query algorithm.

The last thing that needs to be justified is the existence of $\log{n}$ hash function such that for every pair of sets $S_i$ and $S_j$ that are disjoint they are also disjoint after applying the hash functions by at least one of the $\log{n}$ hash functions. The range of the hash function is $[8u]$. The number of elements in the medium sets is no more than $\sqrt{u}$. Therefore, for any two medium sets $S_i$ and $S_j$ and a hash function $h_k:N \rightarrow [8u]$ we have by the union-bound that $\Pr[\exists x_1 \in S_i, x_2 \in S_j: x_1 \neq x_2 \wedge h_k(x_1)=h_k(x_2)] \leq \frac{\sqrt{u} \cdot \sqrt{u}}{8u} = 1/8$. Consequently, the probability that a pair of disjoint medium sets $S_i$ and $S_j$ are not disjoint when applying $h_k$ for all $k \in [\log{n}]$ is no more than $(1/8)^{\log{n}} = 1/n^3$. Therefore, the probability that any pair of disjoint medium sets are not disjoint when applying $h_k$ for all $k \in [\log{n}]$ is no more than $n^2/n^3 = 1/n$ by the union-bound. Using the probabilistic method we get that there must be $\log{n}$ hash functions such that for every pair of sets $S_i$ and $S_j$ that are disjoint they are also disjoint after applying the hash functions by at least one of the $\log{n}$ hash functions.

\medskip

\textbf{Complexity analysis}.

\textbf{Space complexity}. The space for the tables in step (1) of the preprocessing is clearly $O(N)$ - linear in the total number of elements. The total number of large sets $d$ is at most $O(N/u^{1/2})$. The total number of medium sets $e$ is at most $O(N/u^{1/2-\epsilon})$. Therefore, the size of the matrix $M$ is at most $O(N/u^{1/2} \cdot (N/u^{1/2} + N/u^{1/2-\epsilon})) = O(N^2/u^{1-\epsilon})$. There are $\log{n}$ data structures that are created in step (6). Each data structure uses at most $O((N/u^{1/2-\epsilon})^{2-\epsilon}) = O(N^{2-\epsilon}/u^{1-5\epsilon/2+\epsilon^2})$ space. Consequently, the total space complexity is $S=\tilde{O}(N^2/u^{1-\epsilon}+N^{2-\epsilon}/u^{1-5\epsilon/2+\epsilon^2})$.

\textbf{Query time complexity}. Step (1) of the query algorithm can be done in $O(u^{1/2-\epsilon})$ as this is the size of the largest small set. Step (2) is done in constant time by looking at the right position in $M$. In step (3) we do $\log{n}$ queries using algorithm $A$ and the data structures $D_k$. The query time for each query is $O(u^{1/2-\epsilon})$ as the universe of the sets after applying any hash function $h_k$ is $[8u]$. Therefore, the total query time is $T=O(u^{1/2-\epsilon})$.

\smallskip

Following our analysis we have that $S \cdot T^2 = \tilde{O}((N^2/u^{1-\epsilon}+N^{2-\epsilon}/u^{1-5\epsilon/2+\epsilon^2}) \cdot (u^{1/2-\epsilon})^2) = \tilde{O}(N^2u^{-\epsilon}+N^{2-\epsilon}u^{\epsilon/2-\epsilon^2})=\tilde{O}(N^2u^{-\epsilon}+N^2u^{-\epsilon^2})$ (the last equality follows from the fact that $u \leq N$).
This contradicts the Strong SetDisjointness Conjecture and therefore our assumption is false.
\end{proof}

From the proof of the above theorem we get a specific range for the value of $m$ for hard instances of SetDisjointness. Bounding the value of $m$ for hard instances may be useful for some specific applications. Therefore, we state the following corollary of the proof of Theorem~\ref{thm:SD_small_universe_LB}:

\begin{corollary}\label{thm:SD_small_universe_LB_for_specific_m}
For any $\epsilon>0$, any solution to set disjointness with sets $S_1,S_2,...,S_m \subseteq [u]$ for any value of $u \in [N^{\delta},N]$, such that $N = \sum_{i=1}^{m}{|S_i|}$, $\delta>0$ and the solution works for any value of $m$ in the range $[\frac{N}{u^{1/2}},\frac{N}{u^{1/2-\epsilon}}]$, must either use $\Omega(m^{2-o(1)})$ space or have $\Omega(u^{1/2-o(1)})$ query time, unless the Strong SetDisjointness Conjecture is false.
\end{corollary}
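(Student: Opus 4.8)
The plan is to observe that the corollary is essentially read off from the proof of Theorem~\ref{thm:SD_small_universe_LB}: the only place that proof ever invokes the assumed algorithm $A$ is in steps~(6) of the preprocessing (and, through them, step~(3) of the query), and there it is always run on the family of \emph{medium} sets after hashing. So it suffices to check that every instance handed to $A$ has its number of sets inside the window $[\frac{\hat N}{\hat u^{1/2}},\frac{\hat N}{\hat u^{1/2-\epsilon}}]$, where $\hat N$ and $\hat u$ are the total size and universe size of that instance. Concretely, I would again assume for contradiction that the Strong SetDisjointness Conjecture (Conjecture~\ref{conj:SSDC}) holds while some $A$ solves SetDisjointness in $O(m^{2-\epsilon_1})$ space and $O(u^{1/2-\epsilon_2})$ query time on \emph{every} instance with $m\in[\frac{N}{u^{1/2}},\frac{N}{u^{1/2-\epsilon}}]$, set $\epsilon=\min(\epsilon_1,\epsilon_2)$, and rerun the same construction.

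The key step is the regime check. Let $S_{q_1},\dots,S_{q_e}$ be the medium sets and $N_{\mathrm{med}}=\sum_i|S_{q_i}|$; a medium set has cardinality strictly between the small-set threshold and the large-set threshold, which are $\Theta(u^{1/2-\epsilon})$ and $\Theta(u^{1/2})$ respectively, so
\[
\frac{N_{\mathrm{med}}}{u^{1/2}}\;\le\;e\;<\;\frac{N_{\mathrm{med}}}{u^{1/2-\epsilon}}
\]
up to the constants hidden in the two thresholds. After hashing, the instance fed to $A$ in step~(6.2) has universe $[8u]$ and total size $N_{\mathrm{med}}^{(k)}=\sum_i|h_k(S_{q_i})|$. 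Since $8u=\Theta(u)$, and since the expected loss of a single $h_k$ to within-set collisions is $O(N_{\mathrm{med}}/\sqrt{u})=o(N_{\mathrm{med}})$ — so that, by one more application of Markov's inequality layered onto the probabilistic-method argument the proof already runs for the $\log n$ hash functions, we may additionally insist that every $h_k$ preserves all but an $o(1)$ fraction of $N_{\mathrm{med}}$ — the same two-sided bound still places $e$ inside $[\frac{N_{\mathrm{med}}^{(k)}}{(8u)^{1/2}},\frac{N_{\mathrm{med}}^{(k)}}{(8u)^{1/2-\epsilon}}]$, once the small/medium/large boundaries are fixed with a consistent choice of constants. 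Hence $A$ is always called on a legal instance and outputs data structures with the promised guarantees, and the remainder — the size of $M$, the $\log n$ copies of $D_k$, the query time, and the final $S\cdot T^2=\tilde O(N^2u^{-\epsilon}+N^2u^{-\epsilon^2})$ contradicting Conjecture~\ref{conj:SSDC} — is verbatim that of Theorem~\ref{thm:SD_small_universe_LB}, and in particular it exhibits the hard instances with $m$ in $[\frac{N}{u^{1/2}},\frac{N}{u^{1/2-\epsilon}}]$.

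I expect the only real work to be this bookkeeping of constants: the hash step widens the universe from $[u]$ to $[8u]$, and within-set collisions can shrink the total size of the instance handed to $A$, so the small/medium/large boundaries must be chosen so that, even after these constant-factor perturbations, the number of medium sets still lands in the required $[\,\cdot/u^{1/2},\,\cdot/u^{1/2-\epsilon}\,]$ window — a window that is safely of multiplicative width $u^{\epsilon}=N^{\Omega(\epsilon)}\gg1$, comfortably absorbing every such constant. None of this touches the complexity analysis, since each adjustment is a constant factor that vanishes into the $o(1)$'s in the exponents; there is no genuinely new difficulty beyond re-reading the earlier proof and noticing that $A$ is invoked only on the medium family.
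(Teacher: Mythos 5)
Your proposal is correct and matches the paper's intent: the paper gives no separate proof for this corollary, stating only that it follows from the proof of Theorem~\ref{thm:SD_small_universe_LB}, and the essential observation — that the assumed algorithm $A$ is invoked only on the post-hashing medium-set families, whose number of sets automatically falls in the window $[N_A/u_A^{1/2},\, N_A/u_A^{1/2-\epsilon}]$ because the medium-set thresholds are exactly $\Theta(u^{1/2-\epsilon})$ and $\Theta(u^{1/2})$ — is precisely what the paper expects the reader to extract. Your bookkeeping of the constant-factor perturbations (universe inflation to $[8u]$, loss of elements to within-set collisions) is the right thing to track, and you are correct that the multiplicative width $u^{\epsilon}$ of the window absorbs all of them. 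One small imprecision worth tidying: you "set $\epsilon=\min(\epsilon_1,\epsilon_2)$," but the corollary's $\epsilon$ is a free parameter specifying the range on which $A$ is guaranteed to work; if that $\epsilon$ happens to be smaller than $\min(\epsilon_1,\epsilon_2)$, the medium/small threshold must be set at $u^{1/2-\epsilon'}$ with $\epsilon'=\min(\epsilon,\epsilon_1,\epsilon_2)$ so the medium family still lands inside the guaranteed window — the complexity calculation at the end of Theorem~\ref{thm:SD_small_universe_LB} goes through unchanged with this $\epsilon'$ since it remains a positive constant.
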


We also prove conditional lower bounds on SetIntersection with bounded universe based on the Strong SetDisjointness Conjecture and the Strong SetIntersection Conjecture by generalizing the ideas from the previous proof. These results appear in Appendix~\ref{sec:clb_setintersection}.

\section{Hardness of Preprocessing-Query Time Tradeoff for SD and SI with Bounded Universe}\label{sec:3sum_clb_sd_si}
We combine the ideas of Goldstein et al.~\cite{GKLP16} and Kopelowitz et al.~\cite{KPP16} to get conditional lower bounds on the complexity of SetDisjointness with bounded universe. To achieve these bounds we prove the following lemma:

\begin{lemma}\label{lem:3sum_indexing_setdisjointness_reduction}
Let $X$ be any integer in $[n^{\delta},n]$ for any $\delta>0$. For any $\epsilon>0$, an instance of 3SUM-Indexing that contains 2 arrays with $n$ integers can be reduced to $2\epsilon\log{X}$ instances of SetDisjointness $SD_1,SD_2,...,SD_{2\epsilon\log{X}}$. For any $1 \leq i \leq 2\epsilon\log{X}$, instance $SD_i$ have $N_i=n\sqrt{u_i}$ elements from universe $[u_i]$ and $m=n\sqrt{\frac{X}{u_i}}$ sets that each one of them is of size $O(\sqrt{u_i})$, where $u_i=X^{1+\epsilon}/2^{i-1}$. The time and space complexity of the reduction is truly subquadratic in $n$. Each query to the 3SUM-Indexing instance can be answered by at most $O(n/\sqrt{X})$ queries to each instance $SD_i$ plus some additional time that is truly sublinear in $n$.
\end{lemma}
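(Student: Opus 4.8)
The plan is to adapt the standard hashing-based reduction from 3SUM to SetDisjointness (as in Kopelowitz et al.~\cite{KPP16} and Goldstein et al.~\cite{GKLP16}) but carry it out at several scales simultaneously so that the universe of each produced instance is controlled. First I would preprocess the 3SUM-Indexing instance $(A,B)$ as follows. For a level $i$ with target universe $u_i = X^{1+\epsilon}/2^{i-1}$, pick an (almost-linear) hash function $h_i$ that maps integers into buckets so that each of $A$ and $B$ is partitioned into roughly $n/\sqrt{u_i}$ groups, each group containing $O(\sqrt{u_i})$ elements; the key property of the hash family (Fredman-style / the one used in \cite{KPP16}) is that $x+y=z$ implies $h_i(x)+h_i(y) \in \{h_i(z), h_i(z)+1\}$ (or a constant-size set of bucket sums), so a query reduces to checking a bounded number of bucket-pairs. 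For each group $g$ of $A$ I create a SetDisjointness set $S_g^A$ consisting of the elements of that group "shifted" so that the combinatorial constraint $x+y=z$ becomes a common-element test; symmetrically for $B$. This yields $m = n\sqrt{X/u_i}$ sets (here $X$ is the extra tuning parameter that lets us make the number of groups on each side $n/\sqrt{u_i}$ scaled by $\sqrt{X}$, hence $m$ as claimed), each of size $O(\sqrt{u_i})$, over a universe of size $u_i$, giving $N_i = m \cdot O(\sqrt{u_i}) = O(n\sqrt{u_i})$ elements — matching the statement.

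Second, I would handle queries. Given a 3SUM-Indexing query $z$, compute $h_i(z)$; the pairs $(x,y)$ with $x+y=z$ can only come from group-pairs whose bucket-sums land in the constant-size target set, and across the $n/\sqrt{u_i}$ groups of $A$ this is $O(n/\sqrt{X})$ relevant group-pairs per level after accounting for the $\sqrt{X}$ scaling built into $m$ (one fixes the $A$-group, which determines the admissible $B$-groups up to $O(1)$ choices). For each such group-pair we issue one SetDisjointness query $(S_g^A, S_{g'}^B)$ on $SD_i$; a "not disjoint" answer certifies a witness (the shifting is arranged so a shared element exactly encodes a pair summing to $z$, possibly after a final $O(1)$-time verification against $A,B$ to rule out hash collisions, which is the "additional truly sublinear time"). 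So each query costs $O(n/\sqrt{X})$ queries to each $SD_i$ plus sublinear overhead, as required. Summing over the $2\epsilon\log X$ levels keeps everything within a $\operatorname{polylog}$ factor.

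Third, the complexity bookkeeping: the preprocessing for all levels is dominated by $\sum_i N_i = \sum_i O(n\sqrt{u_i})$, and since $u_i \le u_1 = X^{1+\epsilon} \le n^{1+\epsilon}$ we get $N_i = O(n^{(3+\epsilon)/2})$, which is truly subquadratic for $\epsilon < 1$ — and more care (choosing $X$ appropriately, e.g. $X$ a small polynomial of $n$) keeps it truly subquadratic in the regime we need; the space is of the same order. The main obstacle I expect is \emph{the choice of the hash family and the accounting for the $O(1)$ bucket-sum targets together with the extra parameter $X$}: one must verify that the almost-linear hashing simultaneously (a) gives groups of size $O(\sqrt{u_i})$ with high probability (this is a bound on the max load, handled by Chernoff plus a union bound, or by the $\log n$-functions trick used in the proof of Theorem~\ref{thm:SD_small_universe_LB} to derandomize), (b) makes the number of candidate group-pairs per query only $O(n/\sqrt{X})$ rather than $O(n/\sqrt{u_i})$, which is where $X$ must be threaded through the construction, and (c) is itself evaluable in $O(1)$ time so that the query's extra cost stays truly sublinear. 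Once the hash family's linearity and load properties are pinned down, the rest is the routine shifting-encoding argument plus the summation over the $2\epsilon\log X$ scales.
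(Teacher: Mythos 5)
Your plan gestures at the right ingredients (almost-linear hashing, shifting to turn sums into common-element tests, a tuning parameter $X$, and a logarithmic family of scales), but it is structurally different from the paper's argument and has several concrete gaps that would prevent it from going through as written.

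The central missing idea is the \emph{quad tree over characteristic vectors}. In the paper, after hashing $A$ and $B$ into $R$ buckets with $h_1$ and applying a second hash $h_2:U\to[n]$, each bucket $A_i$ is turned into an $n$-length characteristic vector $v_{A_i}$, and for each bucket pair $(A_i,B_j)$ a quad tree is built over partitions of $v_{A_i}$ and $v_{B_j}$ into sub-vectors; the node at scale $Z$ stores (or implicitly represents) the \emph{convolution} of the two sub-vectors. The $2\epsilon\log X$ SetDisjointness instances $SD_1,\ldots,SD_{2\epsilon\log X}$ are exactly the implicit levels $\ell_{X^{1-\epsilon}},\ldots,\ell_{X^{1+\epsilon}}$ of these quad trees: at an implicit level with sub-vector length $Z$, a convolution entry is tested for nonzero by one disjointness query on the $O(\sqrt Z)$-shift encoding of the sub-vectors. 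Your plan has no hierarchical structure at all; you create one flat bucketing per level $i$ and issue disjointness queries directly, which leaves unanswered (a) how the $2\epsilon\log X$ instances interact to answer one 3SUM-Indexing query, and (b) how you locate and enumerate witnesses across scales. In the paper a query descends one root-to-leaf path per witness, using the explicit levels (above $\ell_{X^{1-\epsilon}}$) plus the implicit levels to guide the descent and a final $2$SUM-like merge in the leaf to verify; your plan has none of this machinery.

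A second, related gap is the treatment of \emph{false witnesses}. Because $h_2$ is many-to-one, pairs $(x,y)$ with $h_2(x)+h_2(y)=h_2(z)$ but $x+y\neq z$ occur, and the $O(n/\sqrt X)$ bound on the number of SetDisjointness queries per 3SUM-Indexing query comes precisely from bounding the expected number of such false witnesses (at most $O(n/R)$ over all bucket pairs, with $R=\sqrt X$) and the cost of verifying each one via the quad-tree descent and leaf-level merge. You assert the $O(n/\sqrt X)$ query count by appealing to the "$\sqrt X$ scaling built into $m$," but this is circular: the $\sqrt X$ in $m=n\sqrt{X/u_i}$ comes from the $R=\sqrt X$ buckets of $h_1$ multiplied by the $O(n/u_i)$ sub-vectors and the $O(\sqrt{u_i})$ shifts per sub-vector, not from choosing $n/\sqrt{u_i}\cdot\sqrt X$ groups directly, and it does not by itself control the number of bucket pairs (or witnesses) that must be probed per query. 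Your bucketing ($n/\sqrt{u_i}$ groups of size $\sqrt{u_i}$, with $\sqrt{u_i}$ shifts) would actually give about $n$ sets, not $n\sqrt{X/u_i}$.

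A third gap is in "shifting." You say the sets $S_g^A$ are built from elements of the group "shifted so that $x+y=z$ becomes a common-element test," but the sets are built at preprocessing time while $z$ is only available at query time. The paper resolves this by working with convolutions of characteristic vectors and reducing the inner product of $v^R$ at shift $j \bmod \sqrt X$ against $u$ at shift $X-1-\lfloor j/\sqrt X\rfloor\sqrt X$ to a disjointness query; the $O(\sqrt Z)$ preprocessed shifts of \emph{both} sub-vectors suffice to cover every convolution index $j$, and the query merely selects which pair of shifted sets to probe based on $h_2(z)$. Without spelling this out, it is unclear that a fixed family of preprocessed sets can answer queries for arbitrary $z$. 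Finally, your complexity bookkeeping ("dominated by $\sum_i N_i$") omits the cost of the explicit upper levels of the quad trees, which is $\tilde O(n^2 R^2/X^{1+\epsilon})=\tilde O(n^2/X^\epsilon)$ after setting $R=\sqrt X$, and is the dominant term in both preprocessing time and space. These pieces — the quad-tree hierarchy, the convolution-via-shifts encoding, the false-witness accounting, and the explicit-level cost — are the core of the proof and need to appear.
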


\begin{proof}
We begin with an instance of $3$SUM indexing with arrays $A$ and $B$ and do the following construction in order to reduce this $3$SUM indexing instance to $2\epsilon\log{n}$ instances of SetDisjointness. The construction uses almost-linear and almost-balanced hash functions that serve as a useful tool in many reductions from 3SUM. We briefly define this notion here (see full details in~\cite{KPP16,Wang14}). Let $\mathcal{H}$ be a family of hash functions from $[u] \rightarrow [m]$. $\mathcal{H}$ is called {\em linear} if for any $h\in\mathcal{H}$ and any $x,x' \in [u]$, we have $h(x) + h(x') \equiv h(x+x') \; (\modulo m)$.
$\mathcal{H}$ is called {\em almost-linear} if for any $h\in\mathcal{H}$ and any $x,x' \in [u]$, we have
either $h(x) + h(x') \equiv h(x+x') +c_h \; (\modulo m)$, or $h(x) + h(x') \equiv h(x+x') + c_h +1 \; (\modulo m)$, where $c_h$ is an integer that depends only on the choice of $h$.
For a function $h:[u] \rightarrow [m]$ and a set $S\subset [u]$ where $|S|=n$, we say that $i\in [m]$ is an overflowed value of $h$ if $|\{x\in S : h(x) = i\}| > 3n/m$.
$\mathcal{H}$ is called {\em almost-balanced} if for a random $h\in \mathcal{H}$ and any set $S\subset [u]$ where $|S|=n$, the expected number of elements from $S$ that are mapped to overflowed values is $O(m)$. For simplicity of presentation, we treat the almost-linear hash functions as linear and this only affects some constant factors in our analysis.

\textbf{Construction}.

\textbf{Initial Construction}. We use an almost-linear almost-balanced hash function $h_1: U \rightarrow [R]$ to map the elements of $A$ to $R$ buckets $A_1,A_2,...,A_R$ such that $A_i = \{x \in A: h_1(x)=i\}$ and the elements of $B$ to $R$ buckets $B_1,B_2,...,B_R$ such that $B_i = \{x \in B: h_1(x)=i\}$. As $h_1$ is almost-balanced the expected size of each bucket is $O(n/R)$. Moreover, buckets with more than $3n/R$ elements, called overflowed buckets, have no more than $O(R)$ elements in total. We save these $O(R)$ elements in lists $L_A$ and $L_B$ (we put elements from overflowed buckets of $A$ in $L_A$ and elements from  overflowed buckets of $B$ in $L_B$). We also sort $A$ and $B$ and save lookup tables for both $A$ and $B$.

We pick another almost-linear almost-balanced hash function $h_2: U \rightarrow [n]$.
For each bucket $A_i$, we create an $n$-length characteristic vector $v_{A_i}$ such that $v_{A_i}[j]=1$ if there is $x \in A_i$ such that $h_2(x)=j$ and $v_{A_i}[j]=0$ if there is no $x \in A_i$ such that $h_2(x)=j$. In the same way we create an $n$-length characteristic vector $v_{B_j}$ for each bucket $B_j$.

\textbf{Quad Trees Construction}. We create a search quad tree for each pair of buckets $A_i$ and $B_j$ following the idea of Goldstein et al.~\cite{GKLP16}. The construction involves calculating the \emph{convolution} of many pairs of vectors. The \emph{convolution} of two vectors $u,v\in \{\mathbb{R}^+\cup\{0\}\}^n$ is a vector $c$, such that $c[k]=\sum_{i=0}^{k}{u[i]v[k-i]}$ for $0 \leq k \leq 2n-2$. Constructing the quad tree is done as follows:

\smallskip

\noindent
\textbf{Quad-Tree-Construction($v_{A_i}$,$v_{B_j}$,$X$)}
\begin{enumerate}[label=(\arabic*)]
  \item For the bottom level of the quad tree:
      \begin{enumerate}[label=(1.\arabic*)]
      \item Partition the characteristic vector $v_{A_i}$ into $\lceil n/X \rceil$ sub-vectors $v_{{A_i}_1},...,v_{{A_i}_{\lceil n/X \rceil}}$ each of them of length $X$.
      \item Pad the last sub-vector with zeroes if needed.
      \item Let $i_1,i_2,...,i_Y$ be the indices of the ones in some sub-vector $v_{{A_i}_k}$.
           If $Y>X/R$
           \begin{enumerate}[label=(1.3.\arabic*)]
             \item Duplicate $v_{{A_i}_k}$ $t=\lceil Y/(X/R) \rceil$ times.
             \item For every $p \in [t]$:
                Save in the $p$th copy of $v_{{A_i}_k}$ just the ones in the indices $i_{(p-1)\cdot(X/R)+1},...,i_{p\cdot(X/R)-1}$.
                Replace all other ones by zeroes.
           \end{enumerate}
      \item Denote the sequence of sub-vectors of $v_{A_i}$ and their duplicates by \\ $P_{A_i}= v_{A_i}^1, v_{A_i}^2,..., v_{A_i}^{cn/X}$ for some constant $c \geq 1$. Order the sub-vectors in $P_{A_i}$ by the locations of the ones. That is, sub-vector $w$ occurs before $u$ in $P_{A_i}$ if the ones in $w$ appear before the ones of $u$ in $v_{A_i}$. A sub-vector $w$ that contains only zeroes and therefore represents a sub-vector $v_{{A_i}_k}$ for some $1<k\leq \lceil n/X \rceil$ without any duplicates appears before all sub-vectors $v_{{A_i}_{k'}}$ for $k'>k$ and their duplicates.
      \item Repeat steps (1.1)-(1.4) for $v_{B_j}$ and create a sequence of sub-vectors \\ $P_{B_j}=v_{B_j}^1, v_{B_j}^2,..., v_{B_j}^{c'n/X}$ for some constant $c' \geq 1$.
      \item Without loss of generality let us assume that $c \geq c'$. Add to the end of the sequence $P_{B_j}$ the vectors $v_{B_j}^{c'n/X+1},...,v_{B_j}^{cn/X}$, such that each of these vectors contains exactly $X$ zeroes.
      \item For each pair of sub-vectors $v_{A_i}^k$ and $v_{B_j}^{\ell}$:
                \begin{enumerate}[label=(1.7.\arabic*)]
                \item Create a node $c_{i,j}^{k,\ell}$ in the quad tree.
                \item Calculate the convolution of $ v_{A_i}^k$ and $v_{B_j}^{\ell}$ and save the result in $c_{i,j}^{k,\ell}$.
                \end{enumerate}
      \end{enumerate}

  \item For the next level of the quad tree upward:
    \begin{enumerate}[label=(2.\arabic*)]
    \item Create a sequence of sub-vectors ${v'}_{A_i}^1, {v'}_{A_i}^2,..., {v'}_{A_i}^{cn/2X}$ such that ${v'}_{A_i}^k$ is the concatenation of $v_{A_i}^{2k-1}$ and $v_{A_i}^{2k}$ from the previous level.
    \item For every ${v'}_{A_i}^k$ if there are overlapping locations in $v_{A_i}^{2k-1}$ and $v_{A_i}^{2k}$ - merge them. That is, if there are elements in both sub-vectors that represent the same interval of $v_{A_i}$, merge all of them in ${v'}_{A_i}^k$ by setting each overlapping location to 1 if any of the two overlapping elements in this location is 1,  and setting each overlapping location to 0 otherwise.
    \item Repeat steps (2.1) and (2.2) for $v_{B_j}$ and create a sequence ${v'}_{B_j}^1, {v'}_{B_j}^2,..., {v'}_{B_j}^{cn/2X}$.
    \item For each pair of sub-vectors ${v'}_{A_i}^k$ and ${v'}_{B_j}^{\ell}$ create a node ${c'}_{i,j}^{k,\ell}$ in the quad tree.
    \item Make the node ${c'}_{i,j}^{k,\ell}$ the parent of 4 nodes from the previous level: \\ $c_{i,j}^{2k-1,\ell-1},c_{i,j}^{2k-1,\ell},c_{i,j}^{2k,\ell-1},c_{i,j}^{2k,2\ell}$.
    \item Calculate the convolution of $ {v'}_{A_i}^k$ and ${v'}_{B_j}^{\ell}$ and save the result in ${c'}_{i,j}^{k,\ell}$. The convolution of ${v'}_{A_i}^k$ and ${v'}_{B_j}^{\ell}$ can be easily calculated using the convolution results that are saved in $c_{i,j}^{2k-1,\ell-1},c_{i,j}^{2k-1,\ell},c_{i,j}^{2k,\ell-1},c_{i,j}^{2k,2\ell}$ from the previous level.
    \end{enumerate}

  \item Repeat step (2) for all the levels up to the root. Notice that in the root we have the complete vectors $v_{A_i}$ and $v_{B_j}$ and we calculate and save their convolution within the root node.
\end{enumerate}

We emphasize that in the bottom level of the quad tree the number of sub-vectors of $v_{A_i}$ including all duplicates is no more than $cn/X$ for some constant $c \geq 1$, as the total number of ones in $v_{A_i}$ is $O(n/R)$. Therefore, the size of the sequence in step (1.4) is $cn/X$.

We call a quad tree such that the length of the sub-vectors in its bottom level is $X$ $X$-quad-tree. We denote the level of the quad tree with sub-vectors of length $Z$ by $\ell_Z$. We emphasize that we consider the length of the sub-vectors for the last notation by their length if we do no merging in any level of the quad tree.

\textbf{Convolution by SetDisjointness}. The convolution $c$ of two $X$-length vectors $v$ and $u$ can be calculated using SetDisjointness in the following way: Let us denote by $v_i$ (for any $0 \leq i \leq X-1$) a $(2X-1)$-length vector, such that $v_i[j+i]=v[j]$ for every $0 \leq j \leq X-1$ and all other elements of $v_i$ are zeroes. It is clear that $v_i$ is the vector $v$ that its elements where shifted by $i$ locations and the empty locations are filled with zeroes. Therefore, we call the vector $v_i$ an $i$-shift of $v$. We define $u_i$ in a similar way. Let us denote by $v^R$ the vector $v$ in reverse order of elements. It is straightforward to observe that $c[j]$ (the $j$th element in the convolution result of $v$ and $u$) equals to the inner product of $v^R_j$ (we note that the reverse operation is done before the shift operation) and $u_{X-1}$. Informally, the complete convolution of $v$ and $u$ can be calculated by the inner product of (padded) $u$ and the reversed version of (padded) $v$ in $X-1$ different shifts. We can reduce the number of shifts to $v$ by shifting both $v$ and $u$. Specifically, the value of $c[j]$ can be obtained by the inner product of $v^R_{j \mod \sqrt{X}}$ and $u_{X-1-\lfloor \frac{j}{\sqrt{X}} \rfloor \cdot \sqrt{X}}$. Therefore, the convolution of $v$ and $u$ can be calculated by the inner product of $O(\sqrt{X})$ shifted versions of both $v$ and $u$.

Each of the $(2X-1)$-length boolean vectors can be represented by a set corresponding to the ones in the vector. Formally, for a vector $w$ we construct a set $S_{w}$ such that $S_{w} = \{j | w[j] = 1\}$. Instead of calculating the inner product of $v^R_{j \mod \sqrt{X}}$ and $u_{X-1-\lfloor \frac{j}{\sqrt{X}} \rfloor \cdot \sqrt{X}}$, we can calculate $|S_{v^R_{j \mod \sqrt{X}}} \cap S_{u_{X-1-\lfloor \frac{j}{\sqrt{X}} \rfloor \cdot \sqrt{X}}}|$ and get the same result. In our query process through the quad tree we just need to know in each node if the value in some position of the convolution within that node is zero or not. Thus, instead of calculating $|S_{v^R_{j \mod \sqrt{X}}} \cap S_{u_{X-1-\lfloor \frac{j}{\sqrt{X}} \rfloor \cdot \sqrt{X}}}|$ we just need to determine if $S_{v^R_{j \mod \sqrt{X}}} \cap S_{u_{X-1-\lfloor \frac{j}{\sqrt{X}} \rfloor \cdot \sqrt{X}}} = \emptyset$ or not. All in all, the convolution of two $X$-length vectors $v$ and $u$ can be determined by a SetDisjointness instance that contains $O(\sqrt{X})$ sets such that their size equals to the number of ones in either $v$ or $u$.
Consequently, instead of saving explicitly the convolution result in each node in some level of the quad tree that represents sub-vectors of length $X$, we can create an instance of SetDisjointness that can be used to determine if a specific position in a convolution result is zero or not.

\textbf{Hybrid Quad Tree Construction}. Using the idea from the previous paragraph we modify the quad tree construction in the following way: We construct in the regular way, that is explained in detail above, each of the quad trees until level $\ell_{X^{1-\epsilon}}$. From level $\ell_{X^{1-\epsilon}}$ to level $\ell_{X^{1+\epsilon}}$ we do not save the convolution results explicitly in the quad tree for each level, but rather we create a SetDisjointness instance that can be used to answer if a specific position in a convolution result is zero or not. This is an hybrid construction in which we create an $(X^{1+\epsilon})$-quad-tree that the bottom $X^{2\epsilon}$ levels are not saved explicitly. Instead, the information for these bottom levels is determined by the SetDisjointness instances we create. These levels are called the implicit levels of the hybrid quad tree while the levels in which we save the convolution results explicitly are called the explicit levels of the hybrid quad tree.

\smallskip

\textbf{Query}.

Given a query integer number $z$, we search for a pair of integers $x \in A$ and $y \in B$ such that $x+y=z$. First of all, we check for each element $x \in L_A$ if there is $y \in B$ such that $x+y=z$ and we also check for each element $y \in L_B$ if there is $x \in A$ such that $x+y=z$. This can be done easily in $\tilde{O}(R)$ time using the sorted versions of $A$ and $B$. Then, if $x$ is in bucket $A_i$ then by the (almost) linearity property of $h_1$ we expect $y$ to be in bucket $B_j$ such that $j=i-h_1(z)$.
In order to find out if there is $x \in A_i$ and $y \in B_j$ such that $x+y=z$ we can calculate the convolution of $v_{A_i}$ and $v_{B_j}$. Denote the vector that contains their convolution result by $C_{i,j}$. If $C_{i,j}[h_2(z)] = 0$ then there are no $x \in A_i$ and $y \in B_j$ such that $x+y=z$. However, if $C_{i,j}[h_2(z)] \neq 0$ then there may be $x \in A_i$ and $y \in B_j$ such that $x+y=z$, but it may also be the case that $h_2(x)+h_2(y)=h_2(z)$ while $x+y \neq z$. Therefore, in order to verify if there are $x \in A_i$ and $y \in B_j$ such that $x+y=z$, we need to find all pairs of $x' \in A_i$ and $y' \in B_j$ such that $h_2(x')+h_2(y')=h_2(z)$ and check if indeed $x'+y'=z$. There are exactly $C_{i,j}[h_2(z)]$ such pairs, which are also called witnesses.

In order to efficiently find the witnesses of $C_{i,j}[h_2(z)]$, we use the hybrid quad tree we have constructed for buckets $A_i$ and $B_j$ in the following way: We start at the root of the hybrid quad tree if the convolution result in the root is non-zero at location $h_2(z)$, we look at the children of the root node and continue the search at each child that contains a non-zero value in the convolution result it saves in the index that corresponds to index $h_2(z)$ of the convolution in the root. This way we continue downward all the way to the leaves. In the levels of the hybrid quad tree that the convolution results are not saved explicitly we query the SetDisjointness instances in order to get an indication for the existence of a witness in the search path from the root.

If we reach a leaf of the quad tree and the convolution result within this leaf is non-zero in the location that corresponds to the index $h_2(z)$ of the convolution in the root, then we do a "2SUM-like" search within this leaf.

The "2SUM-like" search is done as follows: Let us assume that the leaf represents 2 sub-vectors $v_{A_i}^k$ and $v_{B_j}^{\ell}$. We recover the original elements that these sub-vectors represent. Let the array $A_i^k$ contain all $x \in A_i$ such that there is one in $v_{A_i}^k$ that corresponds to $h_2(x)$. In the same way we construct array $B_j^{\ell}$. We sort both $A_i^k$ and $B_j^{\ell}$. Let $d$ be the size of $A_i^k$. Then, if $A_i^k[d-1]+B_j^{\ell}[0]=z$ we are done. Otherwise, if the sum is greater than $z$ we check if $A_i^k[d-2]+B_j^{\ell}[0]=z$ and if it is smaller than $z$ we check if $A_i^k[d-1]+B_j^{\ell}[1]=z$. This way we continue until we get to the end of one of the arrays or find a pair of elements that its sum equals $z$.

\smallskip

\textbf{Analysis}. There are $R^2$ possible pairs of buckets $A_i$ and $B_j$. Therefore, we  construct $R^2$ quad trees. In order to save the convolution results in all the nodes in an explicit level $\ell_Z$ of some hybrid quad tree, the space we need to use is $O(n^2/Z)$ (for each pair $A_i$ and $B_j$, there are $O(n^2/Z^2)$ pairs of sub-vectors one from $v_{A_i}$ and the other from $v_{B_j}$. The size of the convolution of the two sub-vectors is $O(Z)$). Therefore, the total space for constructing the explicit levels of the hybrid quad trees is $\tilde{O}(n^2/X^{1+\epsilon} \cdot R^2)$ (a level that is closer to the root requires less space than a level that is farther away from the root. There are at most $\log{n}$ levels in each quad tree. The bottom explicit level is $\ell_{X^{1+\epsilon}}$). This is also the preprocessing time for constructing these levels of the hybrid quad trees as the convolution of two $n$-length vector can be calculated in $\tilde{O}(n)$ time.

From level $\ell_{X^{1-\epsilon}}$ to level $\ell_{X^{1+\epsilon}}$ we do not save the convolution results explicitly in the quad tree for each level, but rather we create a SetDisjointness instance that can be used to answer if a specific position in a convolution result is zero or not, as explained in detail previously. Let us analyse the cost of the SetDisjointness instance for some implicit level $\ell_Z$. We have $O(R)$ buckets. Each bucket is represented by a characteristic vector that is partitioned into $O(n/Z)$ parts of length $Z$, such that each part contains $O(Z/R)$ ones. For each sub-vector we create $O(\sqrt{Z})$ sets that represent $O(\sqrt{Z})$ shifts of the sub-vector as explained previously. Therefore, the total number of sets we have is $O(R \cdot n/Z \cdot \sqrt{Z})=O(nR/\sqrt{Z})$. Each set contains $O(Z/R)$ elements, so the total number of elements in all sets is $O(R \cdot n/\sqrt{Z} \cdot Z/R)=O(n\sqrt{Z})$. The universe of all the elements in the sets is $Z$.

For a query integer $z$ we have $O(R)$ pairs of buckets $A_i$ and $B_j$ in which we may have two elements, one from each array, that sum up to $z$ (as $j=i-h_1(z)$). For a pair of buckets $A_i$ and $B_j$, we search for all the witnesses of $C_{i,j}[h_2(z)]$ in the quad tree of $A_i$ and $B_j$. Searching for a witness from the root to a leaf of the quad tree can be done in $O(\log{n})$ time in the levels we save the convolution explicitly and a constant number of queries for each SetDisjointness instance. Within a leaf we do a "2SUM-like" search on 2 arrays that contain $O(X^{1-\epsilon}/R)$ elements. Therefore, the total search time per witness is at most $\tilde{O}(X^{1-\epsilon}/R)$.
A false witness is a witness pair of elements $(x,y)$ such that $x+y \neq z$, but $h_2(x)+h_2(y) = h_2(z)$. The probability that a pair of numbers $(x,y)$ is a false witness is $1/n$ (because the range of $h_2$ is $[n]$). Therefore, the expected number of false witnesses within a specific pair of buckets is at most $O((n/R)^2\cdot 1/n) =O(n/R^2)$ by the union-bound (notice that the number of elements in each bucket is $O(n/R)$). Consequently, the total expected number of false witnesses is at most $O(Rn/R^2)=O(n/R)$.
As explained before, the total search time per witness is at most $\tilde{O}(X^{1-\epsilon}/R)$. Thus, the total query time is $\tilde{O}(nX^{1-\epsilon}/R^2)$.

All in all, the total space and preprocessing time that is required by the explicit levels of the $O(R^2)$ hybrid quad trees is $\tilde{O}(n^2/X^{1+\epsilon} \cdot R^2)$ which is truly subquadratic in $n$ if we set $R=\sqrt{X}$. Moreover, the total query time is $\tilde{O}(nX^{1-\epsilon}/R^2)$ which is truly sublinear in $n$ if we set $R=\sqrt{X}$. Therefore, by setting $R=\sqrt{X}$ we have that the space and preprocessing time of the reduction is truly subquadratic in $n$. Additionally, a query can be answer by at most $O(n/\sqrt{X})$ queries to each SetDisjointness instance plus some additional time that is truly sublinear in $n$.
\end{proof}

\begin{theorem}\label{thm:bounded_setdisjointness_time_clb}
Any solution to SetDisjointness with sets $S_1,S_2,...,S_m \subseteq [u]$ for any value of $u \in [N^{\delta},N]$, such that $N = \sum_{i=1}^{m}{|S_i|}$ and $\delta>0$, must either have $\Omega(m^{2-o(1)})$ preprocessing time or have $\Omega(u^{1/2-o(1)})$ query time, unless the 3SUM Conjecture is false.
\end{theorem}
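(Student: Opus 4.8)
The plan is to prove Theorem~\ref{thm:bounded_setdisjointness_time_clb} by composing two reductions: the standard self-reduction from 3SUM to 3SUM-Indexing, and the reduction of Lemma~\ref{lem:3sum_indexing_setdisjointness_reduction} from 3SUM-Indexing to bounded-universe SetDisjointness. Recall that 3SUM on an $n$-element set $C$ reduces to a single preprocessing call plus $n$ queries of 3SUM-Indexing: set $A=B=C$, preprocess, and for each $c\in C$ ask whether there are $a\in A$ and $b\in B$ with $a+b=-c$ (the usual shifting trick rules out degenerate non-distinct solutions). Hence a 3SUM-Indexing data structure with preprocessing time $P(n)$ and query time $Q(n)$ yields a 3SUM algorithm running in time $O(P(n)+nQ(n))$, so it suffices to build, out of a hypothetical fast bounded-universe SetDisjointness solution, a 3SUM-Indexing data structure with truly subquadratic preprocessing and truly sublinear query time.

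I would assume for contradiction that the 3SUM Conjecture holds while some algorithm $A$ solves SetDisjointness on $m$ sets from $[u]$ (with $u\in[N^{\delta},N]$) using $O(m^{2-\epsilon_1})$ preprocessing time and $O(u^{1/2-\epsilon_2})$ query time; put $\epsilon_0=\min(\epsilon_1,\epsilon_2)>0$. Given a 3SUM-Indexing instance on $n$ numbers with universe $[n^{O(1)}]$ (which the almost-linear almost-balanced hash functions in Lemma~\ref{lem:3sum_indexing_setdisjointness_reduction} handle), invoke that lemma with a polynomial parameter $X=n^{\theta}$, $\theta\in(0,1)$, and a sufficiently small $\epsilon<\epsilon_0$, chosen so that every universe $u_i=X^{1+\epsilon}/2^{i-1}\in[\Theta(X^{1-\epsilon}),X^{1+\epsilon}]$ produced falls in the admissible window $[N_i^{\delta},N_i]$, where $N_i=n\sqrt{u_i}$ (for an instance whose universe would be too small, a query can instead be answered trivially in $O(\sqrt{u_i})$ time, which is cheap enough). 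The lemma outputs $O(\log n)$ such instances $SD_i$, each with $m_i=n\sqrt{X/u_i}=O(nX^{\epsilon/2})$ sets.

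Running $A$ on each $SD_i$ costs $O(m_i^{2-\epsilon_0})=O(n^{2-\epsilon_0}X^{\epsilon})=O(n^{2-\epsilon_0+\theta\epsilon})=o(n^2)$ preprocessing (using $\epsilon<\epsilon_0$ and $\theta<1$), and $O(u_i^{1/2-\epsilon_0})=O(X^{1/2-\epsilon_0/2})$ per SetDisjointness query (using $u_i\le X^{1+\epsilon}$ with $\epsilon$ small). Summing over the $O(\log n)$ instances and adding the truly subquadratic cost of the reduction itself gives a 3SUM-Indexing data structure with truly subquadratic preprocessing. Each 3SUM-Indexing query costs $O(n/\sqrt{X})$ SetDisjointness queries per instance plus a truly sublinear additive term, i.e.\ at most $\tilde{O}((n/\sqrt{X})\cdot X^{1/2-\epsilon_0/2})=\tilde{O}(n\,X^{-\epsilon_0/2})=O(n^{1-\theta\epsilon_0/2})$. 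Feeding $P(n)=n^{2-\Omega(1)}$ and $Q(n)=n^{1-\Omega(1)}$ into the self-reduction solves 3SUM in $O(n^{2-\Omega(1)})$ time, contradicting the 3SUM Conjecture; and the theorem follows since ``$\Omega(m^{2-o(1)})$ preprocessing or $\Omega(u^{1/2-o(1)})$ query'' is exactly the negation of the assumption just refuted.

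I expect the main obstacle to be the parameter bookkeeping in the second paragraph: one must pick the power $X=n^{\theta}$ and the slack $\epsilon$ so that, simultaneously, (a) all produced universes $u_i$ land in the admissible range $[N_i^{\delta},N_i]$ (this is where the precise regime of $\delta$, together with the trivial fallback for over-small universes, matters, since Lemma~\ref{lem:3sum_indexing_setdisjointness_reduction} forces $u_i\le n^{1+\epsilon}$ while $N_i$ can be as large as $n^{(3+\epsilon)/2}$); (b) the preprocessing blow-up factor $X^{\epsilon}$ is dominated by the $n^{\epsilon_0}$ saving, which forces $\epsilon<\epsilon_0$; and (c) the query amplification $n/\sqrt{X}$ times the per-call saving $X^{1/2-\epsilon_0/2}$ stays truly sublinear, which holds as soon as $\epsilon_0>0$. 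The remaining steps — verifying that the $O(\log n)$ factors are absorbed into the $o(1)$ exponents and that the $SD_i$ meet the hypotheses of $A$ — are routine.
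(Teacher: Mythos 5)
Your proposal is correct and takes essentially the same approach as the paper: reduce 3SUM to 3SUM-Indexing (one preprocessing call plus $n$ queries), apply Lemma~\ref{lem:3sum_indexing_setdisjointness_reduction} to obtain $O(\log n)$ bounded-universe SetDisjointness instances, bound $m_i = O(nX^{\epsilon/2})$ and $u_i \le X^{1+\epsilon}$, and choose $\epsilon$ small relative to $\epsilon_0=\min(\epsilon_1,\epsilon_2)$ so that the total preprocessing and query cost is truly subquadratic, contradicting the 3SUM Conjecture. The only cosmetic difference is that you parameterize $X=n^\theta$ with $\theta<1$ while the paper effectively works with $X$ up to $n$; both lead to the same contradiction.
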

\begin{proof}
Given an instance of the 3SUM problem that contains 3 arrays $A,B$ and $C$ with $n$ numbers in each of them, we can solve this instance simply by creating a 3SUM indexing instance with arrays $A$ and $B$ and $n$ queries - one for each number in $C$. Thus, using the previous lemma the given 3SUM instance can be reduced for any integer value of $X$ in $[n^{\delta},n]$ (for any $\delta>0$) and for any $\epsilon>0$ to $2\epsilon\log{X}$ instances of SetDisjointness $SD_1,SD_2,...,SD_{2\epsilon\log{X}}$. For any $1 \leq i \leq 2\epsilon\log{X}$, instance $SD_i$ have $N=n\sqrt{u_i}$ elements from universe $[u_i]$ and $m=n\sqrt{\frac{X}{u_i}}$ sets that each one of them is of size $O(\sqrt{u_i})$, where $u_i=X^{1+\epsilon}/2^{i-1}$. The total time for this reduction is $O(n^{2-\epsilon_1})$ for some $\epsilon_1>0$, and the total number of queries is $\tilde{O}(n^2/\sqrt{X})$. Consequently, if we assume to the contradiction that there is an algorithm that solves SetDisjointness on $m$ sets from a universe $[u]$ with $O(m^{2-\epsilon_2})$ preprocessing time for some $\epsilon_2 > 0$ and $O(u^{1/2-\epsilon_3})$ query time for some $0<\epsilon_3 \leq 1/2 $, then we have a solution to 3SUM with $O(n^{2-\epsilon_1}) + \sum_{i=1}^{2\epsilon\log{X}}{O((n\sqrt{\frac{X}{u_i}})^{2-\epsilon_2} + \frac{n^2}{\sqrt{X}}u_i^{1/2-\epsilon_3})}$ time. We have that for any $i$, $u_i \leq X^{1+\epsilon}$ and $\sqrt{\frac{X}{u_i}}\leq \sqrt{\frac{X}{X^{1-\epsilon}}}=X^{\epsilon/2}$. Therefore, $\sum_{i=1}^{2\epsilon\log{X}}{O((n\sqrt{\frac{X}{u_i}})^{2-\epsilon_2} + \frac{n^2}{\sqrt{X}}u_i^{1/2-\epsilon_3}))} =\tilde{O}(n^{(1+\epsilon/2)(2-\epsilon_2)}+\frac{n^2}{\sqrt{X}}X^{(1+\epsilon)(1/2-\epsilon_3)})$. Thus, by setting $\epsilon=\min(\epsilon_2,\epsilon_3)$ we have a total running time that is truly subquadratic in $n$. This contradicts the 3SUM Conjecture.
\end{proof}

Another implication of our reduction in Lemma~\ref{lem:3sum_indexing_setdisjointness_reduction} is a similar reduction from 3SUM to SetIntersection. This reduction leads to a similar conditional lower bound on the preprocessing and query time tradeoff of SetIntersection with bounded universe. This is done in Appendix~\ref{sec:clb_setintersection}.

\section{Applications}\label{sec:applications}

In this section we present several applications of our lower bounds on SetDisjointness and SetIntersection with bounded universe. Several hardness results on the reporting variants of the problems in this section appear in Appendix~\ref{sec:hardness_of_reporting}

\subsection{Range Mode}

As mentioned in the introduction, the range mode problem can be solved using $S$ space and $T$ query time such that: $S \cdot T^2 = \tilde{O}(n^2)$~\cite{CDLMW14,KMS05}. In the following Theorem we prove that $S \cdot T^4 = \tilde{\Omega}(n^2)$. This lower bound is proved based on the Strong SetDisjointness Conjecture using Theorem~\ref{thm:SD_small_universe_LB}. We note that if the lower bound on the query time in Theorem~\ref{thm:SD_small_universe_LB} was $\Omega(u^{1-o(1)})$ instead of $\Omega(u^{1/2-o(1)})$ then the lower bound and upper bound were tight.

\begin{theorem}\label{thm:range_mode_space_clb}
Any data structure that answers Range Mode Queries in $T$ time on a string of length $n$ must use $S=\tilde{\Omega}(n^2/T^4)$ space, unless the Strong SetDisjointness Conjecture is false.
\end{theorem}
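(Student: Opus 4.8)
The plan is to reduce SetDisjointness with a bounded universe to Range Mode and then apply Theorem~\ref{thm:SD_small_universe_LB}. I would work with SetDisjointness instances consisting of $m$ sets $S_1,\dots,S_m\subseteq[u]$, each of size $\Theta(\sqrt u)$, so that $N=\sum_i|S_i|=\Theta(m\sqrt u)$ (this is the regime in which Theorem~\ref{thm:SD_small_universe_LB} and Corollary~\ref{thm:SD_small_universe_LB_for_specific_m} are non-trivial). From such an instance I would build a Range Mode array $A$ of length $\tilde\Theta(N)$ that encodes the incidence structure of the sets, adapting to the data-structure (and rectangular) setting the reduction from Boolean matrix multiplication to range mode of Chan et al.~\cite{CDLMW14}: intuitively, $A$ encodes the product of the $m\times u$ incidence matrix of $S_1,\dots,S_m$ with its transpose, and for every pair $(i,j)$ there is a family $\rho(i,j)$ of $\tilde O(1)$ query ranges, computable from $(i,j)$ in $\tilde O(1)$ time, whose mode frequencies decide whether $S_i\cap S_j=\emptyset$. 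Consequently, a Range Mode data structure that uses $S$ space and $T$ query time on strings of length $n$ yields, for every such SetDisjointness instance, a data structure that uses $\tilde O(S)$ space and answers disjointness queries in $\tilde O(T)$ time, where $n=\tilde\Theta(N)$.

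I would then assume, towards a contradiction, that the claimed bound fails, i.e.\ that there is a Range Mode data structure with $S\cdot T^4=O(n^{2-\epsilon})$ for some $\epsilon>0$. Fix a target size $N$ and let $T=N^{\tau}$ be the query time at length $n=\tilde\Theta(N)$; since $S\ge 1$ this forces $\tau\le(2-\epsilon)/4<1/2$, and $S\le N^{2-\epsilon-4\tau}$. Choose $u=N^{2\tau+\epsilon/2}$ and $m=N^{1-\tau-\epsilon/4}$, so that $N=\Theta(m\sqrt u)$, $N^{\epsilon/2}\le u\le N$, and $m$ lies in the admissible range of Corollary~\ref{thm:SD_small_universe_LB_for_specific_m}; in particular Theorem~\ref{thm:SD_small_universe_LB} applies to these instances (with $\delta=\epsilon/2$). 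The SetDisjointness data structure produced by the reduction then uses space $\tilde O(S)=\tilde O(N^{2-\epsilon-4\tau})$ with query time $\tilde O(T)=\tilde O(N^{\tau})$. But $m^{2}=N^{2-2\tau-\epsilon/2}$ strictly dominates $N^{2-\epsilon-4\tau}$ (the exponent gap is $2\tau+\epsilon/2$), and $u^{1/2}=N^{\tau+\epsilon/4}$ strictly dominates $N^{\tau}$; hence this data structure uses $o(m^{2-o(1)})$ space and has $o(u^{1/2-o(1)})$ query time, contradicting Theorem~\ref{thm:SD_small_universe_LB}. Therefore $S\cdot T^4=\tilde\Omega(n^2)$, i.e.\ $S=\tilde\Omega(n^2/T^4)$.

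Two remarks. First, the balancing goes through for \emph{every} admissible $\tau\in[0,(2-\epsilon)/4]$ precisely because the reduction lives in the regime $n=\tilde\Theta(m\sqrt u)=\tilde\Theta(N)$, so that re-splitting $N$ into $m$ and $u$ leaves the length of $A$ unchanged; a reduction producing, say, an array of length $\Theta(mu)$ would couple the two scales and break the argument when $T$ is small. It is also the $u^{1/2}$-versus-$u$ slack in Theorem~\ref{thm:SD_small_universe_LB} that turns into the $T^{4}$-versus-$T^{2}$ gap between this lower bound and the known $S\cdot T^{2}=\tilde O(n^2)$ upper bound. Second, I expect the crux to be the construction of $A$: any contiguous range touching the blocks of $S_i$ and $S_j$ also touches every block between them, so isolating the intersection of a single prescribed pair of sets cannot be done by a naive block layout and requires the matrix-encoding device behind~\cite{CDLMW14} (with care that each value's multiplicity inside the query range stays bounded and is nonzero exactly when the two sets share an element), and one must also verify that this encoding can be made sparse enough to have total length $\tilde\Theta(N)$ rather than $\tilde\Theta(mu)$. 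A minor technical point is that $\tau$, and hence $m$ and $u$, depend on $N$, so the contradiction is obtained along a suitable sequence of sizes $N$, and the polylogarithmic and $n^{o(1)}$ overheads of the reduction are absorbed into the $o(1)$ exponents and the $\tilde\Omega$ notation.
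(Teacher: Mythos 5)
Your high-level plan matches the paper's: reduce SetDisjointness with bounded universe to Range Mode and invoke Theorem~\ref{thm:SD_small_universe_LB} and Corollary~\ref{thm:SD_small_universe_LB_for_specific_m}. However, there is a genuine gap. You posit a Range Mode array of length $\tilde\Theta(N)=\tilde\Theta(m\sqrt u)$ encoding the $m\times u$ incidence matrix via a sparsified version of the Chan et~al.\ BMM-to-range-mode device, you identify the construction of this array as ``the crux,'' and you do not supply it. The paper does not need (and does not have) such a sparse encoding: its string is a concatenation of $2m$ blocks, each a permutation of $[u]$, for total length exactly $n=2mu$. The idea that defeats the objection you raise (``any contiguous range touching the blocks of $S_i$ and $S_j$ also touches every block between them'') is a prefix/suffix layout inside each block: in the first half, block $i$ lists $[u]\setminus S_i$ then $S_i$; in the second half it lists $S_i$ then $[u]\setminus S_i$. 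For a query range starting just after the $[u]\setminus S_i$ prefix of block $i$ (first half) and ending at the end of the $S_j$ prefix of block $j$ (second half), every intervening block is a full permutation of $[u]$ and hence contributes exactly $+1$ to the count of \emph{every} symbol, so the mode frequency is $m-i+j+1$ iff $S_i\cap S_j\neq\emptyset$ and at most $m-i+j$ otherwise. No matrix-product encoding, no $\tilde O(1)$-query battery, and no sparsification are required.

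Your justification for needing the sparse array is also misplaced. Take your own choices $u=N^{2\tau+\epsilon/2}$ and $m=N^{1-\tau-\epsilon/4}$, but with the dense length $n=2mu=\Theta(N^{1+\tau+\epsilon/4})$ and $T=N^{\tau}$. Then $S=O(n^{2-\epsilon}/T^{4})=O(N^{2-2\tau-\epsilon/2-\epsilon\tau-\epsilon^{2}/4})=O(m^{2-\Omega(\epsilon^{2})})$ and $T=O(u^{1/2-\Omega(\epsilon)})$, which already contradicts Corollary~\ref{thm:SD_small_universe_LB_for_specific_m} for every $\tau\ge 0$, including $\tau=0$. The only place the dense length actually bites is the additive $\Theta(mu)$ space needed to hold the string: one needs $mu=O(m^{2-o(1)})$, which fails for $\tau\gtrsim 1/3$. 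But in that regime $n^{2}/T^{4}\lesssim n$ and the claimed lower bound is trivial (storing the input already costs $\Omega(n)$). The paper handles this by simply restricting to $u\le N^{2/3-\epsilon}$, which is exactly the nontrivial range of $T$, and then verifies $mu=\tilde O(m^{2-\epsilon})$. So the dense $n=2mu$ reduction suffices, and a $\tilde\Theta(N)$-length encoding is neither provided by you nor needed. What you are missing is the concrete prefix/suffix block construction, which is the actual content of the paper's proof.
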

\begin{proof}
We use the idea of Chan et al.~\cite{CDLMW14} and apply our theorem on the hardness of SetDisjointness with bounded universe. We begin with an instance of SetIntersection with sets $S_1,S_2,...,S_m \subseteq [u]$ such that $u \in [N^{\delta},N]$, $N = \sum_{i=1}^{m}{|S_i|}$ and $\delta>0$. We create a string $STR$ that is the concatenation of two string $T_1$ and $T_2$ of equal length. The string $T_1$ is the concatenation of the strings $T_{11}, T_{12},...,T_{1m}$. For each $i$ the string $T_{1i}$ is of length $u$ and each character in it is a different number in $[u]$. The prefix of $T_{1i}$ contains all the numbers in $[u] \setminus S_i$ in a sorted order. This prefix is followed by all the numbers in $S_i$ in a sorted order. This is called the suffix of $T_{1i}$. $T_2$ is constructed very similar to $T_1$ but with a change in the order of the suffix and prefix. Specifically, the string $T_2$ is given by the concatenation of the strings $T_{21}, T_{22},...,T_{2m}$. For each $i$ the string $T_{2i}$ is of length $u$ and each character in it is a different number in $[u]$. The prefix of $T_{2i}$ contains all the numbers in $S_i$ in a sorted order. This prefix is followed by all the numbers in $[u] \setminus S_i$ in a sorted order. This is called the suffix of $T_{2i}$. For every $1 \leq i \leq m$, let us denote by $a_i$ the index where the prefix of $T_{1i}$ ends and by $b_i$ the index where the prefix of $T_{2i}$ ends.

The string $STR$ is preprocessed for range mode queries. Then, given a query pair $(i,j)$ for SetDisjointness, we need to decide if $S_i \cap S_j=\emptyset $ or not. This is done by a range mode query for the range $[a_i + 1, b_j]$. For every $p \in [2]$ and $q \in [m]$, the string $T_{pq}$ contains characters that represent all the numbers in $[u]$, such that each of these numbers occurs exactly once in the string. Between $T_{1i}$ and $T_{2j}$ we have $m-i+j-1$ substrings that each of them contains all the characters from $[u]$. Therefore, each character occurs $m-i+j-1$ times between $T_{1i}$ and $T_{2j}$. The suffix of $T_{1i}$ starting at index $a_i+1$ contains all the characters that represent the elements of $S_i$, while the prefix of $T_{2j}$ ending at index $b_j$ contains all the characters that represent the elements of $S_j$. Consequently, if there is an intersection between $S_i$ and $S_j$ we will have at least one character that occurs in both the suffix of $T_{1i}$ and the prefix of $T_{2j}$. Thus, the mode of the range $[a_i + 1, b_j]$ will be $m-i+j+1$ if $S_i \cap S_j \neq \emptyset$, and less than $m-i+j+1$ if the $S_i \cap S_j = \emptyset$. Therefore, if we get from the range mode query a character $c$ that occurs $m-i+j+1$ times in the query range we know that the intersection is not empty, and if not we know that the intersection is empty. Even if the range mode query does not return the frequency of the mode within the query range, but rather just the mode element itself, we can save a hash table for every input set and use this tables to check in constant time if the returned element occurs in both $S_i$ and $S_j$.

Consequently, an instance of SetDisjointness with $m$ sets from universe $[u]$ (such that $u \in [N^{\delta},N]$, $N = \sum_{i=1}^{m}{|S_i|}$ and $\delta>0$), can be reduced to an instance of the range mode problem with a string of length $n=2mu$, such that every query to the SetDisjointness instance can be answered by a query to the range mode instance. Let us assume to the contrary that the range mode problem can be solved by a data structure that answers queries in $\tilde{O}(T)$ time per query using $\tilde{O}(S)$ space such that $S \cdot T^4 = \tilde{O}(n^{2-\epsilon})$. Let $T=\tilde{O}(u^{1/2-\epsilon/4})$, we have that $S=\tilde{O}(n^{2\epsilon}/T^4) = \tilde{O}((mu)^{2-\epsilon}/u^{4(1/2-\epsilon/4)}) = \tilde{O}(m^{2-\epsilon}u^{2-\epsilon}/u^{2-\epsilon)}) = \tilde{O}(m^{2-\epsilon})$. Therefore, we have a solution to SetDisjointness with $m$ sets from universe $[u]$ with query time $\tilde{O}(u^{1/2-\epsilon/4})$ and space $\tilde{O}(mu + m^{2-\epsilon})$ (we add $mu$ to the space usage, as we must at least save the string $ST$). According to Corollary~\ref{thm:SD_small_universe_LB_for_specific_m} the reduction from general SetDisjointness to SetDisjointness with bounded universe holds for $N/\sqrt{u} \leq m$. Therefore, for any value of $u \leq N^{2/3-\epsilon}$ we have that $\sqrt{u} \leq N^{1/3-\epsilon/2}$. Thus, the following holds: $\sqrt{u} \leq N^{1/3-\epsilon/2} \Rightarrow \frac{1}{N^{1/3-\epsilon/2}} \leq \frac{1}{\sqrt{u}} \Rightarrow \frac{N}{N^{1/3-\epsilon/2}} \leq \frac{N}{\sqrt{u}} \Rightarrow N^{2/3+\epsilon/2} \leq \frac{N}{\sqrt{u}} \leq m \Rightarrow N^{2/3+\epsilon/2-\frac{2}{3}\epsilon-\frac{\epsilon^2}{2}} \leq m^{1-\epsilon}$. Consequently, we have that $u \leq N^{2/3-\epsilon} < N^{2/3-\epsilon/6-\epsilon/2} \leq  N^{2/3-\epsilon/6-\epsilon^2/2} \leq m$. All in all, for any $u\leq N^{2/3-\epsilon}$ the reduction holds and $mu=\tilde{O}(m^{2-\epsilon})$. Consequently, the total space for solving SetDisjointness with bounded universe using our reduction to the range mode problem is $\tilde{O}(m^{2-\epsilon})$ and the query time is $\tilde{O}(u^{1/2-\epsilon/4})$. This contradicts the Strong SetDisjointness Conjecture according to Corollary~\ref{thm:SD_small_universe_LB_for_specific_m}.
\end{proof}

Using Theorem~\ref{thm:bounded_setdisjointness_time_clb} and the same idea from the proof of Theorem~\ref{thm:range_mode_space_clb}, we obtain the following result regarding the preprocessing and query time tradeoff for solving the range mode problem:
\begin{corollary}\label{crl:range_mode_time_clb}
Any data structure that answers Range Mode Queries in $T$ time on a string of length $n$ must have $P=\tilde{\Omega}(n^2/T^4)$ preprocessing time, unless the 3SUM Conjecture is false.
\end{corollary}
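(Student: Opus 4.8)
The plan is to re-use, essentially verbatim, the reduction from SetDisjointness with bounded universe to Range Mode that appears in the proof of Theorem~\ref{thm:range_mode_space_clb}, and merely to replace the accounting of \emph{space} by an accounting of \emph{preprocessing time}, invoking Theorem~\ref{thm:bounded_setdisjointness_time_clb} in place of the Strong SetDisjointness Conjecture. Concretely, starting from an instance of SetDisjointness with sets $S_1,\dots,S_m \subseteq [u]$ (with $u \in [N^{\delta},N]$ and $N=\sum_i|S_i|$), I would build the string $STR=T_1T_2$ exactly as before, where each block $T_{1i}$ lists $[u]\setminus S_i$ followed by $S_i$ and each block $T_{2i}$ lists $S_i$ followed by $[u]\setminus S_i$. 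This string has length $n=2mu$, it can be written down together with a hash table for every input set in $\tilde{O}(mu)$ time, and, as established in the proof of Theorem~\ref{thm:range_mode_space_clb}, every SetDisjointness query $(i,j)$ is answered by the single Range Mode query on the interval $[a_i+1,b_j]$.

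Next I would assume, for contradiction, that Range Mode on a length-$n$ string admits a data structure with query time $\tilde{O}(T)$ and preprocessing time $P=\tilde{O}(n^{2-\epsilon}/T^4)$ for some $\epsilon>0$. Substituting $n=2mu$ and choosing $T=u^{1/2-\epsilon/4}$, so that $T^4=u^{2-\epsilon}$, gives $P=\tilde{O}((mu)^{2-\epsilon}/u^{2-\epsilon})=\tilde{O}(m^{2-\epsilon})$, which is the same collapse of exponents as in the space calculation. Composing the reduction with this hypothetical data structure then yields a solution to SetDisjointness with $m$ sets over $[u]$ whose preprocessing time is $\tilde{O}(mu+m^{2-\epsilon})$ and whose query time is $\tilde{O}(u^{1/2-\epsilon/4})$.

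The remaining point, and the only real obstacle, is to certify that the additive $\tilde{O}(mu)$ term (the unavoidable cost of writing $STR$ down) does not spoil the bound, i.e.\ that $mu=\tilde{O}(m^{2-\epsilon})$, equivalently $u=\tilde{O}(m^{1-\epsilon})$. This is handled exactly as in Theorem~\ref{thm:range_mode_space_clb}: the hard instances produced by Lemma~\ref{lem:3sum_indexing_setdisjointness_reduction}, and hence by Theorem~\ref{thm:bounded_setdisjointness_time_clb}, have $m$ in the range $[\Theta(N/\sqrt{u}),\Theta(N/u^{1/2-\epsilon})]$ (just as Corollary~\ref{thm:SD_small_universe_LB_for_specific_m} records for the space version), so restricting attention to $u\le N^{2/3-\epsilon}$ forces $\sqrt{u}\le N^{1/3-\epsilon/2}$, hence $u\le N/\sqrt{u}\le m$ with enough slack in the exponent to absorb the extra $\epsilon$ — precisely the chain of inequalities displayed in the proof of Theorem~\ref{thm:range_mode_space_clb}. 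In that regime the preprocessing time of the constructed SetDisjointness solution is $\tilde{O}(m^{2-\epsilon})$ and its query time is $\tilde{O}(u^{1/2-\epsilon/4})$, so it is neither $\Omega(m^{2-o(1)})$ in preprocessing nor $\Omega(u^{1/2-o(1)})$ in query time, contradicting Theorem~\ref{thm:bounded_setdisjointness_time_clb} and therefore the 3SUM Conjecture. The entire argument is thus the space proof of Theorem~\ref{thm:range_mode_space_clb} with ``space'' systematically replaced by ``preprocessing time'' and the Strong SetDisjointness Conjecture replaced by Theorem~\ref{thm:bounded_setdisjointness_time_clb}; I expect the exponent bookkeeping (keeping both the $mu$ term and the reduction's overhead below $m^{2-o(1)}$) to be the most delicate part, but it goes through unchanged.
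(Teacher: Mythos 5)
Your proposal is correct and matches the paper's intended argument (the paper itself only gestures at the proof, saying to reuse the construction from Theorem~\ref{thm:range_mode_space_clb} with Theorem~\ref{thm:bounded_setdisjointness_time_clb} in place of the Strong SetDisjointness Conjecture). You correctly identify that the only real work is checking that the additive $\tilde{O}(mu)$ cost of writing down the string is absorbed into $\tilde{O}(m^{2-\epsilon})$, and that this follows — as in the space proof — from the fact that the hard SetDisjointness instances produced by Lemma~\ref{lem:3sum_indexing_setdisjointness_reduction} constrain $m$ to roughly $N/\sqrt{u}$, so restricting to $u \le N^{2/3-\epsilon}$ guarantees $u = O(m^{1-\Omega(\epsilon)})$; the one small imprecision is citing Corollary~\ref{thm:SD_small_universe_LB_for_specific_m}, which is stated for the space conjecture, but the identical restriction on $m$ for the time version comes directly from the parameters of Lemma~\ref{lem:3sum_indexing_setdisjointness_reduction}, which you also note, so the argument goes through.
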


\subsection{Distance Oracles}
Agarwal~\cite{Agarwal14} presented space-time tradeoffs for distance oracles for undirected graph $G=(V,E)$ with average degree $\mu$ (that is, $\mu=\frac{2|E|}{|V|}$): (i) $(1+\frac{1}{k})$-stretch distance oracles that use $\tilde{O}(|E|+\frac{|V|^2}{\alpha})$ space and have $O((\alpha\mu)^k)$ query time, for any $1 \leq \alpha \leq |V|$ (ii) (1+$\frac{1}{k+0.5})$-stretch distance oracles that use $\tilde{O}(|E|+\frac{|V|^2}{\alpha})$ space and have $O(\alpha(\alpha\mu)^k)$ query time, for any $1 \leq \alpha \leq |V|$. (iii) $(1+\frac{2}{3})$-stretch distance oracle that uses $\tilde{O}(|E|+\frac{|V|^2}{\alpha})$ space and has $O(\alpha\mu)$ query time for any $1 \leq \alpha \leq (\frac{|V|^2}{|E|})^{\frac{1}{3}}$. In the last result ((iii)) Agarwal managed to shave an $\alpha$ factor of the query time in (ii) (for $k=1$). Therefore, both $\frac{5}{3}$-stretch distance oracle and $2$-stretch distance oracle (by setting $k=1$ in (i)) have the same space-time tradeoff. It is known that $3$-stretch distance oracle has a better tradeoff (see~\cite{AGH11}). Moreover, by (i) and (ii) the tradeoff for stretch less than $5/3$ gets worse as the stretch guarantee is better. Thus, it seems natural to expect a better tradeoff for stretch more than $5/3$ and less-than-equal to $2$.

In the following theorem we prove that improving the tradeoff of Agarwal~\cite{Agarwal14} is impossible for any stretch $t\in [\frac{2}{3},2)$, unless the Strong SetDisjointness Conjecture is false:

\begin{theorem}\label{distance_oracles_space_clb}
Any distance oracle for undirected graph $G=(V,E)$ with stretch less than $2$ must either use $\Omega(|V|^{2-o(1)})$ space or have $\Omega(\mu^{1-o(1)})$ query time, where $\mu$ is the average degree of a vertex in $G$, unless Strong SetDisjointness Conjecture is false.
\end{theorem}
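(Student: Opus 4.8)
The plan is to reduce SetDisjointness with bounded universe to stretch-less-than-$2$ distance oracles via the classical bipartite element--set incidence graph, in the style of \Patrascu{} and Roditty~\cite{PR14}, and then to balance parameters so that Corollary~\ref{thm:SD_small_universe_LB_for_specific_m} applies. Start from a hard SetDisjointness instance with sets $S_1,\dots,S_m\subseteq[u]$ and $N=\sum_i|S_i|$, where $u$ is chosen anywhere in $[N^{\delta},N^{2/3-\epsilon}]$ and $m=\Theta(N/\sqrt u)$ lies at the lower end of the range permitted by Corollary~\ref{thm:SD_small_universe_LB_for_specific_m}; in particular the average set size is $\Theta(\sqrt u)$. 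Build an undirected bipartite graph $G=(V,E)$ with a "set vertex" $\ell_i$ for each $S_i$, an "element vertex" $r_k$ for each $k\in[u]$, and an edge $\{\ell_i,r_k\}$ exactly when $k\in S_i$. Then $|V|=m+u=\Theta(m)$ (since $u\le N^{2/3-\epsilon}$ while $m=\Omega(N^{2/3+\epsilon/2})$), $|E|=N$, and the average degree is $\mu=2N/|V|=\Theta(\sqrt u)$.

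The correctness of the reduction rests on the elementary fact that in a bipartite graph two vertices on the same side are at distance exactly $2$ iff they share a neighbour, and at distance $\ge 4$ (or $\infty$) otherwise. Hence $S_i\cap S_j\neq\emptyset\iff d_G(\ell_i,\ell_j)=2$, and $S_i\cap S_j=\emptyset\iff d_G(\ell_i,\ell_j)\ge 4$. A distance oracle of stretch $t<2$ answers a query $(\ell_i,\ell_j)$ with a value $\hat d$ satisfying $d_G(\ell_i,\ell_j)\le\hat d\le t\cdot d_G(\ell_i,\ell_j)$; since $2t<4$, we get $\hat d<4$ when the true distance is $2$ and $\hat d\ge 4$ (or $\hat d=\infty$) otherwise, so a single oracle query decides the SetDisjointness query by testing whether $\hat d<4$. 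If one wishes to work with a connected graph, one can attach $O(1)$ auxiliary vertices and edges that join the components without creating any new distance-$2$ pair among the $\ell_i$'s, changing $|V|$ and $|E|$ only in lower-order terms.

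To close the argument, assume toward a contradiction that a stretch-less-than-$2$ distance oracle uses $S=O(|V|^{2-\epsilon_1})$ space and has query time $T=O(\mu^{1-\epsilon_2})$ for some $\epsilon_1,\epsilon_2>0$. Applying the reduction (and additionally storing $G$, which costs $O(N)$ space and is only used to read off the intersecting element) yields a SetDisjointness data structure on the $m$ sets over $[u]$ with $O(m^{2-\epsilon_1})+O(N)$ space and query time $O(\mu^{1-\epsilon_2})=O(u^{1/2-\epsilon_2/2})$. Because $u\le N^{2/3-\epsilon}$ we have $N=m\sqrt u=m^{3/2-O(\epsilon)}=O(m^{2-\Omega(1)})$, so the $O(N)$ term is absorbed and the total space is $O(m^{2-\Omega(1)})$. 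This is a SetDisjointness data structure with truly subquadratic-in-$m$ space and query time $O(u^{1/2-\Omega(1)})$ that works across the whole range of $m$ required by Corollary~\ref{thm:SD_small_universe_LB_for_specific_m}, contradicting it and hence contradicting the Strong SetDisjointness Conjecture. Finally, the theorem's two-sided statement follows by contraposition: either $S=\Omega(|V|^{2-o(1)})$ or $T=\Omega(\mu^{1-o(1)})$.

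The only place needing care is the parameter bookkeeping that makes both bounds tight at once: $u$ must be small enough that $|V|=\Theta(m)$ and that storing the incidence graph does not dominate the space (this is what forces the cutoff $u\le N^{2/3-\epsilon}$), while the hard instances must simultaneously have average set size $\Theta(\sqrt u)$ so that $\mu=\Theta(\sqrt u)$ and the oracle's $\mu^{1-o(1)}$ query lower bound is exactly the $u^{1/2-o(1)}$ bound supplied by Corollary~\ref{thm:SD_small_universe_LB_for_specific_m}. The bipartite distance dichotomy, the stretch inequality, and the treatment of disconnected components are all routine.
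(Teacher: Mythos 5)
Your proposal is correct and follows essentially the same route as the paper: the same bipartite set--element incidence graph (the paper attributes it to Cohen and Porat rather than \Patrascu{}--Roditty, but it is the same construction), the same distance-$2$-vs.-$\ge 4$ dichotomy exploited by a stretch-less-than-$2$ oracle, and the same use of the $u\le N^{2/3-\epsilon}$ cutoff so that $|V|=\Theta(m)$ and $\mu\le\sqrt u$, feeding into Corollary~\ref{thm:SD_small_universe_LB_for_specific_m}. One small caution: you initially fix $m=\Theta(N/\sqrt u)$ "at the lower end," but Corollary~\ref{thm:SD_small_universe_LB_for_specific_m} requires the derived data structure to work for every $m$ in the permitted range; your own closing sentence acknowledges this and the argument does go through uniformly because $\mu = N/(m+u)\le N/m\le\sqrt u$ for all such $m$, which is exactly the inequality the paper uses.
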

\begin{proof}
We use the idea of Cohen and Porat~\cite{CP102} with our hardness results for SetDisjointness with bounded universe. Given an instance of SetDisjointness with sets $S_1,S_2,...,S_m \subseteq [u]$ such that $u \in [N^{\delta},N]$, $N = \sum_{i=1}^{m}{|S_i|}$ and $\delta>0$, we construct a bipartite graph $G=(V,E)$ as follows: In one side, we create a vertex $v_i$ for each set $S_i$. In the other side, we create a vertex $u_j$ for each element $j \in [u]$. For each element $x$ in some set $S_i$ we create an edge $(v_i,u_x)$. Formally, $V=\{v_i| 1 \leq i \leq m\} \cup \{u_j| j \in [u]\}$ and $E=\{(v_i,u_x) | x \in S_i\}$. For any $i,j \in [m]$, if $S_i \cap S_j \neq \emptyset$ then it is clear that the distance between $v_i$ and $v_j$ is exactly $2$. Otherwise, the distance is at least $4$. A stretch less-than $2$ distance oracle can distinguish between these two possibilities and therefore a SetDisjointness query can be answered by one query to a stretch less-than $2$ distance oracle for $G$.

It is clear that $|V| = m + u$ and $|E|=N$. We assume to the contradiction that there is a stretch less than two distance oracle that uses $\tilde{O}(|V|^{2-\epsilon_1})$ space and answers queries in $\tilde{O}(\mu^{1-\epsilon_2}) = \tilde{O}((\frac{|E|}{|V|})^{1-\epsilon_2})$ time, for some $\epsilon_1,\epsilon_2>0$. Therefore, SetDisjointness with bounded universe can be solved using $\tilde{O}((m+u)^{2-\epsilon_1})$ space and queries can be answered using $\tilde{O}((\frac{N}{m+u})^{1-\epsilon_2})$ time. According to Corollary~\ref{thm:SD_small_universe_LB_for_specific_m} the reduction from general SetDisjointness to SetDisjointness with bounded universe holds for $N/\sqrt{u} \leq m$. Therefore, for any value of $u \leq N^{2/3}$ we have that the reduction holds and $u \leq m$ (see the full details in the proof of Theorem~\ref{thm:range_mode_space_clb}). Moreover, we have that $N/(m+u) \leq N/m \leq \sqrt{u}$. Consequently, for any $u \leq N^{2/3}$ we have a solution to SetDisjointness with bounded universe that uses $\tilde{O}((m+u)^{2-\epsilon_1}) = \tilde{O}(m^{2-\epsilon_1})$ space and answers queries in $\tilde{O}(\frac{N}{m+u}^{1-\epsilon_2}) = \tilde{O}((\sqrt{u})^{1-\epsilon_2}) = \tilde{O}(u^{1/2-\epsilon_2/2})$ time. This contradicts Strong SetDisjointness Conjecture according to Corollary~\ref{thm:SD_small_universe_LB_for_specific_m}.
\end{proof}

The previous theorem can be stated in a different way that makes it clear that the space-time tradeoff of Agarwal~\cite{Agarwal14} is tight for distance oracles with stretch $t$ such that $5/3 \leq t < 2$.

\begin{corollary}
There is no stretch less-than-$2$ distance oracle for undirected graph $G=(V,E)$ that uses $\tilde{O}(\frac{|V|^2}{\alpha})$ space and have $\tilde{O}(\alpha^{1-\epsilon}\mu)$ query time for any $|V|^{\delta} \leq \alpha$ and any $\delta,\epsilon>0$, unless conjecture 1 is false.
\end{corollary}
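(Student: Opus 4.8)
The statement is obtained by re-running the reduction from the proof of Theorem~\ref{distance_oracles_space_clb}, but now keeping track of the tradeoff parameter $\alpha$ against the size of the SetDisjointness instance. Suppose, for contradiction, that for some $\delta,\epsilon>0$ and some $\alpha\ge|V|^{\delta}$ there is a stretch-less-than-$2$ distance oracle using $\tilde{O}(|V|^{2}/\alpha)$ space and answering queries in $\tilde{O}(\alpha^{1-\epsilon}\mu)$ time. I would start from a hard bounded-universe SetDisjointness instance as supplied by Corollary~\ref{thm:SD_small_universe_LB_for_specific_m}: $m$ sets $S_1,\dots,S_m\subseteq[u]$ of size $\Theta(\sqrt{u})$ with $N=\Theta(m\sqrt{u})$, and build exactly the bipartite graph $G$ from that proof --- one vertex $v_i$ per set, one vertex $w_x$ per universe element, and an edge $(v_i,w_x)$ whenever $x\in S_i$ --- so that $|V|=m+u$, $|E|=N$, $\mu=\Theta(N/(m+u))$, and $S_i\cap S_j\neq\emptyset$ holds exactly when the distance between $v_i$ and $v_j$ is less than $4$, a condition any stretch-less-than-$2$ oracle detects with a single query.

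The new ingredient is the choice of the universe size relative to the number of sets. I would take $u$ polynomially larger than $m$ but still below the bound $u\le N$ that Corollary~\ref{thm:SD_small_universe_LB_for_specific_m} requires (so $u\le m^{2}$ up to lower-order factors), so that $|V|=\Theta(u)$, the average degree $\mu=\Theta(m/\sqrt{u})$ is polynomially below the SetDisjointness query lower bound $\sqrt{u}$, and $N=\Theta(m\sqrt{u})$ is polynomially below the space lower bound $m^{2}$. Feeding $G$ to the hypothetical oracle and adding $\tilde{O}(N)$ space to store the sets yields a SetDisjointness data structure with space $\tilde{O}(|V|^{2}/\alpha+N)=\tilde{O}(u^{2}/\alpha+m\sqrt{u})$ and query time $\tilde{O}(\alpha^{1-\epsilon}\mu)=\tilde{O}(\alpha^{1-\epsilon}m/\sqrt{u})$. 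Since $\alpha$ lies in the window $[\,|V|^{\delta},\,|V|\,]=[\,u^{\Theta(\delta)},\,u\,]$, one picks the exponent relating $u$ to $m$ so that $u^{2}/\alpha$ is polynomially subquadratic in $m$ while $\alpha^{1-\epsilon}m/\sqrt{u}$ stays polynomially below $\sqrt{u}$; then the data structure beats both bounds of Corollary~\ref{thm:SD_small_universe_LB_for_specific_m}, contradicting the Strong SetDisjointness Conjecture. Substituting $S=|V|^{2}/\alpha$ back into the resulting bounds phrases the conclusion as the claimed tightness of Agarwal's $\alpha$-parametrized tradeoff over the whole stretch range $5/3\le t<2$.

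The delicate step, which I expect to be the real obstacle, is exactly this parameter bookkeeping: one must verify that the admissible interval for $u$ (bounded below by $N^{\delta}$ and above by roughly $m^{2}$ by the hypotheses of Corollary~\ref{thm:SD_small_universe_LB_for_specific_m}) together with the admissible range $\alpha\ge|V|^{\delta}$ leaves enough room to make $u^{2}/\alpha$ polynomially subquadratic in $m$ and $\alpha^{1-\epsilon}\mu$ polynomially below $\sqrt{u}$ simultaneously --- this is where the precise dependence on $\epsilon$ and on the chosen exponent relating $u$ and $m$ gets pinned down, and it is plausible that the argument constrains how small $\epsilon$ may be. The remaining steps are routine and mirror the proof of Theorem~\ref{distance_oracles_space_clb}: checking that for the chosen $u$ the reduction of Corollary~\ref{thm:SD_small_universe_LB_for_specific_m} applies (i.e.\ that $N/\sqrt{u}\le m$), and confirming that distinguishing distance $2$ from distance at least $4$ on $G$ is precisely what a stretch-less-than-$2$ distance oracle does.
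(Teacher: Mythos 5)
The paper does not actually supply a proof for this corollary; it is presented as a restatement of Theorem~\ref{distance_oracles_space_clb}, with the remark that it "makes it clear" Agarwal's tradeoff is tight. So your proposal, which re-runs the reduction with a fresh choice of parameters, goes beyond what the paper itself does. Your choice of regime $u > m$ (so $|V| = \Theta(u)$, $\mu = \Theta(m/\sqrt{u})$) is the opposite of the one used in the proof of Theorem~\ref{distance_oracles_space_clb} (which enforces $u \le m$, $|V| = \Theta(m)$, $\mu = \Theta(\sqrt{u})$), and one can check that this departure is in fact forced: in the theorem's regime the oracle's query time $\alpha^{1-\epsilon}\mu \ge \mu$ never undercuts the SetDisjointness query bound $u^{1/2-o(1)} = \mu^{1-o(1)}$ for any $\alpha \ge 1$ and $\epsilon \in (0,1)$, so the argument cannot even get started there. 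You were right to suspect the "delicate step," though. Writing $u = m^{\beta}$ with $1 < \beta \le 2$ and $\alpha = m^{c}$, the subquadratic-space requirement $u^{2}/\alpha = O(m^{2-\gamma})$ forces $c > 2(\beta-1)$, while the query-time requirement $\alpha^{1-\epsilon} m/\sqrt{u} = O(u^{1/2-\gamma})$ forces $c(1-\epsilon) < \beta - 1$; combining gives $2(1-\epsilon) < 1$, i.e.\ $\epsilon > 1/2$. So the reduction, in your regime or any variant of it, cannot establish the corollary for the full range "any $\epsilon > 0$" as stated. The gap you anticipated is real, and it suggests that the corollary as written is either implicitly restricted to $\epsilon > 1/2$ or is only meant to re-express the theorem's $\Omega(\mu^{1-o(1)})$ query bound, in which case the $\alpha^{1-\epsilon}$ factor plays no genuine role and your more careful computation exposes that.
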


Using Theorem~\ref{thm:bounded_setdisjointness_time_clb} and the same idea from the proof of Theorem~\ref{distance_oracles_space_clb}, we obtain the following result regarding the preprocessing and query time tradeoff for distance oracles with stretch less-than-2:

\begin{theorem}
Any distance oracle for undirected graph $G=(V,E)$ with stretch less than $2$ must either be constructed in $\Omega(|V|^{2-o(1)})$ preprocessing time or have $\Omega(\mu^{1-o(1)})$ query time, where $\mu$ is the average degree of a vertex in $G$, unless the 3SUM Conjecture is false.
\end{theorem}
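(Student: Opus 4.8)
The plan is to follow the same route used in Theorem~\ref{distance_oracles_space_clb}, but swap the space-based contradiction for a preprocessing-time one, exactly as Corollary~\ref{crl:range_mode_time_clb} does for the Range Mode application. First I would reuse verbatim the graph construction from the proof of Theorem~\ref{distance_oracles_space_clb}: given a SetDisjointness instance with sets $S_1,\dots,S_m \subseteq [u]$ where $u \in [N^\delta, N]$ and $N = \sum_i |S_i|$, build the bipartite graph $G=(V,E)$ with a vertex $v_i$ per set, a vertex $u_j$ per universe element, and an edge $(v_i, u_x)$ whenever $x \in S_i$. The distance between $v_i$ and $v_j$ is exactly $2$ if $S_i \cap S_j \neq \emptyset$ and at least $4$ otherwise, so a stretch-less-than-$2$ distance oracle answers a SetDisjointness query in one oracle query; and $|V| = m+u$, $|E| = N$, $\mu = 2|E|/|V| = 2N/(m+u)$, exactly as before.

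The key step is the time-accounting. Assume toward a contradiction a stretch-less-than-$2$ distance oracle that can be constructed in $\tilde{O}(|V|^{2-\epsilon_1})$ preprocessing time and answers queries in $\tilde{O}(\mu^{1-\epsilon_2})$ time for some $\epsilon_1, \epsilon_2 > 0$. Then SetDisjointness with bounded universe is solved with preprocessing time $\tilde{O}((m+u)^{2-\epsilon_1})$ and query time $\tilde{O}((N/(m+u))^{1-\epsilon_2})$. I would then invoke Corollary~\ref{thm:SD_small_universe_LB_for_specific_m}: since the construction is valid for any $m$ in the hard range, in particular for $m$ with $N/\sqrt{u} \le m$, and since (as spelled out in the proof of Theorem~\ref{thm:range_mode_space_clb}) for any $u \le N^{2/3}$ one has $u \le m$, we get $m + u = \tilde{O}(m)$ and $N/(m+u) \le N/m \le \sqrt{u}$. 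Hence for any $u \le N^{2/3}$ we obtain a SetDisjointness solution with preprocessing time $\tilde{O}(m^{2-\epsilon_1})$ and query time $\tilde{O}(u^{1/2-\epsilon_2/2})$. By Theorem~\ref{thm:bounded_setdisjointness_time_clb} such a solution cannot exist unless the 3SUM Conjecture is false, which gives the contradiction.

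I expect no genuine obstacle here — the only thing to be careful about is that the relevant conditional lower bound for the preprocessing/query tradeoff of bounded-universe SetDisjointness is Theorem~\ref{thm:bounded_setdisjointness_time_clb} (which rests on the 3SUM Conjecture), not Theorem~\ref{thm:SD_small_universe_LB} (which rests on the Strong SetDisjointness Conjecture), so the statement must be phrased against the 3SUM Conjecture. One subtlety worth a sentence: Theorem~\ref{thm:bounded_setdisjointness_time_clb} as stated does not carry the "for any $m$ in the specified range" clause the way Corollary~\ref{thm:SD_small_universe_LB_for_specific_m} does for the space version, so strictly I would either (a) appeal to the corresponding preprocessing-time analogue of that corollary, which follows from the proof of Theorem~\ref{thm:bounded_setdisjointness_time_clb} since the reduction from 3SUM in Lemma~\ref{lem:3sum_indexing_setdisjointness_reduction} produces instances with $m = n\sqrt{X/u_i}$ in exactly the needed range, or (b) note that the constraint $u \le N^{2/3}$ together with $u \in [N^\delta, N]$ already places $m$ in a range covered by the reduction. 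Either way the contradiction with Theorem~\ref{thm:bounded_setdisjointness_time_clb} goes through, and the theorem follows.
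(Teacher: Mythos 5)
Your proposal is correct and takes essentially the same approach the paper intends: the paper states this theorem with only a one-line remark that it follows from Theorem~\ref{thm:bounded_setdisjointness_time_clb} together with the same reduction as in Theorem~\ref{distance_oracles_space_clb}, and you fill in exactly those details (same bipartite graph, $u \le N^{2/3}$ gives $u \le m$ and $N/(m+u) \le \sqrt{u}$, then contradict the preprocessing/query tradeoff). Your remark about the missing ``for any $m$ in the specified range'' clause in Theorem~\ref{thm:bounded_setdisjointness_time_clb} is a genuine subtlety the paper glosses over, and your resolution is sound: the 3SUM reduction in Lemma~\ref{lem:3sum_indexing_setdisjointness_reduction} produces instances with $m = n\sqrt{X/u_i}$ and $N_i = n\sqrt{u_i}$, so $m$ is within a $X^{\pm\epsilon/2}$ factor of $N_i/\sqrt{u_i}$ and $u_i \le X \le n \le m \cdot X^{\epsilon/2}$, which places those hard instances squarely in the regime where your distance-oracle-based SetDisjointness solver operates, so the contradiction with the 3SUM conjecture goes through.
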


\subsection{3SUM-Indexing with Small Universe}

In the following theorem we prove a conditional lower bound on the space-time tradeoff for solving 3SUM-Indexing with universe size that is $[n^{2+\epsilon}]$ for any $\epsilon>0$ ($n$ is the size of the input arrays).

\begin{theorem}\label{3sum_indexing_space_clb}
For any $\epsilon>0$ and $0 < \delta \leq 1$, any solution to 3SUM-Indexing with arrays $A=a_1,a_2,...,a_n$ and $B=b_1,b_2,...,b_n$ such that for every $i \in [n]$ $a_i,b_i \in [n^{2+\epsilon}]$ must either use $\Omega(n^{2-\delta-o(1)})$ space or have $\Omega(n^{\frac{\delta}{2}-o(1)})$ query time, unless Strong SetDisjointness Conjecture is false.
\end{theorem}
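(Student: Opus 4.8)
The plan is to reduce SetDisjointness with bounded universe to 3SUM-Indexing with universe $[n^{2+\epsilon}]$, and then invoke Theorem~\ref{thm:SD_small_universe_LB} (via Corollary~\ref{thm:SD_small_universe_LB_for_specific_m}) to conclude. The reduction is the standard encoding of set membership into additive structure: given SetDisjointness sets $S_1,\dots,S_m\subseteq[u]$ with $N=\sum_i|S_i|$, we build two arrays $A$ and $B$ whose entries encode ``(set index, element)'' pairs so that a collision $a+b=z$ witnesses exactly that some element lies in $S_i\cap S_j$. Concretely, I would put into $A$ the numbers $i\cdot u + x$ for each $x\in S_i$, and into $B$ the numbers $j\cdot u - x$ for each $x\in S_j$ (adjusting signs/offsets to stay nonnegative), so that $a_k + b_\ell = (i+j)\cdot u$ precisely when the element parts cancel; then a SetDisjointness query $(i,j)$ is answered by the query $z=(i+j)u$, checking whether that sum is achievable by a pair with the right index parts. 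To separate index pairs that happen to sum to the same value $i+j$, one replaces the plain index $i$ by an injective ``Sidon-type'' or hashed label $g(i)$ so that $g(i)+g(j)$ identifies the unordered pair $\{i,j\}$; this inflates the universe to roughly $m^2\cdot u$, which for the relevant regime $m=\Theta(N/\sqrt u)$ and $u$ a suitable polynomial in $n=N$ is $O(n^{2+\epsilon})$.

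The key bookkeeping steps, in order, are: (1) fix the SetDisjointness instance from Corollary~\ref{thm:SD_small_universe_LB_for_specific_m} with $m\in[N/\sqrt u,\,N/u^{1/2-\epsilon'}]$ and $u$ chosen so that $N=n$ and $m^2 u = O(n^{2+\epsilon})$; this is possible because taking $u$ a small power of $n$ forces $m$ close to $n$, and the product $m^2u$ stays within $n^{2+\epsilon}$; (2) describe the arithmetic encoding above, verifying that each array has $N=n$ entries and that all entries lie in $[n^{2+\epsilon}]$; (3) check correctness --- a query $z$ has a representation $a+b=z$ with the correct index-label sum iff $S_i\cap S_j\neq\emptyset$ --- which follows from injectivity of the element offsets within a fixed index pair; (4) propagate the assumed 3SUM-Indexing tradeoff $S\cdot T^2=O(n^{2-\delta})$ through the reduction: a space-$\Omega(n^{2-\delta})$, query-$\Omega(n^{\delta/2})$ data structure for 3SUM-Indexing would give a space-$\tilde O(m^{2-\delta'})$, query-$\tilde O(u^{1/2-\delta'/2})$ data structure for SetDisjointness with universe $[u]$ (using that $S=O(n^{2-\delta})=O(m^{2-\delta'})$ when $m=\Theta(n)$, and $T=O(n^{\delta/2})=O(u^{\delta/2\cdot c})$ for the appropriate relation between $u$ and $n$), contradicting Corollary~\ref{thm:SD_small_universe_LB_for_specific_m}.

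The main obstacle I expect is the universe-size accounting: making the index-separation trick cost only a factor $m^2$ (not more) while keeping entries nonnegative and ensuring that the arithmetic offsets for elements and for indices do not interfere --- i.e.\ that ``carries'' between the element block and the index block cannot create spurious collisions. This is handled by spacing the blocks: reserve the low $\log u$ bits for the element value and the high bits for $g(i)$, with $g$ taking values in $[m^2]$ (e.g.\ a perfect Sidon set, or a random hash composed with a small multiplier), and leave a one-bit guard so no carry crosses the block boundary; then $a+b$ decomposes uniquely into an index-sum part and an element-sum part, and the query value $z$ isolates exactly the pairs we want. A secondary, minor point is matching the exponents: one must pick the relation $u = n^{\theta}$ so that $n^{\delta/2}$ translates into $u^{1/2-o(1)}$ and $n^{2-\delta}$ into $m^{2-o(1)}$ simultaneously, which works out because in the chosen regime $m=\Theta(n)$ so the space exponents match directly, and $\theta$ can be taken close to $1$ so that $u^{1/2}$ and $n^{1/2}$ (hence $n^{\delta/2}$ and $u^{\delta/2}$) agree up to $n^{o(1)}$ factors; the $\epsilon$ slack in the universe bound $n^{2+\epsilon}$ absorbs the resulting lower-order terms.
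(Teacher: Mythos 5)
Your reduction idea is the right one and matches the paper's: encode $(i,x)$ pairs into arrays $A$ and $B$ so that a 3SUM-Indexing query value $z$ witnesses an element of $S_i\cap S_j$, with the index block separated from the element block by padding so carries cannot cross. One cosmetic difference: you propose Sidon-type labels $g(i)\in[m^2]$ so the index sum identifies the pair; the paper instead uses two disjoint blocks of $\log m$ bits, putting $i$ (low) with $\log m$ padding zeros in the $A$-numbers and $\log m$ padding zeros (low) with $j$ in the $B$-numbers, so the sum simply concatenates $i$ and $j$. Both give universe $\approx m^2u$, but the two-block trick avoids Sidon sets entirely and is cleaner.

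The genuine gap is in your exponent bookkeeping, which as written is internally inconsistent and would not close the proof. You simultaneously claim that taking $u$ a small power of $n$ forces $m$ close to $n$, that ``in the chosen regime $m=\Theta(n)$,'' and that ``$\theta$ can be taken close to $1$.'' These are in conflict: if $u=n^\theta$ with $\theta$ close to $1$ then $m\approx N/\sqrt u\approx n^{1/2}$, not $\Theta(n)$; and if $m=\Theta(n)$ then $u=O(1)$, which lies outside the allowed range $u\in[N^\delta,N]$ in Corollary~\ref{thm:SD_small_universe_LB_for_specific_m}. Your line ``$S=O(n^{2-\delta})=O(m^{2-\delta'})$ when $m=\Theta(n)$'' rests on exactly this degenerate case. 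The correct choice is $\theta=\delta$ exactly: take $u=N^\delta$ and $m\in[N/u^{1/2},N/u^{1/2-\epsilon/2}]$, so $m\approx N^{1-\delta/2}$, which for $\delta\in(0,1]$ is nowhere near $\Theta(n)$. The reason the exponents nonetheless match is the identity $(2-\delta)/(1-\delta/2)=2$, i.e.\ $m^2=n^{2-\delta}$ exactly, so that $n^{2-\delta-\gamma_1}=m^{2-\gamma_1/(1-\delta/2)}$; and for the query time, $u^{1/2}=n^{\delta/2}$ exactly, so $n^{\delta/2-\gamma_2}=u^{1/2-\gamma_2/\delta}$. Any constant slack $\gamma_1,\gamma_2>0$ in the assumed 3SUM-Indexing bounds then propagates to a constant slack in the SetDisjointness bounds and contradicts the Corollary. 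The universe accounting is $2\log m+\log u\leq 2\bigl(1-\tfrac{\delta}{2}+\tfrac{\epsilon\delta}{2}\bigr)\log N+\delta\log N=(2+\epsilon\delta)\log N\leq(2+\epsilon)\log N$, as needed. Without pinning $\theta=\delta$ and invoking these identities, the ``minor point'' you flag is in fact where the argument fails.
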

\begin{proof}
We use the idea of Goldstein et al.~\cite{GKLP17} with our hardness for SetDisjointness with bounded universe. We begin with an instance of SetDisjointness with sets $S_1,S_2,...,S_m \subseteq [u]$ such that $u=N^{\delta}$, $m \in [\frac{N}{u^{1/2}},\frac{N}{u^{1/2-\epsilon'}}]$, $N = \sum_{i=1}^{m}{|S_i|}$, $\epsilon'=\epsilon/2$ and $\delta>0$.

For every element $x$ in some set $S_i$ we create two numbers $x_{1,i}$ and $x_{2,i}$. The number $x_{1,i}$ consists of 3 blocks of bits (ordered from the least significant bit toward the most significant bit): (i) A block of $\log{m}$ bits that contains the value of the index $i$. (ii) A block of $\log{m}$ padding zero bits. (iii) A block of $\log{u}$ bits that contains the value of $x-1$. The number $x_{2,i}$ consists of 3 blocks of bits (ordered from the least significant bit toward the most significant bit): (i) A block of $\log{m}$ padding zero bits. (ii) A block of $\log{m}$ bits that contains the value of the index $i$. (iii) A block of $\log{u}$ bits that contains the value of $u-x$. We place the number $x_{1,i}$ in array $A$ and the number $x_{2,i}$ in array $B$. The number of elements in each of these arrays is $N$, as we add a number to each array for every element in the input sets. These two arrays form an instance of 3SUM-Indexing which is preprocessed in order to answer queries.

Given a query asking whether $S_i \cap S_j = \emptyset$ or not, we can answer it by creating a query number $z$ to the 3SUM-Indexing instance as follows: The number $z$ consists of 3 blocks of bits (ordered from the least significant bit toward the most significant bit): (i) A block of $\log{m}$ bits that contain the value of the index $i$. (ii) A block of $\log{m}$ that contain the value of the index $j$. (iii) A block of $\log{u}$ bits that contains the value of $u-1$. It straightforward to see that we get a positive answer to the query number $z$ iff $S_i \cap S_j \neq \emptyset$: (i) If we have $x_{1,k_1} \in A$ and $y_{2,k_2} \in B$ such that $x_{1,k_1}+y_{2,k_2}=z$, then we must have that: (1) $k_1=i$ which means that $x$ is in $S_i$. (2) $k_2=j$ which means that $y$ is in $S_j$. (3) $x-1+u-y=u-1$ which means that $x=y$. (ii) If $S_i \cap S_j \neq \emptyset$ then there is an element $x$ such that $x \in S_i$ and $x \in S_j$. From our construction it is clear that indeed $x_{1,i}+x_{2,j}=z$.

Thus, we have reduced our SetDisjointness instance to an instance of 3SUM-Indexing such that each query to the SetDisjointness instance can be answered by a query to the 3SUM-Indexing instance. The size of each array in the 3SUM-Indexing instance is $N$. All the numbers in these arrays have $2\log{m}+\log{u}$ bits. Let $u=N^{\delta}$ and $m \in [\frac{N}{u^{1/2}},\frac{N}{u^{1/2-\epsilon'}}]$, for $\epsilon' \leq \frac{\epsilon}{2}$, then the number of bits in each number of $A$ and $B$ is bounded by $2\log{\frac{N}{u^{1/2-\epsilon'}}}+\log{N^{\delta}}=2\log{N^{1-\delta/2+\epsilon'\delta}}+\log{N^{\delta}}= 2(1-\delta/2+\epsilon'\delta)\log{N}+\delta\log{N}=(2+2\epsilon'\delta)\log{N} \leq (2+\epsilon)\log{N}$. By setting $n=N$ we have that both $A$ and $B$ have $n$ elements and all the numbers are in $[n^{2+\epsilon}]$.

We assume to the contradiction that 3SUM-Indexing with universe $[n^{2+\epsilon}]$ can be solved using $\tilde{O}(n^{2-\delta-\gamma_1})$ space, while answering queries in $\tilde{O}(n^{\frac{\delta}{2}-\gamma_2})$ time, for some $\gamma_1,\gamma_2>0$. Following our reduction this means that we can solve SetDisjointness with $m$ from universe $[u]$ using $S=\tilde{O}(n^{2-\delta-\gamma_1})$ space, while answering queries in $T=\tilde{O}(n^{\frac{\delta}{2}-\gamma_2})$ time. We have that $u=n^{\delta}$, so $n=u^{1/\delta}$. Moreover, $m \geq n^{1-\delta/2}$, so $n \leq m^{1/(1-\delta/2)}$. Therefore, $S=\tilde{O}(m^{(2-\delta-\gamma_1)/(1-\delta/2)})=\tilde{O}(m^{2-\frac{\gamma_1}{1-\frac{\delta}{2}}})$ and $T=\tilde{O}(u^{(\frac{\delta}{2}-\gamma_2)/\delta})=\tilde{O}(u^{1/2-\gamma_2/\delta})$. This contradicts Corollary~\ref{thm:SD_small_universe_LB_for_specific_m}.
\end{proof}

Using Theorem~\ref{thm:bounded_setdisjointness_time_clb} and the same idea from the proof of Theorem~\ref{3sum_indexing_space_clb}, we obtain the following result regarding the preprocessing and query time tradeoff for distance oracles with stretch less-than-2:

\begin{theorem}
For any $\epsilon>0$ and $0 < \delta \leq 1$, any solution to 3SUM-Indexing with arrays $A=a_1,a_2,...,a_n$ and $B=b_1,b_2,...,b_n$ such that for every $i \in [n]$ $a_i,b_i \in [n^{2+\epsilon}]$ must either have $\Omega(n^{2-\delta-o(1)})$ preprocessing time or have $\Omega(n^{\frac{\delta}{2}-o(1)})$ query time, unless the 3SUM Conjecture is false.
\end{theorem}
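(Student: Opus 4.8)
The plan is to follow the proof of Theorem~\ref{3sum_indexing_space_clb} almost verbatim, replacing the space lower bound for SetDisjointness with bounded universe (Corollary~\ref{thm:SD_small_universe_LB_for_specific_m}, based on the Strong SetDisjointness Conjecture) by the preprocessing-time lower bound of Theorem~\ref{thm:bounded_setdisjointness_time_clb} (based on the 3SUM Conjecture). First I would start from a SetDisjointness instance with sets $S_1,\ldots,S_m \subseteq [u]$, with $u = N^\delta$, $N = \sum_{i}|S_i|$, and $m$ in the range $[\frac{N}{u^{1/2}}, \frac{N}{u^{1/2-\epsilon/2}}]$, and apply exactly the bit-packing reduction of Theorem~\ref{3sum_indexing_space_clb}: for each $x \in S_i$ put into $A$ the number whose low $\log m$ bits encode $i$, whose next $\log m$ bits are zero, and whose top $\log u$ bits encode $x-1$, and put into $B$ the number whose low $\log m$ bits are zero, whose next $\log m$ bits encode $i$, and whose top $\log u$ bits encode $u-x$; a query ``$S_i\cap S_j=\emptyset$?'' becomes the query integer whose three blocks encode $i$, $j$, and $u-1$. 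The correctness argument carries over unchanged, since $x_{1,k_1}+x_{2,k_2}=z$ forces $k_1=i$, $k_2=j$, and $x=y$. As computed there, each number has $2\log m + \log u \le (2+\epsilon)\log N$ bits, so with $n=N$ the two arrays lie in $[n^{2+\epsilon}]$, and building them costs only $O(N)$ time.

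Next I would assume, toward a contradiction, a 3SUM-Indexing data structure over universe $[n^{2+\epsilon}]$ with $\tilde{O}(n^{2-\delta-\gamma_1})$ preprocessing time and $\tilde{O}(n^{\delta/2-\gamma_2})$ query time for some $\gamma_1,\gamma_2>0$. Composing it with the reduction gives a SetDisjointness data structure on $m$ sets over $[u]$ with preprocessing time $\tilde{O}(N + n^{2-\delta-\gamma_1})$ and query time $\tilde{O}(n^{\delta/2-\gamma_2})$. Substituting $n=u^{1/\delta}$ and $n \le m^{1/(1-\delta/2)}$ (the latter because $m \ge n^{1-\delta/2}$), exactly as in the proof of Theorem~\ref{3sum_indexing_space_clb}, the preprocessing time becomes $\tilde{O}(m^{(2-\delta-\gamma_1)/(1-\delta/2)}) = \tilde{O}(m^{2-\gamma_1/(1-\delta/2)}) = O(m^{2-\Omega(1)})$, and the query time becomes $\tilde{O}(u^{(\delta/2-\gamma_2)/\delta}) = \tilde{O}(u^{1/2-\gamma_2/\delta}) = O(u^{1/2-\Omega(1)})$; the additive $O(N)$ is dominated once $\gamma_1<1-\delta$, and for the corner case $\delta=1$ one picks $m$ near the top of its allowed range so that $N=O(m^{2-\Omega(1)})$. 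This is a SetDisjointness solution with bounded universe that beats both alternatives of Theorem~\ref{thm:bounded_setdisjointness_time_clb} at once, so the 3SUM Conjecture must be false.

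The step I expect to be the main obstacle, as in the proof of Theorem~\ref{thm:range_mode_space_clb}, is the bookkeeping on $m$: Theorem~\ref{thm:bounded_setdisjointness_time_clb} is proved via Lemma~\ref{lem:3sum_indexing_setdisjointness_reduction}, which produces SetDisjointness instances of a specific shape ($N_i = n\sqrt{u_i}$ elements, $m = n\sqrt{X/u_i}$ sets, $u_i = X^{1+\epsilon}/2^{i-1}$), so I must verify that the range $m \in [\frac{N}{u^{1/2}}, \frac{N}{u^{1/2-\epsilon/2}}]$ forced by our reduction overlaps the regime where that hardness actually bites --- concretely, that for $u$ bounded away from $N$ (say $u \le N^{2/3-\Theta(\epsilon)}$) the inequality $m \ge N/\sqrt{u}$ forces both $u \le m$ and $N = O(m^{2-\Omega(1)})$, which is exactly the chain of inequalities checked inside the proof of Theorem~\ref{thm:range_mode_space_clb}. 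Formally this amounts to stating and invoking a preprocessing-time analogue of Corollary~\ref{thm:SD_small_universe_LB_for_specific_m}, obtained from the proof of Theorem~\ref{thm:bounded_setdisjointness_time_clb} by tracking which values of $m$ arise; the remainder is the routine substitution above.
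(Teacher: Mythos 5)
Your proposal is correct and takes essentially the same approach as the paper, which explicitly states that this theorem follows by combining Theorem~\ref{thm:bounded_setdisjointness_time_clb} with the bit-packing reduction from the proof of Theorem~\ref{3sum_indexing_space_clb}. Your extra bookkeeping on the $m$ range and the additive $O(N)$ term, including the $\delta=1$ corner case, is exactly the verification the paper leaves implicit when it invokes a preprocessing-time analogue of Corollary~\ref{thm:SD_small_universe_LB_for_specific_m}.
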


\bibliographystyle{plain} \bibliography{ms}

\newpage
\appendix
\section*{Appendix}

\section{Conditional Lower Bounds for SetIntersection}\label{sec:clb_setintersection}

In the following theorem we prove a conditional lower bound on SetIntersection with bounded universe based on the Strong SetDisjointness Conjecture by generalizing the ideas from Theorem~\ref{thm:SD_small_universe_LB}. Specifically, we demonstrate that for SetIntersection we either have the same space lower bound as for SetDisjointness or we have a $\tilde{\Omega}(u^{1-o(1)} + out)$ bound on the query time. The query time bound is stronger than the $\Omega(u^{1/2-o(1)})$ bound that we have for SetDisjointness. However, we argue that this lower bound for SetIntersection holds only when the output is large. If we have an upper bound on the size of the output we still have a lower bound on the query time, but this lower bound gets closer to $\tilde{\Omega}(u^{1/2-o(1)} + out)$ as the size of the output gets smaller. Eventually, this coincides with the lower bound we have for SetDisjointness (notice that in order to answer SetDisjointness queries we just need to output a single element from the intersection if there is any).

\setcounter{theorem}{14}

\begin{theorem}\label{thm:SI_small_universe_LB_from_SD}
Any solution to SetIntersection with sets $S_1,S_2,...,S_m \subseteq [u]$ for any value of $u \in [N^{\delta},N]$, such that $N = \sum_{i=1}^{m}{|S_i|}$ and $\delta>0$, must either use $\Omega(m^{2-o(1)})$ space or have $\tilde{\Omega}(u^{\alpha-o(1)} + out)$ query time, for any $1/2 \leq \alpha \leq 1$ and any output size $out$ such that $out=\Omega(u^{2\alpha-1-\delta})$ and $\delta>0$ ,unless Strong SetDisjointness Conjecture is false.
\end{theorem}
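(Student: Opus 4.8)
The plan is to mimic the proof of Theorem~\ref{thm:SD_small_universe_LB} almost verbatim, with the set‑size thresholds rescaled from $u^{1/2}$ to $u^{\alpha}$ and with one new ingredient: padding the medium sets so that the hypothesised SetIntersection structure may be invoked even on queries whose true output is tiny. Concretely, I would assume for contradiction a SetIntersection algorithm $A$ that builds a data structure of size $O(m^{2-\epsilon_1})$ and answers any query whose output has size $out=\Omega(u^{2\alpha-1-\delta})$ in time $O(u^{\alpha-\epsilon_2}+out)$, and set $\epsilon$ to a sufficiently small positive constant depending on $\epsilon_1,\epsilon_2,\delta$. Starting from an arbitrary SetDisjointness instance $S'_1,\dots,S'_{m'}$ with $N'$ total elements, renamed into $[N']$, the goal is to build a SetDisjointness data structure whose space $S$ and query time $T$ satisfy $S\cdot T^2=\tilde{O}({N'}^{2-\Omega(1)})$, contradicting the Strong SetDisjointness Conjecture.

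For the construction I would reuse the three‑way partition of the inputs with rescaled bounds: \emph{small} sets of size $O(u^{\alpha-\epsilon})$, handled by brute‑forcing the smaller query set (cost $O(u^{\alpha-\epsilon})$, within budget); \emph{large} sets of size more than $\Theta(u^{\alpha-\delta/2})$, of which there are $O(N'/u^{\alpha-\delta/2})$, for which I precompute an emptiness bit for every large--large and large--medium pair in a matrix $M$, exactly as in Theorem~\ref{thm:SD_small_universe_LB}; and \emph{medium} sets in between. To the medium sets I would add $\Theta(u^{2\alpha-1-\delta})$ dummy elements, drawn from a reserved block of the universe and common to all medium sets, then pick $\log n$ hash functions $h_k:[N']\to[\Theta(u)]$ acting only on the genuine elements, and build with $A$ a SetIntersection structure $D_k$ on the (dummy‑augmented, hashed) medium sets. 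The threshold $u^{\alpha-\delta/2}$ is chosen so that a pair of medium sets produces at most $O(u^{2\alpha-\delta}/u)=O(u^{2\alpha-1-\delta})=O(out)$ colliding pairs under a random $h_k$, i.e.\ only $O(out)$ spurious intersection values; and the dummies guarantee that every query to a $D_k$ has output $\Omega(u^{2\alpha-1-\delta})$, so the hypothesised bound on $A$ genuinely applies. (Since $\alpha\le1$, the dummy count $u^{2\alpha-1-\delta}$ is below the medium threshold $u^{\alpha-\delta/2}$, so padding does not change the size class of a medium set.)

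Correctness and complexity then follow the pattern of Theorem~\ref{thm:SD_small_universe_LB}. The small and large cases are immediate; for a medium--medium pair I would query each $D_k$, discard the (known) dummy values, and for every remaining reported hashed value look up in the stored sets whether it is witnessed by a genuine common element — this never reports a false positive and never misses a genuine common element, so the pair is disjoint iff no $D_k$ exhibits such a witness. The probabilistic method, applied verbatim as in Theorem~\ref{thm:SD_small_universe_LB}, produces $\log n$ hash functions under which every disjoint pair is collision‑free for at least one $h_k$ (failure probability $(1/c)^{\log n}$ per pair, union bound over $\le {m'}^2$ pairs). For the time bound, each call to $D_k$ costs $O(u^{\alpha-\epsilon_2}+out_k)$ with $out_k=O(u^{\alpha-\delta/2})$, and the witness look‑ups cost $O(|S_i|+|S_j|)=O(u^{\alpha-\delta/2})$, so $T=\tilde{O}(u^{\alpha-\epsilon})$. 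For space, the hash tables use $O(N')$, the matrix $M$ uses $O\bigl((N'/u^{\alpha-\delta/2})^2+(N'/u^{\alpha-\delta/2})(N'/u^{\alpha-\epsilon})\bigr)$, and the $\log n$ structures $D_k$ use $\tilde{O}\bigl((N'/u^{\alpha-\epsilon})^{2-\epsilon}\bigr)$ in total. Substituting into $S\cdot T^2$ and using $u\le N'$ — and choosing $u$ in the allowed range so the bare $O(N')$ term is absorbed — every summand carries a factor $u^{-\Omega(1)}$ or ${N'}^{-\Omega(1)}$, exactly as in Theorem~\ref{thm:SD_small_universe_LB}, yielding $S\cdot T^2=\tilde{O}({N'}^{2-\Omega(1)})$ and the desired contradiction.

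The main obstacle is the interplay between the hash‑induced output blow‑up and the fact that $A$'s guarantee is promised only for large outputs: a single hash collapses the universe to $\Theta(u)$ and can inflate a medium--medium query's reported intersection well past its true size, while on near‑disjoint pairs the output is too small for the assumed time bound to bite. Both difficulties are resolved by the same parameter choice — pinning the medium threshold at $u^{\alpha-\delta/2}$ (so the inflation is $O(out)$) and padding every medium set with $\Theta(u^{2\alpha-1-\delta})$ common dummies (so the output is always $\Omega(u^{2\alpha-1-\delta})$, at the price of only a constant‑factor slowdown) — and ensuring that the resulting medium range is nonempty and that the space--time product beats ${N'}^2$ for \emph{every} $\epsilon_1,\epsilon_2>0$ is exactly where the careful bookkeeping lives. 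This is also why the theorem must carry the additive $out$ term and the restriction $out=\Omega(u^{2\alpha-1-\delta})$: drop either and the hashing step, hence the entire reduction beyond the trivial $\tilde{\Omega}(u^{1/2-o(1)})$ SetDisjointness bound, collapses.
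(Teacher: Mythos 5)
Your high-level plan matches the paper's: a three-way size partition, $\log n$ hash functions, a matrix $M$ for large--large and large--medium pairs, and $\log n$ SetIntersection structures $D_k$ over the hashed medium sets, with a cap on false positives per $D_k$ query. Your dummy-padding trick, which forces every $D_k$ query to have output $\Omega(u^{2\alpha-1-\delta})$ so that the hypothesised algorithm's promise genuinely applies, is \emph{not} in the paper --- the paper's proof simply caps the number of reported false positives at $u^{2\alpha-1-3\epsilon/2}$ and does not address what happens when a hashed pair's intersection falls below the promise threshold. You have correctly identified a subtlety the paper glosses over, and your fix is a sensible one.

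However, there are two concrete problems with the argument as you have written it.

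\textbf{The large-set threshold is tied to the wrong parameter.} You set the large threshold to $u^{\alpha-\delta/2}$, coupling it to the theorem's fixed $\delta$, while keeping the small threshold at $u^{\alpha-\epsilon}$ where $\epsilon$ is derived from the assumed algorithm. The paper instead sets the large threshold to $u^{\alpha-3\epsilon/4}$, so that \emph{both} thresholds scale with $\epsilon=\min(\epsilon_1,\epsilon_2)$. With your choice, the $d^2$ part of $M$ has size $\Theta(N^2/u^{2\alpha-\delta})$, and multiplying by $T^2=\tilde{O}(u^{2\alpha-2\min(\epsilon,\epsilon_2)})$ gives $\tilde{O}(N^2 u^{\delta-2\min(\epsilon,\epsilon_2)})$. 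If the assumed algorithm has $\epsilon_2\le\delta/2$ --- which is entirely possible, since the contradiction hypothesis only requires $\epsilon_2>0$ --- this term does not decay, and $S\cdot T^2$ is not $o(N^2)$. Your construction also needs $\epsilon>\delta/2$ just for the medium class to be nonempty, which directly conflicts with setting $\epsilon\le\min(\epsilon_1,\epsilon_2)$ when those are small. The repair is to use a threshold of the form $u^{\alpha-c\epsilon}$ for a constant $c\in(0,1)$ (the paper takes $c=3/4$) and to ensure $\epsilon\le\delta$ (e.g.\ set $\epsilon=\min(\epsilon_1,\epsilon_2,\delta)$); then the dummy count $u^{2\alpha-1-\delta}$ is still $O(u^{\alpha-\epsilon})$ and the matrix term carries a $u^{-\Omega(\epsilon)}$ factor for every choice of $\epsilon_1,\epsilon_2,\delta,\alpha$.

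\textbf{``Collision-free'' is incorrect.} You state that the probabilistic method ``applied verbatim'' from Theorem~\ref{thm:SD_small_universe_LB} yields $\log n$ hash functions under which every disjoint medium pair is collision-free for some $h_k$. But medium sets have up to $\approx u^{\alpha-\Theta(\epsilon)}$ elements with $\alpha>1/2$, so the \emph{expected} number of colliding pairs under a random $h_k$ is $\Theta(u^{2\alpha-1-\Theta(\epsilon)})\gg 1$; zero collisions with constant probability is out of reach. You in fact compute this expectation correctly a sentence earlier. What is needed --- and what the paper does --- is Markov's inequality: with probability at least a constant, a fixed disjoint pair has at most a constant multiple of the expected number of false positives under $h_k$; then amplify over $\log n$ hash functions and union-bound over all pairs. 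Your query procedure (discard dummies, look up each reported value in $T_i$ and $T_j$, abort the current $h_k$ if the false-positive counter exceeds the cap) is compatible with this; only the correctness argument for the existence of good hash functions needs to be restated.
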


\begin{proof}
We use the same idea as in the proof of Theorem~\ref{thm:SD_small_universe_LB}. Let us assume to the contradiction that Strong SetDisjointness Conjecture is true but there is an algorithm $A'$ that solves SetIntersection on $m$ sets from a universe $[u]$ and creates a data structure $D$, such that the space complexity of the data structure $D$ is $O(m^{2-\epsilon_1})$ for some $\epsilon_1 > 0$ and the query time of algorithm $A'$ is $O(u^{\alpha-\epsilon_2})$ for some $0<\epsilon_2 \leq 1/2 $. We define $\epsilon=\min(\epsilon_1,\epsilon_2)$.

In the proof, we call those sets with at least $u^{\alpha-3/4\epsilon}$ elements \emph{large sets} and those sets with at most $O(u^{\alpha-\epsilon})$ elements \emph{small sets}. All other sets are called \emph{medium sets}.

SetIntersection (for general universe) can be solved in the following way:

The preprocessing phase is similar to the one that is done in the proof of Theorem~\ref{thm:SD_small_universe_LB} with the following changes: 1. In step (5) we check for each pair of medium sets $S_i$ and $S_j$ such that $S_i \cap S_j = \emptyset$ that the size of $h_k(S_i) \cap h_k(S_j)$ is no more than $u^{2\alpha-1-3/2\epsilon}$ for at least one $h_k:U \rightarrow [8u]$ that we pick in step (4). This is done instead of just checking for the emptiness of $h_k(S_i) \cap h_k(S_j)$. 2. In step (6.2) we use algorithm $A'$ to create a data structure $D_k$ that solves the SetIntersection problem instead of the SetDisjointness problem.

The query phase is also very similar to the one from Theorem~\ref{thm:SD_small_universe_LB} with the following change: In step (3), for each $k$, we get one by one the elements in the intersection of $h_k(S_i)$ and $h_k(S_j)$ by querying the data structure $D_k$. For each element $e$ in that intersection we verify that it is contained in both $S_i$ and $S_j$ using the tables $T_i$ and $T_j$. If this is the case, then we return that the sets are not disjoint. Otherwise, we add one to a counter of the number of elements in $(h_k(S_i) \cap h_k(S_j))\setminus(S_i \cap S_j)$. If this counter exceeds $u^{2\alpha-1-3/2\epsilon}$ we stop the query immediately and continue to the next value of $k$.

The correctness of this reduction follows from the same arguments as in the proof of Theorem~\ref{thm:SD_small_universe_LB}. The difference is in analysing the hash functions and their properties. For any two unequal elements $x_1\in S_i$ and $x_2 \in S_j$, where both $S_i$ and $S_j$ are medium sets, and for any hash function $h_k:N \rightarrow [8u]$ we have that $\Pr[h_k(x_1)=h_k(x_2)] \leq 1/(8u)$. We call two unequal elements $x_1\in S_i$ and $x_2 \in S_j$ such that $h_k(x_1)=h_k(x_2)$ a false-positive of $h_k$. The number of elements in the medium sets is no more than $u^{\alpha-3/4\epsilon}$. Consequently, the expected number of false-positives in $h_k(S_i) \cap h_k(S_j)$ is no more than $(u^{\alpha-3/4\epsilon})^2/8u = u^{2\alpha-1-3/2\epsilon}/8$. By Markov inequality the probability that the number of false-positives for a specific $h_k$ is more than $u^{2\alpha-1-3/2\epsilon}$ is no more than 1/8. Therefore, the probability that a pair of medium sets $S_i$ and $S_j$ has more than $u^{2\alpha-1-3/2\epsilon}$ false-positives when applying $h_k$ for all $k \in [\log{n}]$ is no more than $(1/8)^{\log{n}} = 1/n^3$. Thus, the probability that the number of false-positives for any pair of medium sets is more than $u^{2\alpha-1-3/2\epsilon}$ when applying $h_k$ for all $k \in [\log{n}]$ is no more than $n^2/n^3 = 1/n$ by the union-bound. Using the probability method we get that there must be $\log{n}$ hash functions such that for every pair of medium sets $S_i$ and $S_j$ the number of false-positives is no more than $u^{2\alpha-1-3/2\epsilon}$ after applying the hash functions by at least one of the $\log{n}$ hash functions.

\textbf{Complexity analysis}.

\textbf{Space complexity}. The space for the tables in step (1) of the preprocessing is clearly $O(N)$ - linear in the total number of elements. The total number of large sets $d$ is at most $O(N/u^{\alpha-3/4\epsilon})$. The total number of medium sets $e$ is at most $O(N/u^{\alpha-\epsilon})$. Therefore, the size of the matrix $M$ is at most $O(N/u^{\alpha-3/4\epsilon} \cdot N/u^{\alpha-\epsilon}) = O(N^2/u^{2\alpha-7/4\epsilon})$. There are $\log{n}$ data structures that are created in step (6). Each data structure uses at most $O((N/u^{\alpha-\epsilon})^{2-\epsilon}) = O(N^{2-\epsilon}/u^{2\alpha-(2+\alpha)\epsilon+\epsilon^2})$ space. Consequently, the total space complexity is $S=\tilde{O}(N^2/u^{2\alpha-7/4\epsilon}+N^{2-\epsilon}/u^{2\alpha-(2+\alpha)\epsilon+\epsilon^2})$.

\textbf{Query time complexity}. Step (1) of the query algorithm can be done in $O(u^{\alpha-\epsilon})$ as this is the size of the largest small set. Step (2) is done in constant time by looking at the right position in $M$. In step (3) we do $\log{n}$ queries using algorithm $A'$ and the data structures $D_k$. the universe of the sets after applying any hash function $h_k$ is $[8u]$, so the query time for each query is $O(u^{\alpha-\epsilon}+out)$ ($out$ is the size of the output we get from the query). We do not allow the query to output more than $u^{2\alpha-1-3/2\epsilon} < u^{\alpha-\epsilon}$ elements. Therefore, the total query time is $T=O(u^{\alpha-\epsilon})$.

\smallskip

Following our analysis we have that $S \cdot T^2 = \tilde{O}((N^2/u^{2\alpha-7/4\epsilon}+N^{2-\epsilon}/u^{2\alpha-(2+\alpha)\epsilon+\epsilon^2}) \cdot (u^{\alpha-\epsilon})^2) = \tilde{O}(N^2u^{-1/4\epsilon}+N^{2-\epsilon}u^{\alpha\epsilon-\epsilon^2})$. As $\alpha \leq 1$ and $u \leq N$, we have that  $u^{\alpha\epsilon} \leq N^{\epsilon}$. Therefore, $S \cdot T^2 = \tilde{O}(N^2u^{-1/4\epsilon}+N^2u^{-\epsilon^2})$.
This contradicts the Strong SetDisjointness Conjecture and therefore our assumption is false.
\end{proof}

A better lower bound on the space complexity for solving SetIntersection can be obtained based on the Strong SetIntersection Conjecture. This is demonstrated by the following theorem:

\begin{theorem}
Any solution to SetIntersection with sets $S_1,S_2,...,S_m \subseteq [u]$ for any value of $u \in [N^{\delta},N]$, such that $N = \sum_{i=1}^{m}{|S_i|}$ and $\delta>0$, must either use $\Omega((m^2u^{\alpha})^{1-o(1)})$ space or have $\tilde{\Omega}(u^{\alpha-o(1)} + out)$ query time for any $1/2 \leq \alpha \leq 1$ and any output size $out$ such that $out=\Omega(u^{2\alpha-1-\delta})$ and $\delta>0$, unless Strong SetIntersection Conjecture is false.
\end{theorem}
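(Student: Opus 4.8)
The plan is to reprise the proof of Theorem~\ref{thm:SI_small_universe_LB_from_SD} with two modifications: the assumed data structure now has space $O((m^2u^{\alpha})^{1-\epsilon_1})$ rather than $O(m^{2-\epsilon_1})$, and the contradiction is derived against $S\cdot T=\tilde{\Omega}(N^2)$ (the Strong SetIntersection Conjecture) rather than against $S\cdot T^2=\tilde{\Omega}(N^2)$. So I would suppose that the Strong SetIntersection Conjecture holds but some algorithm $A''$ solves SetIntersection on $m$ sets from $[u]$ with space $O((m^2u^{\alpha})^{1-\epsilon_1})$ and query time $O(u^{\alpha-\epsilon_2}+out)$, set $\epsilon=\min(\epsilon_1,\epsilon_2)$, and note that $\epsilon$ may be taken as small as desired (a stronger bound implies the weaker ones). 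Given a general-universe SetIntersection instance, I would classify its sets into \emph{large} (at least $u^{\alpha-3\epsilon/4}$ elements), \emph{small} (at most $O(u^{\alpha-\epsilon})$ elements) and \emph{medium}, exactly as in Theorem~\ref{thm:SI_small_universe_LB_from_SD}; there are $d=O(N/u^{\alpha-3\epsilon/4})$ large sets and $e=O(N/u^{\alpha-\epsilon})$ medium ones. The preprocessing is the same: static hash tables $T_i$; for every pair in which at least one set is large, store the intersection explicitly; pick $\log n$ hash functions $h_1,\dots,h_{\log n}:U\to[8u]$ so that every disjoint pair of medium sets stays disjoint and every non-disjoint pair has at most $u^{2\alpha-1-3\epsilon/2}$ false positives under at least one $h_k$ (the Markov-inequality, union-bound and probabilistic-method argument of Theorem~\ref{thm:SI_small_universe_LB_from_SD} applies verbatim); and build, for each $k$, a structure $D_k$ by running $A''$ on the $e$ medium sets hashed by $h_k$. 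Queries are answered as in Theorem~\ref{thm:SI_small_universe_LB_from_SD}, so correctness and the bound $T=O(u^{\alpha-\epsilon}+out)$ are inherited, the hypothesis $out=\Omega(u^{2\alpha-1-\delta})$ being what lets the capped false positives be absorbed into $O(u^{\alpha-\epsilon}+out)$.

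It then remains to bound $S$ and verify $S\cdot T'=\tilde{o}(N^2)$ for $T'=u^{\alpha-\epsilon}$. I would split $S$ into three pieces: $O(N)$ for the tables; the explicit intersections for the large-involving pairs, whose total size is at most $\sum_{i\ \mathrm{large},\,j}|S_i\cap S_j|\le\tilde{O}(N^2/u^{\alpha-3\epsilon/4})$, because each element of $[u]$ lies in at most $N/u^{\alpha-3\epsilon/4}$ large sets (this term dominates the $\tilde{O}(d(d+e))$ pointer overhead); and the $\log n$ copies of $D_k$, each of size $\tilde{O}((e^2u^{\alpha})^{1-\epsilon_1})=\tilde{O}((N^2u^{2\epsilon-\alpha})^{1-\epsilon_1})$. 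Multiplying by $T'=u^{\alpha-\epsilon}$ gives $S\cdot T'=\tilde{O}(N^2u^{-\epsilon/4}+N^{2(1-\epsilon_1)}u^{\epsilon+\epsilon_1(\alpha-2\epsilon)})$; since $u\le N$, $u\ge N^{\delta}$, $\alpha\le 1$ and $\epsilon\le\epsilon_1$, each term is $\tilde{O}(N^{2-\epsilon'})$ for some fixed $\epsilon'>0$, so $S=\tilde{O}(N^{2-\epsilon'}/T')$, contradicting the Strong SetIntersection Conjecture.

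I expect the main obstacle to be the large-set accounting: in the SetIntersection setting the answer for a large-involving pair is no longer a single bit but the whole intersection, so one must check both that storing all of them costs only $\tilde{O}(N^2/u^{\alpha-3\epsilon/4})$ space and --- crucially --- that multiplying this by $T'=u^{\alpha-\epsilon}$ still lands strictly below $N^2$; this is exactly where the hypotheses $\alpha\ge 1/2$ and $u\ge N^{\delta}$ are used. A secondary delicate point is the bookkeeping linking the false-positive cap $u^{2\alpha-1-3\epsilon/2}$, the output hypothesis $out=\Omega(u^{2\alpha-1-\delta})$ and the smallness of $\epsilon$, so that the medium--medium query provably runs in time $O(u^{\alpha-\epsilon}+out)$ and still reports $S_i\cap S_j$ exactly.
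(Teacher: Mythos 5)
Your proposal is correct and follows essentially the same route as the paper's proof: assume a space bound $O((m^2u^{\alpha})^{1-\epsilon_1})$ and query time $O(u^{\alpha-\epsilon_2}+out)$, reuse the preprocessing/query structure and the Markov-plus-union-bound hash-function argument from Theorem~\ref{thm:SI_small_universe_LB_from_SD}, and derive $S\cdot T=\tilde{o}(N^2)$. Your element-wise accounting of the matrix $M$ (each element lies in at most $N/u^{\alpha-3\epsilon/4}$ large sets, hence total $\tilde{O}(N^2/u^{\alpha-3\epsilon/4})$) is a slightly cleaner route to the same bound the paper reaches via $d\cdot(d+e)$ times the per-pair intersection size, and your final arithmetic matches the paper's up to the inconsequential use of $\epsilon_1$ versus $\epsilon$ in the data-structure term.
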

\begin{proof}
The proof is very similar to the proof of Theorem~\ref{thm:SD_small_universe_LB}. Let us assume to the contradiction that the Strong SetIntersection Conjecture is true but there is an algorithm $A'$ that solves SetIntersection on $m$ sets from a universe $[u]$ and creates a data structure $D$, such that the space complexity of the data structure $D$ is $O(m^2u^{\alpha})^{1-\epsilon_1})$ for some $\epsilon_1 > 0$ and the query time of algorithm $A'$ is $O(u^{\alpha-\epsilon_2})$ for some $0<\epsilon_2 \leq 1/2 $. We define $\epsilon=\min(\epsilon_1,\epsilon_2)$.

In order to solve SetIntersection for general universe we use almost the same preprocessing and query procedures as in the the proof of Theorem~\ref{thm:SD_small_universe_LB} except for the following changes: 1. In the preprocessing phase, we do not save in matrix $M$ in the entries $M[p(i),p(j)]$ or $M[p(i),d+q(j)]$ just the answer to the emptiness of the intersection of $S_i$ and $S_j$, but rather we save in this location a list of all the elements within the intersection of $S_i$ and $S_j$. 2. In the query phase, in step (2) we return a list of elements and not just a single bit. 3. In the query phase, in step (3) for each $k$ we get the intersection of $h_k(S_i)$ and $h_k(S_j)$ by querying the data structure $D_k$. For each element $e$ in that intersection we return it after verifying that it is contained in both $S_i$ and $S_j$ using the tables $T_i$ and $T_j$. Moreover, we count the number of elements in $(h_k(S_i) \cap h_k(S_j))\setminus(S_i \cap S_j)$ as we get them from the query and if they exceed $u^{2\alpha-1-3/2\epsilon}$ we stop the query immediately and continue with the next value of $k$.

The correctness of the above solution to set intersection follows from the same arguments as in the proof of Theorem~\ref{thm:SI_small_universe_LB_from_SD}.

\textbf{Complexity analysis}.

\textbf{Space complexity}. The space for the tables in step (1) of the preprocessing is clearly $O(N)$. Matrix $M$ in this solution contains in each entry the complete list of elements in the intersection of some pair of sets. The total number of large sets $d$ is at most $O(N/u^{\alpha-3/4\epsilon})$. The total number of medium sets $e$ is at most $O(N/u^{\alpha-\epsilon})$. The total number of elements in all sets is $N$. Therefore, the size of the matrix $M$ is at most $O(N/u^{\alpha-3/4\epsilon} \cdot N/u^{\alpha-\epsilon} \cdot u^{\alpha-\epsilon}) = O(N^2/u^{\alpha-3/4\epsilon})$ (see the full details in Appendix~\ref{sec:algorithms_for_setintersection}). There are $\log{n}$ data structures that are created in step (6). Each data structure use at most $O(((N/u^{\alpha-\epsilon})^2u^{\alpha})^{1-\epsilon}) = O(N^{2-2\epsilon}/u^{\alpha-(2+\alpha)\epsilon+2\epsilon^2})$ space. Consequently, the total space complexity is $S=\tilde{O}(N^2/u^{\alpha-3/4\epsilon}+N^{2-2\epsilon}/u^{\alpha-(2+\alpha)\epsilon+2\epsilon^2})$.

\textbf{Query time complexity}. Step (1) of the query algorithm can be done in $O(u^{\alpha-\epsilon})$ as this is the size of the largest small set. Step (2) is done in constant time plus the output size by looking at the right position in $M$. In step (3) we do $\log{n}$ queries using algorithm $A'$ and the data structures $D_k$. The universe of the sets after applying any hash function $h_k$ is $[u]$, so the query time for each query is $O(u^{\alpha-\epsilon}+out)$ ($out$ is the size of the output we get from the query). We do not allow the query to output more than $u^{2\alpha-1-3/2\epsilon} < u^{\alpha-\epsilon}$ false-positive elements. Therefore, the total query time is $O(T+out)$, where $T=O(u^{\alpha-\epsilon})$.

\smallskip

Following our analysis we have that $S \cdot T = \tilde{O}((N^2/u^{\alpha-3/4\epsilon}+N^{2-2\epsilon}/u^{\alpha-(2+\alpha)\epsilon+2\epsilon^2}) \cdot (u^{\alpha-\epsilon})) = \tilde{O}(N^2u^{-1/4\epsilon}+N^{2-2\epsilon}u^{(1+\alpha)\epsilon-2\epsilon^2})$. As $\alpha \leq 1$ and $u \leq N$, we have that  $u^{(1+\alpha)\epsilon} \leq N^{2\epsilon}$. Therefore, $S \cdot T = \tilde{O}(N^2u^{-1/4\epsilon}+N^2u^{-2\epsilon^2})$.
This contradicts the Strong SetIntersection Conjecture and therefore our assumption is false.
\end{proof}

The construction in the proof of Lemma~\ref{lem:3sum_indexing_setdisjointness_reduction} can be modified in order to obtain the following reduction from 3SUM-Indexing to SetIntersection:

\setcounter{theorem}{16}

\begin{lemma}\label{lem:3sum_indexing_setintersection_reduction}
For any $0 < \gamma <\delta \leq 1$, an instance of 3SUM-Indexing that contains 2 arrays with $n$ integers can be reduced to an instance $SI$ of SetIntersection. The instance $SI$ have $N=n\sqrt{u}$ elements from universe $[u]$ and $m=n^{1+\gamma-\delta/2}$ sets that each one of them is of size $O(\sqrt{u})$, where $u=n^{\delta}$ and $0 < 2\gamma <\delta \leq 1$. The time and space complexity of the reduction is $\tilde{O}(n^{2+2\gamma-\delta})$. Each query to the 3SUM-Indexing instance can be answered by at most $O(n^{1+\gamma-\delta})$ queries to $SI$ plus some additional $O(\log{n})$ time.
\end{lemma}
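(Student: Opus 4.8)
The plan is to reuse the construction in the proof of Lemma~\ref{lem:3sum_indexing_setdisjointness_reduction}, exploiting the one essential difference between the two target problems: a SetIntersection query \emph{reports} every element of the intersection, whereas a SetDisjointness query only reveals whether it is empty. In Lemma~\ref{lem:3sum_indexing_setdisjointness_reduction} the hybrid quad tree was built solely to \emph{recover} the witnesses of a nonzero convolution entry by recursively narrowing the search, because each SetDisjointness call returns a single bit. With SetIntersection, one call on the right pair of shifted characteristic (sub)vectors hands back all witnesses at the queried convolution position at once, so the quad tree disappears entirely and we build a single SetIntersection instance $SI$ at the one scale $u=n^{\delta}$.

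Concretely, I would keep the \emph{Initial Construction} of Lemma~\ref{lem:3sum_indexing_setdisjointness_reduction} essentially verbatim: an almost-linear, almost-balanced hash $h_1:U\to[R]$ splits $A$ and $B$ into $R$ buckets; the $O(R)$ overflowed elements go into sorted lists $L_A,L_B$; $A$ and $B$ are sorted with lookup tables; and a second almost-linear hash $h_2$ turns each bucket into a characteristic vector. Then, exactly as in the \textbf{Convolution by SetDisjointness} paragraph, I replace each bucket's length-$\Theta(u)$ sub-vectors by their $O(\sqrt{u})$ cyclic shifts --- duplicating a sub-vector that carries too many ones so that every resulting set has size $O(\sqrt{u})$ --- and take the ones of these shifted sub-vectors as the sets of $SI$. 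The universe of $SI$ is then $[O(u)]$, every set has size $O(\sqrt{u})$, and since each $h_2$-value reappears $O(\sqrt{u})$ times across the shifts the total count of elements is $N=O(n\sqrt{u})$; choosing $R=n^{\gamma}$ and tallying the shifted sub-vectors over the $R$ buckets of both arrays yields $m=n^{1+\gamma-\delta/2}$ sets. The preprocessing cost is dominated by producing and storing these sets (the $\tilde O(n)$-time convolutions only serve to lay out the shift pattern and are not stored), which together with $R=n^{\gamma}$ comes out to $\tilde O(n^{2+2\gamma-\delta})$; the hypothesized relation between $\gamma$ and $\delta$ is precisely what keeps this within the claimed bound.

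For a query $z$ I would first resolve the overflowed elements of $L_A,L_B$ against the sorted arrays in $O(\log n)$ time, then, by (almost) linearity of $h_1$, restrict to the $O(R)$ bucket pairs $(A_i,B_{i-h_1(z)})$; inside each pair only the sub-vector/shift combinations consistent with the carry structure of the target position $h_2(z)$ matter, and summed over all bucket pairs this is $O(n^{1+\gamma-\delta})$ SetIntersection queries. Each element returned is an $h_2$-value $v$; storing alongside each set element a pointer to the original elements of $A$ (resp.\ $B$) that map to it lets us, for every returned $v$, enumerate the underlying pair(s) and test $x+y=z$ within the stated $O(\log n)$ extra time. Correctness is inherited, through (almost) linearity of $h_1$ and $h_2$, from Lemma~\ref{lem:3sum_indexing_setdisjointness_reduction} and the standard $3$SUM reductions.

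The step I expect to dominate the difficulty is controlling the \emph{output size} of the SetIntersection queries, i.e.\ the number of \emph{false witnesses}: pairs $(x,y)$ with $h_2(x)+h_2(y)\equiv h_2(z)$ yet $x+y\neq z$. Since the range of $h_2$ has size $\Theta(u)$, the expected number of false witnesses in a fixed bucket pair is $O((n/R)^2/u)$, and the corresponding global bound must be shown to stay below the $O(\sqrt{u})$-per-query budget; if it does not do so deterministically, one caps the reported elements per query and reruns the construction with a fresh $h_2$, invoking the probabilistic method exactly as in the correctness arguments of Theorem~\ref{thm:SD_small_universe_LB} and Theorem~\ref{thm:SI_small_universe_LB_from_SD}. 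This is also why the SetIntersection lower bounds are phrased with an explicit restriction on $out$, and it is what makes the accounting here more delicate than in the SetDisjointness reduction.
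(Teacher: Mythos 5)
Your route genuinely differs from the paper's: the paper does \emph{not} discard the quad tree. It builds the explicit quad tree down to level $\ell_X$ (with $X=u$), storing convolution results at every explicit level, and only at that bottom level converts the shifted sub-vectors into a single SetIntersection instance; the query then walks a root-to-leaf path through each of the $O(R)$ relevant quad trees and fires one SI query at each reached leaf. You propose to skip the trees entirely. For the \emph{count} of SI queries this does not matter, and that part of your reasoning is sound: with or without pruning there are $O(n/X)$ compatible sub-vector pairs per bucket pair and $O(R)$ bucket pairs, giving $O(nR/X)=O(n^{1+\gamma-\delta})$ queries, and the $O(\log n)$ per-query overhead that the paper charges to the tree walk can in your version be charged to binary-searching for the compatible sub-vector pairs given $h_2(z)$ and the duplication bookkeeping.

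The gap is in the preprocessing/space bound. The paper's $\tilde O(n^{2+2\gamma-\delta})$ is derived precisely as the cost $\tilde O(n^2R^2/X)$ of the $R^2$ explicit quad trees, i.e.\ the structure you have thrown away. Having removed it you must substitute your own computation, but you only assert that producing the sets ``together with $R=n^{\gamma}$ comes out to $\tilde O(n^{2+2\gamma-\delta})$''. That is not a calculation, and the claim is false over the stated range: the dominant cost of your variant is materialising the $SI$ instance itself, which has $N=n\sqrt{u}=n^{1+\delta/2}$ elements, and $1+\delta/2 \leq 2+2\gamma-\delta$ fails whenever $\delta > (2+4\gamma)/3$ (for instance $\delta$ near $1$ and $\gamma$ small, which the constraint $0<2\gamma<\delta\leq1$ permits). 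So the ``hypothesized relation between $\gamma$ and $\delta$'' does not by itself secure the stated bound; you would either have to restrict the parameter range or recover the bound by some other means, and this is the step that needs to be made concrete. A smaller issue along the same lines: the duplication in the underlying construction caps the number of ones per sub-vector at $X/R=u/R$, so the resulting set sizes come out to $O(u/R)=O(n^{\delta-\gamma})$, not the $O(\sqrt{u})$ you describe; capping at $\sqrt{u}$ instead would change the number of duplicates and hence $m$, so the set-size and set-count clauses cannot both be matched with the parameters as you have set them.
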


\begin{proof}
We follow the construction from the proof of Lemma~\ref{lem:3sum_indexing_setdisjointness_reduction}. In each quad tree we construct for some two buckets $A_i$ and $B_j$, we save the convolution results of the corresponding sub-vectors until the bottom level in which the size of each subvector is $X$. In this level, for each pair of sub-vectors we create $O(\sqrt{X})$ sets (representing different shifts) in the same way we construct the sets for the SetDisjointness instances in the proof of Lemma~\ref{lem:3sum_indexing_setdisjointness_reduction}. These sets form a SetIntersection instance that contains $O(R \cdot n/X \cdot \sqrt{X})=O(nR/\sqrt{X})$ sets. In the query phase, whenever we search a quad tree and get to a leaf node we can immediately report all pairs of elements that are witnesses for $C_{i,j}[h_2(z)]$. This is easily done by a single SetIntersection query. The number of sub-vectors in the bottom level is $O(n/X)$ for both $v_{A_i}$ and $v_{B_j}$. For every sub-vector of $v_{A_i}$ there are at most $O(1)$ sub-vectors of $v_{B_j}$ that their convolution with $v_{A_i}$ may contain a witness pair for $C_{i,j}[h_2(z)]$. Consequently, we do at most $O(n/X)$ intersection queries within each quad tree.

Therefore, the total space for constructing the quad trees' levels with explicit convolution results is $\tilde{O}(n^2/X \cdot R^2)$ (see the full analysis in the proof of Lemma~\ref{lem:3sum_indexing_setdisjointness_reduction}). This is also the preprocessing time for constructing these quad trees as the convolution of two $n$-length vectors can be calculated in $\tilde{O}(n)$ time. It is clear that the space and preprocessing time are truly subquadratic in $n$ for any $\delta>2\gamma>0$. Moreover, the query time overhead is no more than $O(\log{n})$ for every query (a search through a path from the root to a leaf in some quad tree).
\end{proof}

\begin{theorem}\label{thm:bounded_setintersection_time_clb}
Any solution to SetIntersection with sets $S_1,S_2,...,S_m \subseteq [u]$ for any value of $u \in [N^{\delta},N]$, such that $N = \sum_{i=1}^{m}{|S_i|}$ and $\delta>0$, must either have $\Omega(m^{2-o(1)})$ preprocessing time or have $\tilde{\Omega}(u^{1-o(1)}+out)$ query time, unless the 3SUM Conjecture is false.
\end{theorem}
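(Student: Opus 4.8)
The plan is to mirror the proof of Theorem~\ref{thm:bounded_setdisjointness_time_clb}, substituting Lemma~\ref{lem:3sum_indexing_setintersection_reduction} for Lemma~\ref{lem:3sum_indexing_setdisjointness_reduction}. Given a 3SUM instance on arrays $A,B,C$ with $n$ numbers each, I would first form a 3SUM-Indexing instance on $A$ and $B$ and issue $n$ queries, one per element of $C$ (asking, for each $c$, whether some $a\in A,b\in B$ satisfy $a+b=-c$). Fixing $u=n^{\delta}$ and a parameter $\gamma$ with $0<2\gamma<\delta$ to be chosen later, I would apply Lemma~\ref{lem:3sum_indexing_setintersection_reduction} to obtain a SetIntersection instance $SI$ with $N=n\sqrt{u}=n^{1+\delta/2}$ elements, universe $[u]$, and $m=n^{1+\gamma-\delta/2}$ sets, each of size $O(\sqrt{u})$. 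The reduction costs $\tilde{O}(n^{2+2\gamma-\delta})$ time and space, which is truly subquadratic because $2\gamma<\delta$, and each of the $n$ 3SUM-Indexing queries is answered by $O(n^{1+\gamma-\delta})$ queries to $SI$ plus lower-order overhead per query. Since $N=n^{1+\delta/2}$ we have $u=N^{\delta'}$ with $\delta'=\delta/(1+\delta/2)>0$ and $u\le N$, so $SI$ lies in the regime $u\in[N^{\delta'},N]$ covered by the theorem, and it therefore suffices to derive a contradiction from a too-good data structure for this single value of $u$.

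Next I would assume, towards a contradiction, a SetIntersection data structure that uses $O(m^{2-\epsilon_1})$ preprocessing time and answers queries in $\tilde{O}(u^{1-\epsilon_2}+out)$ time for some $\epsilon_1,\epsilon_2>0$, and add up the cost of solving 3SUM through it. There are four contributions: (a) the reduction, $\tilde{O}(n^{2+2\gamma-\delta})$; (b) building the $SI$ data structure, $O(m^{2-\epsilon_1})=O(n^{(1+\gamma-\delta/2)(2-\epsilon_1)})$, which is $O(n^{2-\epsilon_1})$ since $1+\gamma-\delta/2<1$ (as $\gamma<\delta/2$); (c) the $O(n^{2+\gamma-\delta})$ calls to the $SI$ data structure, each costing $\tilde{O}(u^{1-\epsilon_2})=\tilde{O}(n^{\delta-\delta\epsilon_2})$ for the non-output part, i.e.\ $\tilde{O}(n^{2+\gamma-\delta\epsilon_2})$ in total; and (d) the total reported output. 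For (d) I would reuse the witness analysis from the proof of Lemma~\ref{lem:3sum_indexing_setdisjointness_reduction}: with $R=n^{\gamma}$ buckets the expected number of false witnesses per 3SUM-Indexing query is $O(n^{1-\gamma})$, and at most one real witness needs to be reported before a query is answered, so the expected total output over all $n$ queries is $O(n^{2-\gamma})$.

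Finally I would choose $\gamma$ small, e.g.\ $\gamma=\frac{1}{4}\delta\epsilon_2$, which satisfies both $2\gamma<\delta$ (so Lemma~\ref{lem:3sum_indexing_setintersection_reduction} applies) and $\gamma<\delta\epsilon_2$ (so contribution (c) becomes $\tilde{O}(n^{2-3\delta\epsilon_2/4})$); contributions (a), (b) and (d) are truly subquadratic for every admissible $\gamma$. Summing, 3SUM is solved in expected truly subquadratic time, contradicting the 3SUM Conjecture, and — exactly as in Lemma~\ref{lem:3sum_indexing_setdisjointness_reduction} — the probabilistic method lets one fix good hash functions so the bound holds deterministically. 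The one point that needs care is contribution (d): a single $SI$ query can in principle report a large intersection, so the argument must aggregate the output over all $O(n^{2+\gamma-\delta})$ calls and rely on the global false-witness bound rather than on any per-query bound; the remaining steps are a routine balancing of exponents.
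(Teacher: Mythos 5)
Your proposal is correct and takes essentially the same route as the paper: reduce 3SUM to 3SUM-Indexing, apply Lemma~\ref{lem:3sum_indexing_setintersection_reduction} to a single $u=n^{\delta}$, and balance exponents against the assumed $O(m^{2-\epsilon_1})$ preprocessing / $\tilde{O}(u^{1-\epsilon_2}+out)$ query bound. Your write-up is in fact tighter than the paper's: you explicitly aggregate the output term over all $O(n^{2+\gamma-\delta})$ SI calls using the global false-witness bound $O(n/R)=O(n^{1-\gamma})$ per 3SUM-Indexing query, and your choice $\gamma=\tfrac14\delta\epsilon_2$ simultaneously enforces $2\gamma<\delta$ and $\gamma<\delta\epsilon_2$; the paper's stated condition ``$\delta>\max\{2,1/\epsilon_2\}$'' contradicts Lemma~\ref{lem:3sum_indexing_setintersection_reduction}'s requirement $\delta\le 1$ and is evidently a typo for a constraint on the ratio $\delta/\gamma$, which your parameter choice satisfies.
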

\begin{proof}
Given an instance of the 3SUM problem that contains 3 arrays $A,B$ and $C$ with $n$ numbers in each of them, we can solve this instance simply by creating a 3SUM indexing instance with arrays $A$ and $B$ and $n$ queries - one for each number in $C$. Thus, using the previous lemma the given 3SUM instance can be reduced to an instance of SetIntersection with $m=n^{1+\gamma-\delta/2}$ sets from universe $[u]$ using $O(n^{2+2\gamma-\delta})$ time for preprocessing, where the total number of queries to these instances is $O(n^{2+\gamma-\delta})$.

We assume to the contradiction that there is an algorithm that solves SetIntersection on $m$ sets from a universe $[u]$ with $O(m^{2-\epsilon_1})$ preprocessing time for some $\epsilon_1 > 0$ and $O(u^{1-\epsilon_2}+out)$ query time for some $0<\epsilon_2 \leq 1 $. If we choose the value of $\delta$ such that $\delta>\max{2,\frac{1}{\epsilon_2}}$, then we have a solution to 3SUM with truly subquadratic running time. This contradicts the 3SUM Conjecture.
\end{proof}

\newpage

\section{Hardness of Reporting Problems}\label{sec:hardness_of_reporting}
\subsection{Range Mode Reporting}
In the reporting variant of the Range Mode problem we are required to report all elements in the query range that are the mode of this range. We have stronger lower bounds for this variant using the same construction as in the proof of Theorem~\ref{thm:range_mode_space_clb} with the conditional lower bounds for SetIntersection with bounded universe. The results refer to both the interplay between space and query time and the interplay between preprocessing and query time.
\begin{theorem}\label{thm:range_mod_reportinge_space_clb}
Any data structure that answers Range Mode Reporting in $O(T+out)$ time on a string of length $n$, where $out$ is the output size, must use $S=\tilde{\Omega}(n^2/T^2)$ space, unless the Strong SetIntersection Conjecture is false.
\end{theorem}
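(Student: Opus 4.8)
The plan is to reuse the reduction from the proof of Theorem~\ref{thm:range_mode_space_clb} verbatim, but feed it an instance of SetIntersection (rather than SetDisjointness) and track the output. Recall that from a SetIntersection instance with sets $S_1,\dots,S_m\subseteq[u]$, with $u\in[N^\delta,N]$ and $N=\sum_i|S_i|$, we build a string $STR$ of length $n=2mu$ such that a query for the pair $(i,j)$ is answered by the range mode query on $[a_i+1,b_j]$: every character (element of $[u]$) occurs exactly $m-i+j-1$ times in that range from the ``background'' blocks, and occurs one extra time for each element of $S_i\cap S_j$, so the set of mode elements of the range is precisely $\{$common frequency $m-i+j$ element, up to a $+1$ shift$\}$ — more to the point, the elements attaining the maximum frequency $m-i+j+1$ are exactly the elements of $S_i\cap S_j$. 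Hence a Range Mode Reporting query on $[a_i+1,b_j]$ reports exactly $S_i\cap S_j$ (after a constant-time-per-element filter using the stored hash tables for $S_i,S_j$ to discard any element of background frequency that happens to be reported), so one Range Mode Reporting query answers one SetIntersection query with output size $out = |S_i\cap S_j|$.

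Next I would carry the upper-bound assumption through the reduction. Assume for contradiction a Range Mode Reporting structure with $S\cdot T^2 = \tilde O(n^{2-\epsilon})$ for some $\epsilon>0$, query time $O(T+out)$. Set $T=\tilde\Theta(u^{1-\epsilon/2})$; then $S=\tilde O(n^{2-\epsilon'}/T^2)=\tilde O((mu)^{2-\epsilon'}/u^{2-\epsilon}) = \tilde O(m^{2-\epsilon'}u^{-\epsilon'})$ for a suitable $\epsilon'$ (choosing constants so the $u$-exponent is nonpositive), i.e.\ $S=\tilde O(m^{2-\epsilon'})$. Adding the $\tilde O(mu)$ term needed just to store $STR$, and invoking — exactly as in the proof of Theorem~\ref{thm:range_mode_space_clb} — the range $u\le N^{2/3-\epsilon}$ in which $mu=\tilde O(m^{2-\epsilon'})$ and the hypotheses of Corollary~\ref{thm:SD_small_universe_LB_for_specific_m} (equivalently, of Theorem~\ref{thm:SI_small_universe_LB_from_SD}) hold, we obtain a SetIntersection data structure using $\tilde O(m^{2-\epsilon'})$ space with query time $\tilde O(u^{1-\epsilon/2}+out)$. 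Taking $\alpha=1$ in Theorem~\ref{thm:SI_small_universe_LB_from_SD} (so $u^{\alpha}=u$ and the output restriction $out=\Omega(u^{2\alpha-1-\delta})=\Omega(u^{1-\delta})$ is automatically satisfiable), this contradicts the Strong SetIntersection Conjecture, which is what we wanted; alternatively one runs the same contradiction directly through the $\alpha=1$ instance of the second SetIntersection theorem in Appendix~\ref{sec:clb_setintersection}, whichever is stated in the form $S\cdot T=\tilde\Omega(N^2)$ under the Strong SetIntersection Conjecture.

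The one place that needs genuine care — and the main obstacle — is making sure the output-size bookkeeping is consistent at both ends: the Range Mode Reporting query time is $O(T+out)$ with $out=|S_i\cap S_j|$, and the SetIntersection lower bound (Theorem~\ref{thm:SI_small_universe_LB_from_SD} with $\alpha=1$) is stated as $\tilde\Omega(u^{1-o(1)}+out)$ only for output sizes $out=\Omega(u^{2\alpha-1-\delta})=\Omega(u^{1-\delta})$, so I must check that the hard SetIntersection instances produced inside that theorem's proof (the medium-set instances with the $\log n$ hash functions) have intersections large enough to be in the admissible output-size regime — this is exactly the regime the theorem is phrased for, so it goes through, but it should be stated rather than glossed. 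Everything else (the string construction, the $n=2mu$ length, the $u\le N^{2/3-\epsilon}$ window from Corollary~\ref{thm:SD_small_universe_LB_for_specific_m}) is identical to Theorem~\ref{thm:range_mode_space_clb} and can be cited rather than repeated. I would close by remarking that the improvement from $S\cdot T^4=\tilde\Omega(n^2)$ to $S\cdot T^2=\tilde\Omega(n^2)$ is precisely the strength gained by replacing the $\Omega(u^{1/2})$ SetDisjointness query lower bound with the $\Omega(u)$ SetIntersection query lower bound in the reduction.
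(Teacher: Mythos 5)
Your overall approach matches the paper's (very terse) sketch for this theorem: reuse the string $STR$ of length $n=2mu$ from the proof of Theorem~\ref{thm:range_mode_space_clb}, answer a SetIntersection query $(i,j)$ by a single Range Mode \emph{Reporting} query on $[a_i+1,b_j]$, and then invoke a bounded-universe SetIntersection lower bound instead of the SetDisjointness one. (One remark on which lower bound to cite: Theorem~\ref{thm:SI_small_universe_LB_from_SD} is conditioned on the Strong \emph{SetDisjointness} Conjecture, so for a statement conditioned on the Strong \emph{SetIntersection} Conjecture you must use the theorem that follows it in Appendix~\ref{sec:clb_setintersection}, as you note as an ``alternative''; that alternative is in fact the main route.) The algebra with $T=\tilde\Theta(u^{1-\epsilon/2})$ and $n=2mu$, together with the $u\le N^{2/3-\epsilon}$ window from Corollary~\ref{thm:SD_small_universe_LB_for_specific_m}, is the right way to recover the $S\cdot T^2$ form, and your closing remark about why the exponent of $T$ drops from $4$ to $2$ is correct.

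There is, however, a genuine gap in the step where you assert that the reporting query ``reports exactly $S_i\cap S_j$ \dots\ so one Range Mode Reporting query answers one SetIntersection query with output size $out=|S_i\cap S_j|$.'' This is true only when $S_i\cap S_j\neq\emptyset$. When $S_i\cap S_j=\emptyset$, the maximum frequency on $[a_i+1,b_j]$ is $m-i+j$ (not $m-i+j+1$), and the set of modes is the \emph{symmetric difference} $S_i\triangle S_j$ (or all of $[u]$ if both are empty), not the empty set. The reporting structure then spends $\Theta(T+|S_i\triangle S_j|)$ time on that query; for the medium sets arising inside the bounded-universe SetIntersection hardness proof, $|S_i\triangle S_j|$ can be $u^{1-\Theta(\epsilon)}$ with an exponent strictly \emph{larger} than the exponent $1-\epsilon/2$ of your chosen $T$, so the budget is blown. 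A ``constant-time-per-element filter'' does not save you: each check is $O(1)$, but there are too many reported elements. The fix, which should be stated explicitly, is to exploit that all reported elements are modes and therefore share the same frequency. Read only the \emph{first} reported element $e$ and test $e\in S_i$ and $e\in S_j$ with the stored hash tables. If $e\in S_i\cap S_j$, the mode frequency must be $m-i+j+1$, so the entire reported stream equals $S_i\cap S_j$ and you continue draining it. If $e\notin S_i\cap S_j$, the mode frequency is at most $m-i+j$, hence $S_i\cap S_j=\emptyset$, and you abort after $O(1)$ extra work. Under the standard incremental-enumeration reading of ``$O(T+out)$ query time,'' this caps the simulated SetIntersection query time at $O(T+out_{\mathrm{SI}})$ in all cases, after which your argument goes through.
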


\begin{theorem}\label{thm:range_mode_reporting_time_clb}
Any data structure that answers Range Mode Reporting in $O(T+out)$ time on a string of length $n$, where $out$ is the output size, must must have $P=\tilde{\Omega}(n^2/T^2)$ preprocessing time, unless the 3SUM Conjecture is false.
\end{theorem}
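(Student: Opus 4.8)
The plan is to imitate the reduction in the proof of Theorem~\ref{thm:range_mode_space_clb}, but to start from a SetIntersection instance instead of a SetDisjointness instance and to invoke the 3SUM-based preprocessing/query-time lower bound of Theorem~\ref{thm:bounded_setintersection_time_clb} in place of the space bound. Given a SetIntersection instance with sets $S_1,\dots,S_m\subseteq[u]$, $N=\sum_i|S_i|$, and $u=N^{\delta}$ for a suitable constant $\delta\in(0,1)$ to be fixed, I would build verbatim the string $STR=T_1T_2$ of length $n=2mu$ from that proof, together with a hash table for every $S_i$. The only new observation needed is already implicit there: for a query $(i,j)$, every element of $S_i\cap S_j$ occurs exactly $m-i+j+1$ times in the range $[a_i+1,b_j]$, while every element of $[u]\setminus(S_i\cap S_j)$ occurs at most $m-i+j$ times, so when $S_i\cap S_j\neq\emptyset$ the set of \emph{mode} elements of $[a_i+1,b_j]$ is \emph{exactly} $S_i\cap S_j$. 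Hence one Range Mode Reporting query on $[a_i+1,b_j]$, after discarding every returned element that fails the membership test against $T_i$ and $T_j$, answers the SetIntersection query, and when the answer is non-empty it does so in time $O(T+out)$ with $out=|S_i\cap S_j|$.

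The step I expect to require the most care is the empty case $S_i\cap S_j=\emptyset$: there the mode of $[a_i+1,b_j]$ equals $m-i+j$ and is attained by every element of $S_i\triangle S_j$ --- none of which survives the membership check, so we would correctly output ``empty'' --- but $|S_i\triangle S_j|$ could in principle be $\Theta(N)$, which would make the reporting query far too expensive. I would resolve this by applying Theorem~\ref{thm:bounded_setintersection_time_clb} to exactly the hard instances produced by Lemma~\ref{lem:3sum_indexing_setintersection_reduction}, in which every set has size $O(\sqrt u)$; after discarding empty sets in advance, a reporting query on $[a_i+1,b_j]$ then returns only $O(\sqrt u)$ elements, so the whole SetIntersection query costs $O(T+\sqrt u)$ even on an empty pair. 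Since Theorem~\ref{thm:bounded_setintersection_time_clb} already permits query time $\tilde{\Omega}(u^{1-o(1)})$ when $out=0$, an $O(T+\sqrt u)$ bound with $T=o(u^{1-o(1)})$ is still small enough to reach a contradiction. Alternatively, assuming a reporting query may be halted after emitting its first element yields a clean $O(T+out)$ bound uniformly in all cases: inspect one mode element, and continue the enumeration only if it lies in both $S_i$ and $S_j$.

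For the contradiction, I would suppose that Range Mode Reporting admits a data structure answering queries in $O(T+out)$ time using $\tilde{O}(P)$ preprocessing time with $P\cdot T^2=\tilde{O}(n^{2-\epsilon})$ for some $\epsilon>0$, and choose the constant $\delta$ small enough that, for the parameters $m$ and $u$ of the SetIntersection instances produced by Lemma~\ref{lem:3sum_indexing_setintersection_reduction} (where $u$ is a fixed power of the input size, bounded away from $m$ as $\delta\to0$), one has $u=O(m^{1-\epsilon})$ and $u\in[N^{\delta_1},N]$ for some $\delta_1>0$; this is the same ``range of $m$'' bookkeeping done in the proof of Theorem~\ref{thm:range_mode_space_clb}. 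Instantiating the reduction on such an instance with string length $n=2mu$ and setting $T=u^{1-\epsilon/4}$ gives $P=\tilde{O}(n^{2-\epsilon}/T^2)=\tilde{O}((mu)^{2-\epsilon}/u^{2-\epsilon/2})=\tilde{O}(m^{2-\epsilon}u^{-\epsilon/2})=\tilde{O}(m^{2-\epsilon})$, while writing down $STR$ and the hash tables costs $\tilde{O}(mu+N)=\tilde{O}(mu)=\tilde{O}(m^{2-\epsilon})$. Altogether this would give a data structure for SetIntersection with $m$ sets from $[u]$ using $\tilde{O}(m^{2-\epsilon})$ preprocessing and $O(u^{1-\epsilon/4}+out)$ query time; since $m^{2-\epsilon}=o(m^{2-o(1)})$ and $u^{1-\epsilon/4}=o(u^{1-o(1)})$, this contradicts Theorem~\ref{thm:bounded_setintersection_time_clb} and hence the 3SUM Conjecture. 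The only steps left are the routine exponent arithmetic above and checking the parameter inequalities relating $\delta$, $u$, and $m$.
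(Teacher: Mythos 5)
Your proof is correct and follows exactly the route the paper indicates: the paper gives no explicit proof for this theorem, only the one-sentence pointer to "the same construction as in the proof of Theorem~\ref{thm:range_mode_space_clb} with the conditional lower bounds for SetIntersection with bounded universe," which here means the 3SUM-based bound of Theorem~\ref{thm:bounded_setintersection_time_clb}. Your observation about the empty-intersection case --- that the reporting query then returns all of $S_i \triangle S_j$, which could a priori be large, and that this is controlled either because the hard instances from Lemma~\ref{lem:3sum_indexing_setintersection_reduction} have every set of size $O(\sqrt{u})$ (so the spurious output $O(\sqrt{u})$ is absorbed into $T=u^{1-\epsilon/4}$) or by lazy enumeration of mode elements --- fills in a genuine detail the paper silently skips, and the remaining exponent bookkeeping matches the proof of Theorem~\ref{thm:range_mode_space_clb}.
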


\subsection{3SUM-Indexing Reporting}
In the reporting variant of 3SUM-Indexing we are required to report all pairs of numbers $a\in A$ and $b\in B$ such that their sum equals the query number. Using our hardness results for SetIntersection with bounded universe we prove the following conditional lower bounds on 3SUM-Indexing reporting. These results are obtained by applying the same techniques as in the proof of Theorem~\ref{3sum_indexing_space_clb}.

\begin{theorem}\label{3sum_indexing_reporting_space_clb}
For any $\epsilon>0$ and $0 < \delta \leq 1$, any solution to 3SUM-Indexing reporting with arrays $A=a_1,a_2,...,a_n$ and $B=b_1,b_2,...,b_n$ such that for every $i \in [n]$ $a_i,b_i \in [n^{2+\epsilon-\delta}]$ must either use $\Omega(n^{2-\delta-o(1)})$ space or have $\tilde{\Omega}(n^{\delta-o(1)} + out)$ query time, where $out$ is the output size, unless Strong SetIntersection Conjecture is false.
\end{theorem}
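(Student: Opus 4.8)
The plan is to replay the proof of Theorem~\ref{3sum_indexing_space_clb} with two modifications: start from SetIntersection with bounded universe instead of SetDisjointness, taking the lower bound based on the Strong SetIntersection Conjecture (the one whose space term is $\Omega((m^2u^{\alpha})^{1-o(1)})$) specialized to $\alpha=1$; and make the reduction a \emph{reporting} reduction, so that each SetIntersection query is answered by one 3SUM-Indexing reporting query with exactly the same output. As in Theorem~\ref{3sum_indexing_space_clb}, I will use the ``fixed $m$'' form of the source bound --- the analogue of Corollary~\ref{thm:SD_small_universe_LB_for_specific_m} --- which its proof supplies: hardness holds for every $m$ in a narrow window just above $N/u^{\alpha}$, of the form $[\,N/u^{1-3\epsilon_0/4},\,N/u^{1-\epsilon_0}\,]$ for an auxiliary slack parameter $\epsilon_0>0$ we are free to choose.

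First I would set $u=N^{\delta}$ and take a SetIntersection instance $S_1,\dots,S_m\subseteq[u]$ with $N=\sum_i|S_i|$ and $m$ in that window, so $m=\Theta(N^{1-\delta+\epsilon_0\delta})$; the crucial point is that $m$ is now only about $N^{1-\delta}$, well below $N$, because with $\alpha=1$ the hard sets have size roughly $u^{1-\epsilon_0}$ rather than $\sqrt{u}$. Then I would apply the three-block bit-packing of Theorem~\ref{3sum_indexing_space_clb} verbatim: for $x\in S_i$ put into $A$ the number whose blocks, from least to most significant, are $(i,\,0^{\log m},\,x-1)$ and into $B$ the number with blocks $(0^{\log m},\,i,\,u-x)$, and translate the query ``$S_i\cap S_j=\emptyset$?'' into the query number with blocks $(i,\,j,\,u-1)$. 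The pairs $(a,b)\in A\times B$ summing to that number are exactly $\{(x_{1,i},x_{2,j}):x\in S_i\cap S_j\}$, so a single reporting query returns precisely $S_i\cap S_j$, with identical output size and no further overhead. Setting $n=N$, each number uses $2\log m+\log u\le(2-\delta+2\epsilon_0\delta)\log N$ bits, so choosing $\epsilon_0$ small enough in terms of $\epsilon$ and $\delta$ puts the universe inside $[n^{2+\epsilon-\delta}]$.

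To finish I would assume, for contradiction, a 3SUM-Indexing reporting structure using $\tilde{O}(n^{2-\delta-\gamma_1})$ space and $\tilde{O}(n^{\delta-\gamma_2}+out)$ query time. Through the reduction this solves SetIntersection on $m$ sets from $[u]$ with $\tilde{O}(N^{2-\delta-\gamma_1})$ space and, using $N=u^{1/\delta}$, with $\tilde{O}(u^{1-\gamma_2/\delta}+out)$ query time. Choosing the output size to be $u^{1-\gamma_2/(2\delta)}$ --- which meets the side condition $out=\Omega(u^{2\alpha-1-\delta'})$ with $\alpha=1$ and $\delta'=\gamma_2/(2\delta)$ --- the query time becomes $\tilde{O}(u^{1-\gamma_2/(2\delta)})$, polynomially smaller than the $\tilde{\Omega}(u^{1-o(1)})$ the SetIntersection bound demands; and the space $\tilde{O}(N^{2-\delta-\gamma_1})$ is polynomially smaller than $\Omega((m^2u)^{1-o(1)})=\Omega(N^{(2-\delta+2\epsilon_0\delta)(1-o(1))})$ because $2-\delta-\gamma_1<2-\delta+2\epsilon_0\delta$. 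Both branches of the fixed-$m$ SetIntersection lower bound fail, contradicting the Strong SetIntersection Conjecture.

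The main obstacle is getting the three exponents to line up at once. Demanding a $[n^{2+\epsilon-\delta}]$ universe forces $m\approx N^{1-\delta}$ --- this is exactly where the ``$-\delta$'' in the universe exponent comes from, in contrast with the $[n^{2+\epsilon}]$ of the SetDisjointness reduction where $m\approx N^{1-\delta/2}$ --- and with such a small $m$ the translated $\tilde{O}(N^{2-\delta})$ space must still undercut the space lower bound, which succeeds only because the Strong-SetIntersection-Conjecture bound carries the extra $u^{\alpha}$ factor, contributing precisely the missing $N^{\delta}$. A secondary point is checking that the intersection sizes arising in the reduction can be taken in the range $out=\Omega(u^{2\alpha-1-\delta'})$ required by the SetIntersection bound, so that the query branch is genuinely contradicted; this holds because the hard SetIntersection instances may be chosen with intersections of the appropriate order and the reduction preserves output sizes exactly.
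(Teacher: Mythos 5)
Your proof is correct and follows the same route the paper intends: the paper's own ``proof'' of this theorem is the single sentence ``These results are obtained by applying the same techniques as in the proof of Theorem~\ref{3sum_indexing_space_clb},'' i.e., run the three-block bit-packing reduction but source it from the bounded-universe SetIntersection lower bound instead of the SetDisjointness one. Your fleshed-out version correctly identifies the needed instantiation ($\alpha=1$, the fixed-$m$ analogue of Corollary~\ref{thm:SD_small_universe_LB_for_specific_m} implicit in the proof of the SI theorem, the smaller $m\approx N^{1-\delta}$ yielding the $[n^{2+\epsilon-\delta}]$ universe, and the choice of output size meeting the $out=\Omega(u^{2\alpha-1-\delta'})$ side condition), so it is the right detailed account of what the paper leaves implicit.
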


\begin{theorem}\label{3sum_indexing_reporting_time_clb}
For any $\epsilon>0$ and $0 < \delta \leq 1$, any solution to 3SUM-Indexing reporting with arrays $A=a_1,a_2,...,a_n$ and $B=b_1,b_2,...,b_n$ such that for every $i \in [n]$ $a_i,b_i \in [n^{2+\epsilon-\delta}]$ must either have $\Omega(n^{2-\delta-o(1)})$ preprocessing time or have $\tilde{\Omega}(n^{\delta-o(1)} +out)$ query time, where $out$ is the output size, unless the 3SUM Conjecture is false.
\end{theorem}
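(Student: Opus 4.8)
The plan is to follow the proof of Theorem~\ref{3sum_indexing_space_clb} almost verbatim, with two changes: replace SetDisjointness by SetIntersection, and replace the space conjecture by the $3$SUM-based preprocessing--query lower bound for SetIntersection with bounded universe, Theorem~\ref{thm:bounded_setintersection_time_clb}. Concretely, I would start from a hard SetIntersection instance with sets $S_1,\dots,S_m\subseteq[u]$, $N=\sum_i|S_i|$ and $u$ polynomially large, and build a $3$SUM-Indexing reporting instance of array length $n=N$ via the bit-block encoding of Theorem~\ref{3sum_indexing_space_clb}: an element $x\in S_i$ contributes $x_{1,i}\in A$ with blocks $[\,i\mid 0^{\log m}\mid x-1\,]$ and $x_{2,i}\in B$ with blocks $[\,0^{\log m}\mid i\mid u-x\,]$, while a SetIntersection query $(i,j)$ becomes the number $z$ with blocks $[\,i\mid j\mid u-1\,]$. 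As there, $x_{1,k_1}+x_{2,k_2}=z$ forces $k_1=i$, $k_2=j$ and the value equality $x=y$, so no spurious carries occur and every witness for $z$ must come from a common element of $S_i$ and $S_j$.

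The new point that makes this a \emph{reporting} statement is that the correspondence is a bijection: the witness pairs of $z$ are exactly $\{(x_{1,i},x_{2,j}) : x\in S_i\cap S_j\}$, so a single reporting query on $z$ enumerates the intersection with output size equal to $|S_i\cap S_j|$. Hence an assumed $3$SUM-Indexing reporting data structure with $O(n^{2-\delta-\gamma_1})$ preprocessing time and $O(n^{\delta-\gamma_2}+out)$ query time (for its universe $[n^{2+\epsilon-\delta}]$) yields a SetIntersection solver with preprocessing $O(N^{2-\delta-\gamma_1})$ (the $\tilde O(N)$ cost of writing down and statically hashing $A$ and $B$ is absorbed) and query time $O(N^{\delta-\gamma_2}+out)$, answering each SetIntersection query by one reporting query with matching output size.

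What remains is to choose the SetIntersection parameters so that the reduction contradicts Theorem~\ref{thm:bounded_setintersection_time_clb} (and hence the $3$SUM conjecture). I would take $u=N^{\delta}$, so that $O(N^{\delta-\gamma_2}+out)$ falls below the $\tilde\Omega(u^{1-o(1)}+out)$ branch, and pick the number of sets $m$ (equivalently the common set size $N/m$) inside the window of sizes for which that theorem's hardness still holds --- as in Corollary~\ref{thm:SD_small_universe_LB_for_specific_m} there is an $N^{o(1)}$-wide range of admissible $m$ around $N/\sqrt{u}$ --- so that $m$ is at least about $N^{1-\delta/2}$ (which forces $N^{2-\delta-\gamma_1}=O(m^{2-c})$ for some $c>0$, below the $\Omega(m^{2-o(1)})$ branch), while simultaneously the $2\log m+\log u$ bits of each encoded number stay at most $(2+\epsilon-\delta)\log n$. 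The main obstacle is precisely this balancing act: one must split the universe budget $n^{2+\epsilon-\delta}$ carefully between the two index blocks ($2\log m$ bits) and the value block ($\log u$ bits), so that both disjuncts of Theorem~\ref{thm:bounded_setintersection_time_clb} stay violated; this is what pins down the exact exponents and the set sizes used (in contrast to the $\sqrt{u}$-sized sets of the SetDisjointness reduction behind Theorem~\ref{thm:bounded_setdisjointness_time_clb}). Once the parameters are fixed, everything else is routine bookkeeping of the kind already done in Theorems~\ref{3sum_indexing_space_clb} and \ref{thm:range_mode_space_clb}, and the theorem follows.
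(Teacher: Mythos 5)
Your high-level plan (bit-block encoding from Theorem~\ref{3sum_indexing_space_clb}, witness bijection so that a single reporting query enumerates $S_i\cap S_j$, then appeal to Theorem~\ref{thm:bounded_setintersection_time_clb}) is the intended route, and the bijection observation is correct. But the parameter choice you settle on is internally inconsistent, so the balancing act you flag as the ``main obstacle'' is not merely bookkeeping --- it is a real gap.

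Concretely, you take $u=N^\delta$ and insist that $m$ be at least about $N^{1-\delta/2}$ (so that the $\Omega(m^{2-o(1)})$ branch of Theorem~\ref{thm:bounded_setintersection_time_clb} kills a preprocessing time of $N^{2-\delta-\gamma_1}$), and also that $2\log m+\log u\le(2+\epsilon-\delta)\log n$ with $n=N$. The second condition forces $m\le N^{1+\epsilon/2-\delta}$. These two constraints are simultaneously satisfiable only when $1-\delta/2\le 1+\epsilon/2-\delta$, i.e.\ only when $\delta\le\epsilon$; for $\delta>\epsilon$ (which the theorem statement allows, since it only asks $\epsilon>0$ and $0<\delta\le 1$) the window of admissible $m$ is empty and the constructed $3$SUM-Indexing instance has numbers outside $[n^{2+\epsilon-\delta}]$, so the assumed data structure cannot even be invoked. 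In other words, $m\approx N/\sqrt u$ is the SetDisjointness regime of Theorem~\ref{3sum_indexing_space_clb}, and importing it here over-spends the universe budget by exactly the factor $u^{1/2}$ that the reporting variant is supposed to save.

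The reporting theorems need the other regime: the hard SetIntersection instances behind Theorems~\ref{thm:SI_small_universe_LB_from_SD}--18 have medium sets of size $\approx u^{\alpha}$ (take $\alpha=1$), hence $m\approx N/u=N^{1-\delta}$, which gives $2\log m+\log u\approx(2-\delta)\log N\le(2+\epsilon-\delta)\log n$ for every $\epsilon>0$. That choice makes the universe condition hold unconditionally and matches the space version (Theorem~\ref{3sum_indexing_reporting_space_clb}) via the $\Omega((m^2u)^{1-o(1)})$ branch. For the time version you still need to argue why the preprocessing lower bound is $\Omega(N^{2-\delta-o(1)})$ rather than the weaker $\Omega(m^{2-o(1)})=\Omega(N^{2-2\delta-o(1)})$ that the literal statement of Theorem~\ref{thm:bounded_setintersection_time_clb} yields at $m\approx N/u$; that requires tracking through Lemma~\ref{lem:3sum_indexing_setintersection_reduction} that the $3$SUM argument actually charges the SetIntersection preprocessing against the original $3$SUM array length $n_0$, with $n_0^2=N^2/u=N^{2-\delta}$, and not just against $m^2$. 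Your write-up acknowledges a balancing act exists but neither carries it out nor notices that your stated choice of $m$ fails it; that is the missing step.
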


\newpage

\section{Algorithms for Solving SetIntersection}\label{sec:algorithms_for_setintersection}
 The Strong SetIntersection Conjecture argues that any solution to SetIntersection such that the query time of the solution is $O(T+out)$, where $out$ is the output size, must use $S=\tilde{\Omega}(\frac{N^2}{T})$ space. In the following subsections we present three simple algorithms that demonstrate how to achieve the $S \cdot T = O(N^2)$ tradeoff. This tradeoff is superior to the tradeoff of Cohen and Porat~\cite{CP10} and Cohen~\cite{Cohen10} where the output size is large (see the discussion in Section~\ref{sec:3sum_clb_sd_si}).

\subsection{Algorithm 1}
We are given an instance of SetIntersection with sets $S_1,S_2, ...,S_m$. Let $S_{k_1},S_{k_2},...,S_{k_p}$ be all the sets in the input instance that their size is larger than $r$, for some integer $r \geq 0$. In the preprocessing phase we create a $p \times p$ matrix $M$, such that the $i$th row and column represent the set $S_{k_i}$. At location $(s,t)$ of matrix $M$, for $1 \leq s,t \leq p$, we save a list of all elements in the intersection of $S_{k_s}$ and $S_{k_t}$. Moreover, for every set $S_i$, for $1 \leq i \leq m$, we save a hash table that contains all the elements in $S_i$. Given a query pair $(i,j)$, if one of the sets $S_i$ or $S_j$ has no more than $r$ elements then we can go over each of the elements of this set and check in the hash table of the other set if this element is also contained in the other set. If both sets have at least $r$ elements then the intersection can be found at location $(i,j)$ of matrix $M$.

\textbf{Analysis}. If at least one of the sets has no more than $r$ elements then the query time is $O(r)$ using the hash table of the other set. If both sets have at least $r$ elements then the answer is kept in matrix $M$, so the query time is $O(out)$. That is, the total query time is $O(r+out)$. The total number of elements in all sets is $O(N)$. Therefore, the size of the hash tables is $O(N)$. The number of sets with at least $r$ elements is no more than $\frac{N}{r}$. The size of the intersection between two sets that one of them has no more than $2r$ elements is bounded by $2r$. Therefore, the total space in $M$ for the rows and columns representing sets that their size is within the range $[r,2r]$ is at most $(\frac{N}{r})^2\cdot 2r = \frac{2N^2}{r}$. Using the same argument, the the total space in $M$ for the rows and columns representing sets that their size is within the range $[2^ir,2^{i+1}r]$, for $0 \leq i \leq \log{\frac{N}{r}}-1$ is at most $(\frac{N}{2^ir})^2 \cdot 2^{i+1}r = \frac{2N^2}{2^ir}$. Consequently, the total space used by the algorithm is $\sum_{i=0}^{\log{\frac{N}{r}}-1}{\frac{2N^2}{2^ir}}= \frac{2N^2}{r}\sum_{i=0}^{\log{\frac{N}{r}}-1}{\frac{1}{2^i}} = O(\frac{N^2}{r})$.

\subsection{Algorithm 2}
The idea is similar to the solution of Cohen and Porat~\cite{CP10} with some modifications.

We create a binary tree. The tree has $\log{r}+1$ levels for some $r \geq 1$.

Let $C=\{S_{k_1},S_{k_2},...,S_{k_p}\}$ be the collection of all the sets in the input instance that their size is larger than $r$. In the root we create a $p \times p$ boolean matrix such that the $i$th row and column represent $S_{k_i}$. For every $s,t \in [p]$, we set $M[s,t]$ to 1, if $S_{k_s} \cap S_{k_t} = \emptyset$, and to 0, otherwise. For the next level downward, we ignore all sets with less than $r$ elements. That is, we continue just with the sets in $C$. Let $e_1,e_2,..,e_q$ be all the elements in the sets of $C$. For every $1 \leq i \leq q$, let $f_i$ be the number of sets in $C$ that contain $e_i$. Let $z$ be the largest integer such that $\sum_{i=1}^{z}{e_i} \leq \frac{N}{2}$. The left child of the root handles all the sets in $C$ ignoring all their elements in $\{e_{z+1},...,e_q\}$, while the right child of the root handles all the sets in $C$ ignoring all their elements in $\{e_{1},...,e_{z+1}\}$. The element $e_{z+1}$ is kept in the root. It is clear that the number of elements that each node handles is no more than $\frac{N}{2}$.

We continue the construction recursively downward the tree. That is, a node $v$ in the $i$th level (the root level is 0) represents a collection $C'$ of sets $S'_1,S'_2,...,S'_{m'}$. For all the sets that their size is at least $\frac{r}{2^i}$, a matrix is created that contains the answers to disjointness queries of two sets with at least $\frac{r}{2^i}$ elements. Then, we create two child nodes. Within these child nodes, we continue with a collection $C''$ that contains all sets in $C'$ with at least $\frac{r}{2^i}$ elements. Let $e'_1,e'_2,..,e'_{q'}$ be all the elements in the sets of $C''$. For every $1 \leq i \leq q'$, let $f'_i$ be the number of sets in $C''$ that contain $e'_i$. Let $z'$ be the largest integer such that $\sum_{i=1}^{z'}{e'_i} \leq \frac{N}{2^i}$. The left child of $v$ handles all the sets in $C''$ ignoring all their elements in $\{e'_{z'+1},...,e'_{q'}\}$, while the right child of $v$ handles all the sets in $C''$ ignoring all their elements in $\{e'_{1},...,e'_{z'+1}\}$. The element $e_{z'+1}$ is kept in $v$. It is clear that the number of elements that each child node handles is no more than $\frac{N}{2^i}$.

In any node $u$ of the $(\log{r})$-level of the binary tree, the matrix we create does not contain just answers to disjointness queries, instead, it contains the complete answers to the intersection queries for each pair of the sets that $u$ represents. The nodes in the $(\log{r})$-level have no child nodes.

Finally, besides the binary tree, for every set $S_i$, for $1 \leq i \leq m$, we save a hash table $T_i$ that contains all the elements in $S_i$.

The query is handled as follows. Given a query pair $(i,j)$, we start handling the query from the root of the tree we have constructed. If either $S_i$ or $S_j$ has less than $r$ elements, we can answer the query immediately using the hash table $T_i$ or the hash table $T_j$. If both sets have at least $r$ elements we look at the proper location in the matrix in the root node in order to know if $S_i \cap S_j =\emptyset$ or not. If the intersection is empty we are done. Otherwise, we first check if the element $e_{z+1}$ is contained in both sets, if so it is reported. Then, we continue the search recursively in the two child nodes. Within the search in the left child node we ignore all the elements of the sets $S_i$ and $S_j$ that are in $\{e_{z+1},...,e_q\}$, while in the search in the right child we ignore all their elements in $\{e_{1},...,e_{z+1}\}$. Finally, if we reach a node in the $(\log{r})$-level then we report the elements of the intersection of the relevant subsets of $S_i$ and $S_j$ that remain when reaching that node. This is done by using the matrix that is saved in that node.

\textbf{Analysis}. In the $i$th level of the binary tree the total number of elements that are propagated to each node in this level is no more than $\frac{N}{2^i}$. Therefore, the number of sets in a node in the $i$th level that their size is at least $\frac{r}{2^i}$ is at most $\frac{\frac{N}{2^i}}{\frac{r}{2^i}}=\frac{N}{r}$. Consequently, the matrix size in each node of the binary tree is bounded by $(\frac{N}{r})^2$. In the $(\log{r})$-level of the binary tree we keep the complete intersection between each of the sets in a node. The total number of elements in a node in the $(\log{r})$-level is at most $\frac{N}{2^{\log{r}}}=\frac{N}{r}$. The number of sets in each node in the $(\log{r})$-level is at most $\frac{N}{r}$. Therefore, the size of the matrix in each node in the $(\log{r})$-level is also $(\frac{N}{r})^2$, as the worst case scenario is when the same element occurs in all sets. In this case, it appears in the intersection of all pairs of sets, that is, it appears at most $(\frac{N}{r})^2$ times. All in all, the size of the matrix in all nodes is no more than $(\frac{N}{r})^2$. In the $i$th level there are $2^i$ nodes. Thus, the total size of the binary tree is $\sum_{i=1}^{\log{r}}{2^i(\frac{N}{r})^2}=\tilde{O}(\frac{N^2}{r})$.

The query time in each node during the traversal of the binary tree is $O(1)$ if both sets in question have at least $\frac{r}{2^i}$ elements. Otherwise, it is $O(\frac{r}{2^i})$ as for each element of the small set we check if it is in the other set using the hash table. If one of the sets has less than $\frac{r}{2^i}$ in a node in the $i$th level the search in that path is stopped. This node is called a stopper node. Let us denote by $x$ the number of stopper nodes. Let $v_1,v_2,...,v_x$ be the stopper nodes and let us denote by $\ell_i$ the level of the stopper node $v_i$. The total query time is dominated by the query time in the stopper nodes. Searching until the stopper node is done in $O(\log{r})$ time and it is guaranteed that at least one output element is found in the stopper node. Therefore, the query time is $\tilde{O}(\sum_{i=1}^{x}{\frac{r}{2^{\ell_i}}})=\tilde{O}(r\sum_{i=1}^{x}{\frac{1}{2^{\ell_i}}}=\tilde{O}(r)$. The last equality follows from Kraft inequality that states that for any binary tree $T$ $\displaystyle\sum_{\ell \in leaves(T)}{}{2^{depth(\ell)}} \leq 1$. If we get to a node $v$ in the $(\log{r})$-level then we report all the elements of the intersection that are saved at the proper location in the matrix of $v$. Therefore, the time we spend in such node is proportional to the size of the output which is optimal. All in all, the total query time is $\tilde{O}(r+out)$.

\subsection{Algorithm 3}
In the preprocessing phase, we first calculate the frequency of each element that occurs in at least one set of the input collection. We save a list $L$ of the $r$ most frequent elements in the input sets. Moreover, we create a dictionary $D$. We create an entry $(i,j)$ in the dictionary for each pair of sets $S_i$ and $S_j$ that their intersection is not empty. This entry contains all the elements in $S_i \cap S_j$ excluding those that are in $L$. Additionally, for every set $S_i$ we save a hash table to quickly verify the existence of some element in the set. Given a query pair of sets $S_i$ and $S_j$, we first go over all the elements in $L$ and check for each one of them if it is in both $S_i$ and $S_j$ using their hash tables. Then, we output all the element in $D$ in the entry that corresponds to $S_i$ and $S_j$ if there is such an entry.

\textbf{Analysis}. The query time is obviously $O(r+out)$, as the number of elements in $L$ is $r$ and all the elements that are found in the dictionary are part of the output. Let $e_1, e_2, ..., e_k$ be all the elements in the input sets sorted in non-increasing order of frequency. Additionally, let us denote by $f_i$ the number of sets in which an element $e_i$ occurs. If an element $e_i$ does not appear in the list $L$ then its frequency $f_i$ is at most $N/r$. The number of intersections of two sets in which element $e_i$ occurs is no more than $f_i^2$. Therefore, the total size of the dictionary $D$ is $O(f_{r+1}^2+f_{r+2}^2+...+f_k^2)$. We know that $f_{r+1}+f_{r+2}+...+f_k \leq N$, and that each element is the sum is bounded by $N/r$. The sum $f_{r+1}^2+f_{r+2}^2+...+f_k^2$ is maximized when all elements in this sum are as larger as possible. Consequently, $f_{r+1}^2+f_{r+2}^2+...+f_k^2 \leq r(N/r)^2 = N^2/r$. Therefore, the total space complexity of the algorithm is $O(N^2/r)$.

\subsection{Hybrid Solution}
In the previous subsections we presented several algorithms that solves SetIntersection with $O(T+out)$ query time (where $out$ is the output size) and $S=O(\frac{N^2}{T})$ space. Cohen~\cite{Cohen10} demonstrated a solution that uses $O(N^{2-2t})$ space and answer queries in $O(N^tout^{1-t}+out)$ time for $0 \leq t \leq 1/2$. As mentioned in Section~\ref{sec:3sum_clb_sd_si}, the last solution is better than the first one whenever $out<N^{\frac{t}{1-t}}$. Otherwise, the first solution is better. It is desired to obtain a solution that combine these two tradeoffs. The simple idea to do so is as follows. We fix a specific amount of space $S$ that we are allowed to use. Then, we maintain 3 data structures that each one of them uses $S$ space: (1) Data structure for SetDisjointness that for every pair of set $S_i$ and $S_j$ from the input collection does not just answer whether their intersection is empty or not, but rather returns the size of their intersection. This can be easily done using the same space-time tradeoff that is known for SetDisjointness. That is, $S \times T = O(N^2)$, where $S$ is the space complexity and $T$ is the query time. The idea is to maintain a matrix for all sets that their size is larger than $T$ that contains an explicit answer for the size of the intersection of two sets that are both larger than $T$. For the intersection of sets that at least one of them is smaller than $T$, we can just go over all the elements of the small set and check how many of them occurs in the larger set using a hash table for the large set. (2) A data structure for solving SetIntersection using $S$ space with $O(\frac{N^2}{S}+out)$ query time. This data structure can be constructed using any of the 3 algorithms that are described in the previous subsections. (3) A data structure for solving SetIntersection using $S$ space with $O(Nout^{1-t}/\sqrt{S}+out)$ query time. This data structure can be obtained using the solution by Cohen~\cite{Cohen10} (see also the discussion in Section~\ref{sec:3sum_clb_sd_si}).

Using these 3 data structures, when we are given as a query two sets $S_i$ and $S_j$ we can find the size of their intersection by the first data structure. Then, if $out<N^{\frac{t}{1-t}}$ (the value of $t$ is fully determined by the fact that $S=O(N^{2-2t})$) we use the data structure from (3). Otherwise, we use the data structure from (2).

\end{document}